\newcommand{\rev}[1]{\textcolor{black}{#1}}
\renewcommand{\P}{\mathbb{P}}
\newcommand{\Q}{\mathbb{Q}}
\newcommand{\E}{\mathbb{E}}
\newcommand{\F}{\mathcal{F}}
\newcommand{\HH}{\mathcal{H}}
\newcommand{\OO}{\mathcal{O}}
\newcommand{\s}{\sigma}
\newcommand{\1}{\mathbbm{1}}
\newcommand{\eps}{\epsilon}
\newcommand{\R}{\mathbb{R}}
\renewcommand{\a}{\alpha}
\newcommand{\cQ}{\mathfrak{q}_\alpha}
\newcommand{\cD}{\mathcal{D}}
\newcommand{\bB}{\mathbf{B}}
\newcommand{\bC}{\mathbf{C}}
\newcommand{\cZ}{\mathcal{Z}}
\newcommand{\cH}{\mathcal{H}}
\newcommand{\cR}{\mathfrak{R}}
\newcommand{\blue}{} 
\newcommand{\cN}{\mathcal{N}}
\DeclareMathOperator\logit{logit}
\DeclareMathOperator\VaR{VaR}
\DeclareMathOperator\TVaR{TVaR}
\DeclareMathOperator\var{\mathbb{V}ar}
\newtheorem{lemma}{Lemma}
\theoremstyle{remark}
\newtheorem*{remark}{Remark}
\title{Sequential Design and Spatial Modeling for Portfolio Tail Risk Measurement}
\author{Jimmy Risk\footnotemark[2]
\and Michael Ludkovski\footnotemark[1]}
\begin{document}
\maketitle

\footnotetext[1]{Department of Statistics and Applied Probability, University of California, Santa Barbara, 93106-3110 USA. Partial support by NSF DMS-1521743 is gratefully acknowledged. {ludkovski@pstat.ucsb.edu}}

\footnotetext[2]{Department of Mathematics and Statistics, Cal Poly Pomona {jrisk@cpp.edu} }

\renewcommand{\thefootnote}{\arabic{footnote}}
\begin{abstract}
We consider calculation of capital requirements when the underlying economic scenarios are determined by simulatable risk factors. In the respective nested simulation framework, the goal is to estimate portfolio tail risk, quantified via $\VaR$ or $\TVaR$ of a given collection of future economic scenarios representing factor levels at the risk horizon. Traditionally, evaluating portfolio losses of an outer scenario is done by computing a conditional expectation via inner-level Monte Carlo and is  computationally expensive. We introduce several inter-related machine learning techniques to speed up this computation, in particular by properly accounting for the simulation noise. Our main workhorse is an advanced Gaussian Process (GP) regression approach which uses nonparametric spatial modeling to efficiently learn the relationship between the stochastic factors defining scenarios and corresponding portfolio value. Leveraging this emulator, we develop sequential algorithms that adaptively allocate inner simulation budgets to target the quantile region. The GP framework also yields better uncertainty quantification for the resulting $\VaR/\TVaR$ estimators that reduces bias and variance compared to existing methods. We illustrate the proposed strategies with two case-studies in two and six dimensions.
\end{abstract}

Keywords: Value-at-Risk estimation, Gaussian process regression, sequential design, nested simulation, portfolio tail risk



%
%
%
%
%
%

\section{Introduction}

The latest insurance and financial regulations mandate estimation of quantile-based portfolio risk measures. The
Solvency II \cite{christiansen2014fundamental} framework calls for computing the $99.5\%$-level \emph{value-at-risk} ($\VaR$) for a 1-year horizon, while the Basel III regulations in banking \cite{basel2013fundamental} require to report the related \emph{tail value-at-risk} ($\TVaR$). In practice, these quantities are calculated by first generating a representative set $\cZ$ of future economic scenarios and then evaluating the empirical loss quantile based on $\cZ$. However, due to the underlying cashflow and valuation complexity, directly computing future portfolio value is usually not feasible and instead approximations are used. This is done for each scenario by a Monte Carlo evaluation of the conditional expectation that probabilistically defines portfolio value, leading to a \emph{nested} simulation problem.

In traditional nested simulation, each outer scenario is treated independently, so the respective portfolio loss is approximated through the sample average across the inner simulations that define the resulting cashflows.
The scenarios $\cZ$ are typically realizations of underlying stochastic factors  or risk drivers $(Z_t)$, such as interest rate factors, equity price factors, mortality factors, macroeconomic factors, et cetera. Thus, we can identify each $z^n \in \cZ$ as the value of $Z_T$ in scenario $\omega^n$ at the risk horizon $T$; the portfolio value $f(z)$ is identified with the expected cashflow $Y(\cdot)$ (which depends on the future path $(Z_t)_{t \geq T}$ beyond $T$) conditional on $Z_T = z$:
\begin{equation}\label{eq:conditionalexp}
f(z) \doteq \mathbb{E}\left[Y\!\left( (Z_t)_{t \geq T}\right) \mid Z_T=z\right].
\end{equation}
Note that we assume that $Z$ is Markovian which is essentially always the case in the practical context; if necessary $Z$ is augmented to make it Markov. The more important assumption is that $f(z)$ is not available in closed form, so for a given $z$ it must be approximated via the \emph{inner} step of the nested procedure.

Our goal is to efficiently estimate $\VaR_\a$ and/or $\TVaR_\a$ of $f(Z_T)$ when the risk drivers follow a simulatable Markov process. Towards this end we build upon two fundamental strategies that have been developed for nested simulation. The first idea is to
improve  the {inner} simulations by adaptively allocating the \emph{simulation budget} to scenarios with larger portfolio losses. Indeed with the mentioned quantile levels, $\approx 99\%$ of the outer scenarios are irrelevant from the point of view of $\VaR/\TVaR$ computation: only the tail scenarios matter. The second idea is to exploit the \emph{spatial} structure of $\cZ$ coming from $(Z_t)$: nearby inputs $z, z'$ should produce similar values $f(z), f(z')$. Therefore, we can improve estimation of $f(z)$ by borrowing information from inner simulations at nearby outer scenarios. This brings a regression perspective that converts the local strategy of computing a sample average into a global goal of approximating the function $z \mapsto f(z)$.

In this article we combine and greatly extend these approaches by embedding them in the framework of \emph{statistical emulation}. Emulation treats $f$ as an unknown function and seeks to produce a functional estimate $\hat{f}$ that optimizes a given objective criterion.  For our emulation paradigm we propose
%
\emph{Gaussian process (GP) regression} \cite{rasmussen2006gaussian,santner2013design}, or \emph{kriging}, which has become a leading choice in the machine learning community. GP's bring a probabilistic (i.e.~Bayesian for a statistician) perspective, so that inference of $f$ is viewed as conditioning on the observed simulation output, and estimating a risk measure as computing a posterior expectation of the corresponding random variable. This is arguably the natural framework in the Monte Carlo context where all outputs are intrinsically stochastic. GP's also bring a non-parametric regression approach, alleviating the question of finding good functional approximation spaces for the spatial regression component. \rev{Below we show that the (GP) emulation approach, in combination with tailored active learning techniques, can provide accurate estimates even with a small simulation budget, while maintaining a relatively low numerical overhead  (fitting, prediction, etc.). Thus, our main message is that significant efficiency can be squeezed from the proposed {statistical tools}, dramatically improving upon nested simulation.}

In the language of emulation, Equation \eqref{eq:conditionalexp} defines a black-box function which we wish to learn. The origins of emulations/GP regression are in analysis of \emph{deterministic} computer experiments where $f(z)$ can be (expensively) evaluated without any noise, see e.g.~\cite{forrester2008engineering,gramacy2008gaussian,marrel2008efficient} and the monographs by Santner et al.~\cite{santner2013design} and Rasmussen and Williams~\cite{rasmussen2006gaussian}. More recently, emulation was adapted to the Monte Carlo setting where $f(z)$ is only noisily observed: see e.g.~\cite{ankenman2010stochastic,binois2016practical,chen2012stochastic}. Also related is our earlier work~\cite{risk2016statistical} that studied approximate pricing of deferred longevity-linked contracts, requiring an average of Equation \eqref{eq:conditionalexp} over the whole distribution of $Z_T$.

For tail risk estimation, the learning objective is inherently different from the standard $L^2$ criterion, since the fitted $\hat{f}$ only needs to be accurate in the tail of $f(Z_T)$. Indeed, as mentioned 99\%+ of outer scenarios will turn out to be irrelevant. To our knowledge, there is little literature that directly addresses quantile/tail estimation in the Monte Carlo setting of~\eqref{eq:conditionalexp}. We mention the work of
Oakley \cite{oakley2004estimating} who introduced the use of GP's for quantile estimation of deterministic computer experiments and Liu and Staum \cite{liu2010stochastic} who pioneered stochastic kriging for estimating $\TVaR_\a$ in nested Monte Carlo of financial portfolios. Two further important papers are by Bauer et al.~\cite{bauer2012calculation} who elucidated the use of (parametric linear) regression for tail risk estimation, and Broadie et al.~\cite{broadie2011efficient} who proved some properties of such Least Squares Monte Carlo methods for learning $\VaR_\a$. More broadly, our localized objective resembles three inter-related formulations in the engineering reliability literature that target $f(z)$ in relation to a given level or threshold $L$:
\begin{itemize}
  \item Contour-finding: determining the set $\{z : f(z) \in (L-\varepsilon, L+\varepsilon)\}$, $\varepsilon$ small, cf.~\cite{labopin2016sequential,picheny2010adaptive};

  \item Inference of the excursion set $\{z: f(z) \leq L\}$ \cite{chevalier2014fast};

   \item Estimating the probability of failure, aka ``excursion volume'', $\P(\{z: f(z) \leq L\})$ \cite{bect2012sequential,chevalier2014kriginv}.
\end{itemize}
Note that the contour $\{ z: f(z) = L\}$ is the boundary of the level set $\{z : f(z) \leq L\}$, and the excursion volume is a weighted integral over the level set.
However, in all of the above (except \cite{labopin2016sequential}), the threshold $L$ is exogenously given, while in the Value-at-Risk context, the desired quantile $\VaR_\a$ is implicitly defined via $f$ itself, see \eqref{eq:VaRf} below. Moreover, all of \cite{bect2012sequential,chevalier2014fast,labopin2016sequential,picheny2010adaptive} dealt with deterministic noise-free experiments.

At the heart of emulation is design of experiments, i.e.~determining which simulations to run so as to \emph{learn} the input-output pairing $z \mapsto f(z)$ as quickly as possible. This naturally connects to adaptive allocation of inner simulations. Adaptivity is achieved by dividing the overall simulation into several \emph{stages} that gradually learn the shape of $f$. The initial stage uses a fraction of the simulation budget  to learn about $f(z)$ on the whole domain. The resulting search region targeting the objective is then refined through further stages. For example, the approach in \cite{oakley2004estimating} has two stages: in the second stage inner simulations are non-uniformly allocated to scenarios in the quantile region based on the initial inference about $f$.  Liu and Staum~\cite{liu2010stochastic} proposed a total of three stages;  during the second stage another fraction of the budget is allocated uniformly to scenarios in the estimated tail, and finally the remaining budget is dispersed (non-uniformly) among the tail scenarios to minimize the posterior variance of the $\TVaR_\a$ estimator.

Further, inspired by optimization objectives (finding the global maximum of $f$), there are \emph{sequential design} approaches which use a large number of stages (in principle adding just 1 simulation per stage). In particular,  \emph{stepwise uncertainty reduction} (SUR) methods define an acquisition function $\cH(\cdot)$ and greedily pick scenarios $z$ that maximize $\cH(z)$ conditional on current data $\mathcal{D}$. For example,  Picheny et al.~\cite{picheny2010adaptive} sequentially sample $f$ at scenarios according to a {targeted expected improvement} criterion. Another type of adaptive strategies does not rely on emulation/spatial modeling, rather utilizing the tools of what is known as ranking-and-selection (R\&S) \cite{ChenFu08,Frazier14}. However, R\&S procedures are generally based on doing a hypothesis test for $f(z^n) > L$  in parallel for each scenario $z^n$. Theoretically this allows to decouple the estimation problems, but such schemes are impractical if $L$ is itself unknown. 

In the $\VaR$ context, several aspects of the setting make the emulation problem statistically challenging. First, the typical number of outer scenarios $N$ is quite large, on the order of $N \in [10^4, 10^5]$ for practitioners. This renders standard emulation and R\&S strategies computationally heavy. Furthermore, the typical simulation budget $\cN$ is often quite small, $\cN \sim N$, which means that a brute-force uniform allocation approach can only sample each scenario a handful of times, leading to disastrous estimates. It further implies that practical allocation schemes must be quite \emph{aggressive} and the asymptotic guarantees available in the R\&S literature are not applicable. At the same time $\cN$ is large by emulation standards which treat evaluations of $f$ as extremely expensive (typical budget is $\ll 500$ simulations) and hence the associated algorithms often carry unacceptable computational overhead.  Second, the outer scenarios are usually treated by practitioners as a fixed object (namely coming from an exogenous Economic Scenario Generator aka ESG). In other words, the scenario space  $\cZ$ is taken to be discrete; one cannot add (or subtract) further scenarios like in the strategy of Broadie et al.~\cite{broadie2011efficient}. Neither can one employ continuous search methods that are popular in the simulation optimization literature. Third, the nature of the underlying cashflows makes the simulation noise in the cashflows $Y$ highly non-Gaussian. It is typically skewed (because many cashflows have embedded optionality and hence the corresponding distribution has point masses) and with low signal-to-noise ratio. The latter implies that cross-scenario information borrowing is crucial to maximize accuracy. Fourth, the portfolio losses are highly inhomogeneous, i.e.~$\eps$ is \emph{heteroskedastic} which strongly affects the inner simulation allocations. Fifth, because the objective function is implicitly defined in terms of $f$, constructing an appropriate estimator, and especially quantifying its accuracy (aka standard error) is nontrivial in its own right.

The framework that we propose overcomes all of the above challenges and consists of three key pieces. First, we connect to the burgeoning literature on level-set estimation for deterministic computer experiments, tailoring the recent successes of the SUR techniques  \cite{chevalier2014kriginv} to the $\VaR/\TVaR$ problem. \rev{As far as we know this is the first link between these disparate literatures}, which we see as an invitation towards closer marriage of machine learning and simulation-based risk measurement. Second, we work with a state-of-the-art GP emulator. By their nature, GPs offer rich uncertainty quantification properties, in particular many analytic formulas for active learning criteria that are used in guiding the simulation allocation.  We take this a step further, employing an advanced GP methodology \cite{hetGP,binois2016practical} that is customized for the Monte Carlo simulation context. Third, we {take advantage} of the \emph{discrete scenario set} which intrinsically calls for a replicated design. In turn, replication yields (i) improved noise properties that minimize non-Gaussianity; (ii) ability to simultaneously learn the mean response $f(\cdot)$ and the conditional simulation variance $\tau^2(\cdot)$ to handle heteroskedasticity; (iii) reduced model overhead; (iv) convenient GP implementation. The symbiotic relationship of replication and kriging was already observed in \cite{{ankenman2010stochastic}}; the \texttt{hetGP} \cite{hetGP} library that we use gracefully handles {all} aspects of (i)-(iv).

Rather than proposing a single specialized algorithm, we present a  general framework comprising several modules. By adjusting these ``moving parts'', the algorithm can (a) evaluate different risk-measures (we illustrate with both VaR and TVaR); (b) use different emulator codes; (c) switch between different ways of running the sequential budget allocation such as the initialization phase; number of sequential stages; termination criterion; batch-size and so on; (d) rely on different VaR estimators. To illustrate the above choices we present extensive numerical illustrations that compare the impact of different modules. In two case studies, we compare it with benchmarks, such as the algorithm in Liu and Staum~\cite{liu2010stochastic}.

Our statistical learning perspective implies a preference for \emph{sequential} algorithms that employ the feedback loop of simulate--estimate--assess--simulate--... Compared to classical one-/two-stage designs, the fully sequential strategies offer several attractive features. First, they internalize the uncertainty quantification that is offered by the emulator and therefore lead to a more automated ``artificial intelligence'' implementation. Second, they can be run online, i.e.~on a server or in the cloud with interim results always available for inspection, while the algorithm proceeds to refine its accuracy. Thus, the user no longer has to specify the simulation budget (i.e.~number of inner simulations one is willing to run) a priori, but instead interacts with the Monte Carlo on-the-fly. Third, sequential methods  are a starting point for more general re-use of simulations, for example for periodic re-computation of the whole problem as business time progresses (usually VaR calculations are done on a weekly or monthly cycle).

\rev{Our overall contribution is to remedy limitations of current $\VaR$/$\TVaR$ estimation algorithms through ideas from GP emulators and related sequential design problems.  In the context of $\TVaR$, arguably the closest approach, and the one that we are to a large extent indebted to, is the algorithm in Liu and Staum~\cite{liu2010stochastic}. Relative to~\cite{liu2010stochastic} we may list the following 4 important improvements. First, we generalize their 3-stage strategy to a fully sequential $k$-stage procedure, dubbed SV-GP below. In particular, we eliminate the need for their intermediate second stage which was performing a conservative ``exploration'' (and required finetuning), and also extend their algorithm to handle non-constant replication amounts. Second, we employ the latest \texttt{hetGP} emulators which significantly improve the emulation in low-budget environments, in particular by lowering the bias in learning $f(x)$. Third, we provide an alternative class of allocation strategies relying on Active Learning heuristics. Finally, while ~\cite{liu2010stochastic} only treated $\TVaR$, we work (and compare the implementation nuances) with both $\VaR$ and $\TVaR$.}

The paper is organized as follows: in Section \ref{sec:objective} we discuss the emulation objective and portfolio tail risk, as well as $\VaR_\a$ estimators.  Section \ref{sec:StochasticKriging} formally introduces the GP model and its mathematical details.  Next, Section \ref{sec:sequential} develops the sequential design criteria for our problem.  Section \ref{sec:algorithm} provides the full implementation.  Finally, we apply the algorithm to two case studies:  Section \ref{sec:casestudy-BS} involves a two-factor model, where we compare sequential $\VaR/\TVaR$ strategies to several benchmarks, along with a look at various GP versions; Section \ref{sec:casestudy2} investigates performance in higher dimensions where the state process is six-dimensional and the simulator is much more expensive.

\section{Objective}\label{sec:objective}
Given a probability space $(\Omega, \F, \P)$, we consider a stochastic system with Markov state process $Z=(Z_t)$. Both continuous- and discrete-time models can be treated. Typically, $Z$ is a multivariate stochastic process based on either a stochastic differential equation ($t \in \mathbb{R}_+$) or time-series ARIMA framework ($t \in \mathbb{N}$). Based on the realizations of $Z,$ the modeler needs to assess the resulting capital $f(Z_T)$ at an intermediate horizon $T$.  The portfolio value $f(Z_T)$ is computed as the conditional expected value of discounted cashflows  $Y_t$ given $\mathcal{F}_T,$ the information at time $T$. The cashflow $Y_t(z_\cdot)$ at date $t$ could depend on the whole path $Z_{[T,t]}$ of the stochastic factor from $T$ to $t$ (but not on $z_{[0,T]}$). To compute their net present value $Y$, we discount at a constant risk-free rate $\beta$
\begin{equation}\label{eq:Y-trajectories}
Y \doteq \sum_{t=T}^\infty e^{-\beta(t-T)}Y_t(Z_{[T,t]}).
\end{equation}

Since $Z$ is Markov, we can write the conditional expectation of $Y$ given $\mathcal{F}_T$ as a function $f(z)$ defined in \eqref{eq:conditionalexp}, and similarly will henceforth use the notation $Y(z)$. We assume that $f(\cdot)$ does not possess any simple/closed-form functional form. However, since $f(z)$ is a conditional expectation, it can be sampled using a simulator, i.e.~the modeler has access to an engine that can generate independent, identically distributed trajectories of $(Z_t)$, $t \ge  T$.

Fixing a quantile $\alpha$ (most commonly $\a=0.005$ corresponding to 99.5\%-$\VaR$ level mentioned in the Introduction),
the \emph{capital requirements} based on $\VaR_\a$ and $\TVaR_\a$ are respectively the threshold such that the probability of loss exceeding $\VaR_\a$ is\footnote{Throughout we view $f$ as portfolio value, i.e.~more is better and the tail risk concerns the left tail where $f$ is most negative.} $\a$, i.e.
\begin{equation}\label{eq:VaRf}
\VaR_{\a}(f(Z_T)) \doteq \sup\{x : \P(f(Z_T) \leq x) \leq \a\},
\end{equation}
and the $\a-$tail average,
\begin{equation}\label{eq:TVaRf}
\TVaR_{\a}(f(Z_T)) \doteq \E\bigl[f(Z_T) \mid f(Z_T) \leq \VaR_{\a}(f(Z_T))\bigr].
\end{equation}

As mentioned, rather than working with the theoretical distribution of $Z_T$, we consider a given scenario set $\cZ$ of size $N$ which is assumed to be fixed for the remainder of the presentation. Typically the outer scenarios are provided by an Economic Scenario Generator (ESG) that calibrates the dynamics of $(Z_t)$ to the historical data, i.e.~it is under the physical measure $\mathbb{P}$. The inner simulations are based on a mark-to-market law, i.e.~it is under the risk-neutral measure $\mathbb{Q}$. Therefore, the evolution of $(Z_t)$ is different on $[0,T]$ and on $[T,\infty)$. This is one reason for treating $\cZ$ as fixed, allowing us to eschew discussion of the precise mechanism of the ESG/outer simulator.

Given the loss scenarios $\{ z^1, z^2, \ldots, z^N\} \doteq \cZ$, computing $\VaR_a$ now translates into finding the $\a N$ (assumed to be integer) order statistic  $f^{(\a N)}$ of $f^{1:N}$, where we use the shorthand $f^n \doteq f(z^n)$, and the superscripts $(n)$ refer to the $n$th ordered value (in increasing order) of $f^{1:N}$.  More generally, we consider risk measures $R$ for $\cZ$ defined through weights: given a collection of losses $f^{1:N}$ let
\begin{equation}
R \doteq \sum_{n=1}^N w^{n} f^n, \label{eq:riskmeasure}
\end{equation}
where $\sum_{n=1}^N w^{n} =1.$  For $\VaR_{\a}$ and $\TVaR_{\a}$  the weights are respectively
\begin{align}
 w^{n, \VaR} &= \1_{\{n : f^{n} = f^{(\a N)}\} };\label{eq:VaRweight}\\
w^{n, \TVaR} &= \rev{\frac{1}{\a N} \cdot} \1_{ \{n : f^{n} \leq f^{(\a N)}\} }. \label{eq:TVaRweight}
\end{align}

Another way to think about the tail risk is through the corresponding \emph{tail regions}
\begin{align}
\cR^{\VaR} \doteq \{z^n : f^n = f^{(\alpha N)}\},\qquad \text{and}\quad
\cR^{\TVaR} \doteq \{z^n : f^n \leq f^{(\alpha N)}\}.
\end{align}
Finally, we also introduce the notation $\cQ$ to denote the exact $\a$-quantile scenario (i.e.~the one in $\cR^{\VaR}$): $f(\cQ) = f^{(\a N)}$.

\subsection{Tail Risk Estimation with Nested Simulation}

The standard Monte Carlo approach is based on nested simulation. Starting from $\cZ,$ for each $n$, $f(z^{n})$ is approximated by an inner empirical average
\begin{equation}\label{eq:N-in2}
f(z^{n}) \simeq \bar{y}^{n} \doteq \frac{1}{r^n} \sum_{i=1}^{r^n} y^{n,i}, \qquad n = 1, \ldots, N,
\end{equation}
where $y^{n,i}$ is the present value of loss from \eqref{eq:Y-trajectories} based on independent replications $Y_\cdot^{n,i}( z^{n}_\cdot),$ \rev{$i \in \{1,\ldots, r^n\}$}  evaluated through the trajectories $z^{n,i}_\cdot$ of $(Z_t)_{t \geq T} | Z_T = z^n$.
Equations \eqref{eq:VaRf} and \eqref{eq:TVaRf} are  then estimated by a sample quantile and tail average respectively. Specifically, we sort the estimated scenario losses $(\bar{y}^n)_{n=1}^N$ in increasing order $\bar{y}^{(1)} \le \bar{y}^{(2)} \le \ldots \le \bar{y}^{(N)}$ and
 take $\hat{R}^{SA,\VaR} \doteq \bar{y}^{(\a N)}$ and $\hat{R}^{SA,\TVaR} \doteq \frac{1}{\a N} \sum_{i=1}^{\a N} \bar{y}^{(i)}$.

 In the simplest setting, the inner budget is constant across scenarios, $r^n = \bar{r}$ leading to total simulation budget of $\OO(N\cdot \bar{r})$. This quickly becomes expensive: for example, a budget of $\bar{r}=10^3$ and $N=10^3$ scenarios requires $10^6$ total simulations. Moreover, the empirical estimator $\hat{R}^{SA, \VaR}$ is biased, see  \rev{Kim and Hardy}~\cite{kim2007quantifying} (who point out that one cannot quantify which direction the bias lies without knowing some details about $f$ and the distribution of $Z_T$) and Gordy and Juneja~\cite{gordy2010nested}. These papers discuss ways of reducing bias, through bootstrap and jackknife, respectively. Gordy and Juneja~\cite{gordy2010nested} also provide consequent optimal strategies for choosing $N$ and $\bar{r}$ for a fixed budget $N \bar{r} = \cN$.

As another way to  alleviate the bias in $\hat{R}^{SA,\VaR}$, one may construct
several modified versions of $\VaR_{\a}$ estimators that stem from weighted averages of nearby order statistics.  Called $L-$estimators \cite{sheather1990kernel}, they offer robustness compared to a single sample order statistic. Effectively, such estimators modify the weights $w^n$ defining \eqref{eq:riskmeasure} to account for the uncertainty in the marginal estimators $\hat{f}(z^n)$. A well known construction is by Harrell and Davis~\cite{harrell1982new}, where the weights for a $\VaR_\a$ estimator are chosen as $R^{HD,\VaR} = \sum_n \tilde{w}^{(n)} \hat{f}^{(n)}$ with
\begin{equation}\label{eq:HDweights}
\tilde{w}^{(n)} = \int_{(n-1)/N}^{n/N} \mathrm{Beta}(t; (N+1)\a N, (N+1)(1-\a N) ) dt,
\end{equation}
where $\mathrm{Beta}(x; a, b)$ is the Beta-distribution density function with shape parameters $(a,b)$ and
%
the index $(n)$ refers to the order statistics of $\hat{f}$ (e.g. the $\bar{y}^n, n=1, \ldots, N$
if one uses the empirical plug-in approach). The weights $\tilde{w}^{(n)}$ are largest near $n=\alpha N$ and taper quickly on either side, smoothing out the binary 0/1 nature of \eqref{eq:VaRweight}. 
We refer to \cite{sfakianakis2008new,sheather1990kernel} for further details about the Harrell-Davis weights and their performance relative to other $L-$estimators. While the general conclusion is that no one estimator performs best, in our experiments we find a clear preference for \eqref{eq:HDweights} over the original \eqref{eq:VaRweight}  both in terms of smaller bias and lower standard error. These concerns do not extend to the $\TVaR$ case, since in our experience the empirical estimator $\hat{R}^{\TVaR}_\a = \frac{1}{\a N} \sum_{n=1}^{\a N} \hat{f}^{(n)}$ is already robust enough against mis-specification of a few borderline scenarios. Theoretically, additional robustness could be achieved via smoothing the cut-off in $\hat{w}^{n,\TVaR}$ near $\hat{f}^{(\a N)}$ to create a structure similar to \eqref{eq:HDweights}.

\section{Gaussian Process Emulation}\label{sec:StochasticKriging}
To construct thrifty schemes for approximating \eqref{eq:VaRf} and \eqref{eq:TVaRf}, we replace the inner step of repeatedly evaluating $f(z)$ by a \emph{surrogate model} $\hat{f}$ for $f$. This emulation framework generates a fitted $\hat{f}$ by solving regression equations over a training dataset. In contrast to classical regression, the modeler is in charge of the experimental design (building the training samples) which are viewed as arriving sequentially.  Particularly for tail risk, we seek a procedure that identifies the tail scenarios that matter for \eqref{eq:VaRf}-\eqref{eq:TVaRf}. We note that emulation offers a statistical perspective on approximating $f$. An alternative, commonly employed by practitioners, is to construct analytic approximations, for instance by integrating out some of the randomness in $Z_{[T,\infty)}$ to obtain a closed-form expression for $f(z)$. However, the quality of such formulas is hard to judge and they often require customized derivations. In contradistinction, the surrogate $\hat{f}$ directly smoothes out the Monte Carlo noise from nearby scenarios and is constructed to be asymptotically consistent with the true $f$ as the budget increases.

Formally, the statistical problem of emulation deals with a sampler
\begin{align}\label{eq:oracle3}
  Y(z) = f(z) + \eps(z),
\end{align}
where we identify $f(\cdot)$ with the unknown \emph{response surface} and $\eps(\cdot)$ is the sampling noise, with variance $\tau^2(z)$, assumed to be independent and identically distributed across different calls to the sampler. Specifically, the noise is taken to be Gaussian $\eps(z) \sim \mathcal{N}(0, \tau^2(z)),$ but scenario-dependent. Emulation now involves the (i) experimental design step of proposing a design $\mathcal{D}_k$ at each stage $k$ of the procedure that forms the training dataset, and (ii) a learning procedure that uses the collected outputs $\mathcal{D}_k = (z^{n},\mathbf{y}^{n}_k)_{n=1}^{N}$, with $\mathbf{y}^{n}_k = \{y^{n,1}_k, \ldots, y^{n,{r^{n}_k}}_k\}$ being a collection of $r^n_k$ realizations of \eqref{eq:oracle3} at $z^{n}$, to construct a fitted response surface $\hat{f}(\cdot)$ which is based on the conditional law of $f$ over the appropriate functional space, $\mathcal{L}(f |\mathcal{D}_k)$.  Here, we consider the case where $\hat{f}$ is fitted sequentially based on a multi-step procedure, and the subscript $k$ denotes the step counter, so that $\mathcal{D}_k \subseteq \mathcal{D}_{k+1}$.

\subsection{GP Posterior} \label{sec:SimpleKriging2}
A GP  surrogate assumes that $f$ in \eqref{eq:oracle3} has the form
\begin{equation}\label{eq:kriging3}
f(z) = \mu(z)+X(z),
\end{equation}
where $\mu : \R^d \rightarrow \R$ is a trend function, and $X$ is a mean-zero square-integrable Gaussian process with covariance kernel $C(x, x')$. The role of $C$ is to generate the reproducing kernel Hilbert space  $\HH_C$ which is the functional space that $X$ is assumed to belong to.

Since the noise $\eps(z)$ is also Gaussian, we have  that $\hat{f}_k(z) \equiv f(z) | \cD_k\sim \mathcal{N}( \mu(z) + m(z), s^2(z))$ has a Gaussian posterior, which reduces to computing the kriging mean $m(z)$ and kriging variance $s^2(z)$. One can then take $\mu(z)+m(z)$ as the point estimate of $f$ at scenario $z$. In turn, the kriging variance $s^2(z)$ offers a principled empirical estimate of model accuracy, quantifying the approximation quality. In particular, one can use $s^2(z)$ as the proxy for the mean-squared error (MSE) of $\hat{f}$ at $z$. 
The Gaussian process property implies that not only is the marginal posterior at $z$  Gaussian, but also the joint distribution of $(f(z^{'1}), \ldots, f(z^{'M}) ) | \cD_k$ is multivariate normal (MVN) for any $M$-tuple $z^{'1:M}$.

By considering the process $f(z)-\mu(z),$ we may assume without loss of generality that $f$ is statistically centered at zero. Denoting the sample average at each scenario $z^n$ by $\overline{y}^n_k = \frac{1}{r^n_k}\sum_{i=1}^{r^n_k} y^{n,i}_k$ as in Equation \eqref{eq:N-in2} and the overall collection as $\overline{\mathbf{y}}_k \doteq (\overline{y}^{1}_k, \ldots, \overline{y}^{N}_k)$ (\rev{note that here we assume that all outer scenarios have been already sampled, see further discussion on p.~14}), the resulting posterior mean and variance of $\hat{f}_k(z)$ follow from the MVN conditional equations \cite{roustant2012dicekriging}
\begin{equation}
\left\{ \begin{aligned}
m_k(z) &\doteq \mathbf{c}(z) (\bC+\bm{\Delta}_k)^{-1}\overline{\mathbf{y}}^T_k;\\
s^2_k(z) &\doteq C(z,z)-\mathbf{c}(z) (\bC+\bm{\Delta}_k)^{-1}\mathbf{c}(z)^T,
\end{aligned} \right. \label{eq:sk-meanvar}
\end{equation}
where $\mathbf{c}(z) = \left(C(z,z^{j})\right)_{1 \leq j \leq N}$, $\bC \doteq \left[C(z^{i},z^{j})\right]_{1 \leq i, j \leq N},$ $\bm{\Delta}_k$ is the $N \times N$ diagonal matrix with entries $\tau^2(z^{1})/r_k^1, \ldots, \tau^2(z^{N})/r_k^{N}$ and
$^T$ denotes transpose.  Note that the conditional variance $\tau^2(\cdot)$ is typically unknown, so applying \eqref{eq:sk-meanvar} requires replacing it with a further approximation $\hat{\tau}^2(\cdot)$, as discussed in Section \ref{sec:IntrinsicVariance}.

\begin{remark}
\rev{  For readers unfamiliar with emulators and surrogates, the idea is equivalent to regression: projecting the unknown $f$ onto a specified approximation space $\mathcal{H}$. The latter is ``non-parametric'' in the sense of being infinite-dimensional and dense in the class of continuous functions, so that for an input dataset of size $N$, there are $\mathcal{O}(N)$ degrees of freedom. In contrast to classical regression, in emulation there is no ``data'': the controller is also in charge of proposing scenarios $x$ where $f$ should be probed; because the outputs are stochastic it makes perfect sense to batch, i.e.~probe the same scenario multiple times. GP emulation treats $f$ as a realization of a random field, recasting regression as probabilistically conditioning this random field on the given input/output pairs. This is equivalent to taking $\mathcal{H}$ to be the RKHS generated by kernel $C$ and then applying the standard $L^2$-projection equations.}
\end{remark}

\subsection{Covariance kernels and hyperparameter estimation}
Fitting a GP emulator is equivalent to specifying the covariance function $C(\cdot, \cdot)$ and applying \eqref{eq:sk-meanvar}. The standard choice is a spatially stationary kernel,  $$C(z,z') \equiv c(z-z') =  \s^2 \prod_{j=1}^d  g( z_j-z'_j; {\theta}_j),$$
 reducing to the one-dimensional base kernel $g$. Below we use the Mat\'ern-5/2 kernel
\begin{equation}\label{eq:maternkernel2}
g(h; \theta) = \left(1+\frac{\sqrt{5}h}{\theta} + \frac{5 h^2}{3 \theta^2}\right) \exp\left(-\frac{\sqrt{5}h}{\theta}\right).
\end{equation}    The hyper-parameters ${\theta}_j$ are called characteristic length-scales and informally correspond to the distance between inputs over which the response $f$ can change significantly \cite[Ch 2]{rasmussen2006gaussian}. %
We utilize the \texttt{R} packages \texttt{DiceKriging} \cite{roustant2012dicekriging} and \texttt{hetGP} \cite{hetGP} that fit the hyper-parameters $\s, \theta_j$ by Maximum Likelihood. The latter is a nonlinear optimization problem and the packages differ in
how the search for the global maximum (initialization, gradient estimation, etc.) is conducted.

\subsection{Intrinsic Variance}\label{sec:IntrinsicVariance}
The simulation variance $\tau^2(z)$ is scenario-dependent and a crucial piece of fitting a surrogate. In a basic GP emulator, it is approximated by a local empirical variance, leveraging the replications offered by multiple inner simulations. This approach, known as Stochastic Kriging (SK) and introduced in Ankenman et al.~\cite{ankenman2010stochastic}, sets
\begin{equation}\label{eq:tauEstimate}
\hat{\tau}^2_k(z^n) \doteq \frac{1}{r^n_k -1}\sum_{i=1}^{r_k^n}(y_k^{n,i} - \bar{y}_k^n)^2.
\end{equation}
However, SK requires $r^n_k \ge 2$ and more realistically $r_n^k \gg 10$, as an insufficient number of inner simulations will give increasingly unreliable estimates of the conditional variance. This places a major restriction on the sequential algorithms which are then required to sample all relevant scenarios a minimal number of times. Moreover, $\hat{\tau}$ in \eqref{eq:tauEstimate} is only available a posteriori, so there is no natural way to predict simulation variance at a yet-to-be-sampled scenario (Ankenman et al.~\cite{ankenman2010stochastic} suggested to do a GP-based interpolation of $\hat{\tau}^2$ for that purpose).

To overcome this limitation, we use an extended GP model~\cite{binois2016practical} that jointly models the response mean $z\mapsto f(z)$ and (log)-variance $z \mapsto \log \tau^2(z)$. Specifically, an additional latent variable $\Lambda_n$ is introduced for each scenario $z^n \in \cD$. The estimated ${\tau}^2$'s are then viewed as \emph{spatially smoothed} versions of $\Lambda_n$, combining the above interpolation idea of SK with additional smoothing to account for the uncertainty about the true $\tau^2(z^n)$. The model jointly learns the hyperparameters defining the covariance kernel ${C}(\cdot, \cdot)$ of $f(\cdot)$, and all the $\Lambda_n$'s. While this increases the complexity of the underlying optimization problem that is solved during the likelihood maximization step, tractability is maintained thanks to analytic expressions for the respective likelihood function gradients. We refer to Binois et al.~\cite{binois2016practical} for the full details and utilize the resulting \texttt{hetGP} \cite{hetGP} package where
this procedure is efficiently implemented. Relative to SK, the \texttt{hetGP} approach has no lower-bound constraint on $r^n_k$'s and is especially beneficial at the early stages of the algorithm where  many of the $\hat{\tau}$'s are unstable. According to \cite{binois2016practical}, using \texttt{hetGP} improves performance even for homoskedastic models, simply through better handling of the replication, avoiding the pitfalls of \eqref{eq:tauEstimate}. \rev{In our experiments, \texttt{hetGP} performed much better, in particular lowering the bias in $m_{k}(\cdot)$ for small $k$, and hence improving the ``zooming in'' feature of our sequential design strategies.}

\subsection{Estimating Portfolio Risk}\label{sec:tailrisk-inGP}
After a surrogate for $f(\cdot)$ is constructed, there remains the problem of estimating
 the risk measure $R$ defined in \eqref{eq:riskmeasure}. This \emph{identification} problem is itself non-trivial, not least because we seek more than a point estimate $\hat{R}$.

 Taking a probabilistic point of view, $R$ as a random variable (defined through the random variable $f^{1:N}$) that carries its own  posterior distribution  given $\cD_k$.
 Thus, a GP-based estimate of $R$ is \begin{align}\label{eq:risk-posterior}
\hat{R}^{GP}_k \doteq \E[ R | \cD_k] = \sum_n \E[ w^n f(z^n)  | \cD_k].
\end{align}
Because the weights $w^n$ are defined implicitly in terms of order statistics of $f(z^{1:N})$, $\hat{R}^{GP}$ requires integrating against the joint distribution of $\hat{f}^{1:N}$. While the latter is multivariate Gaussian, there are no closed-form formulas for the probability that one coordinate $\hat{f}^n$ of a MVN distribution is a particular order statistic $\hat{f}^{(\a N)}$ (though see \cite{labopin2016sequential} who develop a related approximation). Nevertheless, the conditional expectation in \eqref{eq:risk-posterior} can be in principle numerically approximated by making draws from the posterior MVN law of $\hat{f}^{1:N}$, evaluating the resulting quantile/tail and averaging.

A much cheaper solution is to work with the  kriging means $m_k(z)$ in analogue to the standard plug-in estimators $\hat{R}^{SA}$ based on $\bar{y}$'s in nested simulation. Specifically, we use the Harrell-Davis $L-$estimator \eqref{eq:HDweights} based on the sorted posterior means:
\begin{align}\label{eq:meanofR}
  \hat{R}^{HD, \VaR}_k \doteq \sum_n \tilde{w}^{(n)} m^{(n)}_k.
\end{align}
For estimating $\TVaR_\a,$ the empirical weights $\hat{w}^{n,\TVaR}_k = \frac{1}{\alpha N} \1_{\{ m_k(z^n) \le m_k^{(\a N)}\}}$ based on the posterior means are used analogously to \eqref{eq:meanofR}.

Given an estimator $\sum_n \hat{w}^{(n)}_k \hat{f}^{(n)}_k$ and denoting $\hat{\mathbf{w}}_k \doteq (\hat{w}_k^1, \ldots, \hat{w}_k^{N})$, its mean is obtained by plugging in the posterior means $m^{n}_k$ for $\hat{f}^n$ like in \eqref{eq:meanofR} and we can similarly obtain the estimator variance:
\begin{equation}\label{eq:varofR}
s^2(\hat{R}_k) \doteq \var \left( \sum_n \hat{w}^{(n)} f^{(n)}_k \big| \cD_k \right) = \hat{\mathbf{w}}_k\left[\mathbf{C}-\mathbf{c}(z) (\bC+\bm{\Delta}_k)^{-1}\mathbf{c}(z)^T\right]\hat{\mathbf{w}}_k^T.
\end{equation}
Note that the inside term in Equation \eqref{eq:varofR} is the posterior covariance matrix of $\hat{f}^{1:N}_k$, taking advantage of the covariance structure of the GP for additional smoothing. For later use we also define the estimated quantile scenario $\widehat{\cQ}$ which solves $m_k(\widehat{\cQ}) = m^{(\a N)}_k$ and may be compared to the true quantile $\cQ$ based on $f$. 

\begin{remark}
To motivate the use of $\hat{R}^{HD,\VaR}$, consider the ``in-between'' approach of starting with \eqref{eq:risk-posterior} but treating the two terms as independent:
\begin{align*}
\E[ R | \cD_k]  & \simeq \sum_n \E[ f(z^n) | \cD_k] \cdot \E[  w^n | \cD_k]
 = \sum_n \omega^n_k m_k(z^n),
\end{align*}
where the weights are (up to a normalizing factor) $\omega^n_k = \P( z^n \in \cR | \cD_k)$. This means that we should replace the 0/1 weights in \eqref{eq:VaRweight}-\eqref{eq:TVaRweight} with their smoothed posterior probabilities. Simulating $\omega^n$ in our case studies and plotting them against the ordered means  $m^{1:N}$ gives a bell-shape that reasonably matches the Beta shape of the Harrell-Davis weights $\tilde{w}^{(n)}$ in \eqref{eq:HDweights}. (However note that as $k \to \infty$, $\omega^n_k \to w^{n,Var}$ while the Harrell-Davis weights $\tilde{w}$ are independent of $k$.) %
\end{remark}

\section{Sequential Design for Tail Approximation}\label{sec:sequential}

To estimate $R$ we assume being given a total budget of $\cN$ inner simulations to allocate across $N$ fixed scenarios. We then apply the following general procedure \rev{to split the computation} into $K$ sequential rounds $k=1,\ldots, K$, where $\cN_k$ is the \emph{remaining} simulation budget before round $k$ (with $\cN_0 \equiv \cN$):
\begin{enumerate}[I.]
\item Initialize $\hat{f}_0$ by generating simulations over a subset of \emph{pilot} scenarios.

\item Sequentially over $k=0,1,\ldots,$ until $\cN_k=0$, predict $\hat{f}_k$ on $\cZ$ to determine which scenarios are close to $\cR$. Allocate more inner simulations to \rev{this/these} scenario(s), i.e.~increase the corresponding $r^n$'s. This is achieved via an \emph{acquisition function} $\cH(z^n)$ that  takes into account the ``closeness'' of $z^n$ to $\cR$ and the uncertainty $s_k(z^n)$. Potentially a new outer scenario that previously had $r^n_k = 0$ might be sampled. Then update to produce $\hat{f}_{k+1}$ based on the new MC output.
\item The final estimate  $\hat{R}_K$ is obtained from Equation \eqref{eq:meanofR}, with uncertainty expressed via \eqref{eq:varofR}.
\end{enumerate}

Making the above mathematically precise boils down to two objectives: (i)~discover the region of $\cZ$ corresponding to $\cR$, and (ii)~reduce $s_k^2(\cdot)$ in this region. These objectives match the exploration-exploitation tradeoff:  allocating too many replications to solve (i) produces a surrogate that lacks precision even though it recognizes the location of $\cR$, while focusing only on (ii) without sufficient searching may zero in on the wrong region.  To guide this tradeoff during sequential allocation, the acquisition function is based on an uncertainty measure. The strategy is then to (myopically) carry out \emph{stepwise uncertainty reduction} (SUR, see \cite{bect2012sequential,chen2012stochastic,chevalier2014fast,oakley2004estimating,picheny2010adaptive}) by determining what new simulations would most reduce expected uncertainty for the \emph{next} round. \rev{Despite their proliferation in the machine learning literature, to our knowledge, we are the first to apply these concepts for portfolio risk measurement.}

The computational overhead associated with the loop in Step II is non-negligible. It concerns the computation of the acquisition function $\cH_k(z^{1:N})$, as well as the updating of the GP surrogates $\hat{f}_k \to \hat{f}_{k+1}$. For computational efficiency, rather  than adding a single inner simulation, we therefore work with batches of size $\Delta r_k$, meaning that $\cN_{k+1} = \cN_k - \Delta r_k$.
 The sequential design procedure is therefore to determine the best allocation $r_k'^n$ satisfying $\sum_n r_k'^n = \Delta r_k$.  The approaches in Sections \ref{sec:sur}-\ref{sec:appendixEI} allocate all $\Delta r_k$ new replications to a single $z^{k+1}$ scenario.
Alternatively the approach in Section \ref{sec:dad} solves an additional optimization problem on how to distribute the new inner simulations across multiple outer scenarios. For ease of presentation we focus on constant budget per step $\Delta r$; for example, $\Delta r = 0.01 \cN_1$ (where $\cN_1$ is the remaining budget after initialization) yields a procedure with $K=100$ rounds. Further speed-ups to reduce the overhead are discussed in Section~\ref{sec:implementation}.

\begin{remark}
  \rev{Not all outer scenarios will typically be sampled by our algorithms, especially in the early rounds. Therefore, we distinguish between the total $N$ vs.~$N'(k) \le N$ which is the number of scenarios sampled by round $k$. Thus, in expressions such as \eqref{eq:sk-meanvar} or \eqref{eq:meanofR}, the effective computation is with matrices/vectors of size $N'(k)$. This is important for computational overhead since a major bottleneck is handling the $N'(k) \times N'(k)$ matrix $\bC$.}
\end{remark}

\subsection{Targeted MSE Criterion}\label{sec:sur}
For the quantile objective which corresponds to Value-at-Risk computation\rev{,} the region of interest $\cR^{\VaR}$ is the contour $\{ z : f(z) = L \}$, where the level $L = f^{(\a N)}$ is implicit. To learn the contour, we wish to reduce the kriging variance for scenarios close to $L$. To this end we consider the \emph{targeted mean square error} for a scenario $z$ originally introduced in Picheny et al.~\cite{picheny2010adaptive}:
\begin{align}\notag
\text{tmse}^{\VaR}_k(z) &\doteq s_k^2(z)W^{\VaR}_k(z; L) \qquad\text{where}\\ \label{eq:tmse}
  W^{\VaR}_k(z; L) & \doteq  \frac{1}{\sqrt{2 \pi (s^2_k(z) + \varepsilon^2)}} \exp\left(-\frac{1}{2} \left(\frac{m_k(z)-L}{\sqrt{s^2_k(z)+\varepsilon^2}}\right)^2\right) = \phi( m_k(z) - L, s^2_k(z) + \varepsilon^2)
\end{align}
 is based on the Gaussian pdf $\phi$. Thus, $W^{\VaR}_k(z; L)$ is largest for scenarios where predicted portfolio value is close to $L$  and scenarios that have higher posterior variance. The parameter $\varepsilon$ in \eqref{eq:tmse} controls how localized is the criterion around the level $L$.  Picheny et al.~\cite{picheny2010adaptive} recommended $\varepsilon$ to be five percent of the response range, however, they considered noiseless samplers with $\tau^2 \equiv 0$.  In our case, it is desirable to have $\varepsilon$ decrease as $k$ increases, to reflect improving knowledge about $R$. We propose to take $\varepsilon_k \doteq s(\hat{R}^{HD}_k)$ from \eqref{eq:varofR} which captures the uncertainty about the quantile.  This is large when $k$ is small (uncertain in earlier stages), and decreases rapidly as $k$ increases.

The overall acquisition function to be greedily minimized is then $\cH^{timse}_{k+1}$, the total (integrated) $\text{tmse}$ over all of $\cZ$ conditional on adding simulations at $z^{k+1}$:
\begin{align}
\cH^{\text{timse}}_{k+1} &\doteq \frac{1}{N}\sum_{n=1}^N \text{tmse}^{\VaR}_{k+1}(z^n) 
   = \frac{1}{N} \sum_{n=1}^N s_{k+1}^2(z^n)W^{\VaR}_{k+1}(z^n; R). \label{eq:timse-1}
\end{align}
The right hand side of \eqref{eq:timse-1} still contains terms that will only be known after sampling at $z^{k+1}$. Let us define
\begin{equation}
V_k(z^n ; z^{m}) \doteq \left.  C(z^n,z^n)-\mathbf{c}(z^n) (\bC+\bm{\Delta}^{cand}_{k+1})^{-1}\mathbf{c}(z^n)^T
\right|_{\bm{\Delta^{cand}_{k+1}} = \text{diag}\left(\frac{\hat{\tau}^{2}_k(z^{1})}{r^1_k}, \ldots, \frac{\hat{\tau}^{2}_k(z^{m})}{r^m_k+\Delta r_k},\ldots, \frac{\hat{\tau}^{2}_k(z^{N'})}{r^{N'}_k}\right)} \label{eq:varznew}
\end{equation}
which approximates the next-step kriging variance at $z^n$ under the assumption that $\Delta r_k$ additional replications were added to $z^m$ (keeping all other GP pieces frozen from the $k$-th round). Note that the noise matrix $\bm{\Delta}^{cand}_{k+1}$ is affected by the $\Delta r_k$'s (and might have one more row/column relative to $\bm{\Delta}_k$ if a hitherto unsampled scenario $z^{m}$ is considered), but the covariance matrix  $\bC$ is not.

We furthermore approximate $W^{\VaR}_{k+1}(z^n; R)$ with the current tmse weight $W^{\VaR}_k(z^n)$ evaluated at the level $\hat{R}^{HD}_k$, so the final criterion is
\begin{equation}
\widehat{\cH}^{\VaR,\text{timse}}_k(z) \doteq \frac{1}{N} \sum_{n=1}^N V_k(z^n; z) W^{\VaR}_k(z^n; \hat{R}^{HD}_k), \label{eq:timse}
\end{equation}
which is numerically minimized over the next sampling scenario $z$. Selecting $z^{k+1}$ as the minimizer of \eqref{eq:timse} is termed the ST-GP (for ``Sequential TIMSE based on a GP'') procedure.

\begin{remark}\label{rk:EI}
The original \cite{picheny2010adaptive} considered the  case where $\cZ$ is continuous, so the designs were augmented with new  sites $z^{k+1}$ and $\cH^{\text{timse}}$ was defined via an integral. In our case $\cZ$ is finite ($\cH^{\text{timse}}$ is a sum) and fixed, so we add \emph{replications} to existing $z \in \cZ$.
\end{remark}

For $\TVaR_\a$, all scenarios in the left tail  need to be considered so we modify the criterion in Equation \eqref{eq:timse} to instead use weights (keeping everything else the same, including the use of $\varepsilon=s(\hat{R}^{HD}_k) )$
\begin{align}\label{eq:tmse-TVaR}
W_k^{\TVaR}(z; \hat{R}^{HD}_k) & \doteq \frac{1}{\sqrt{2 \pi (s^2_k(z) + s^2(\hat{R}^{HD}_k) )}} \Phi\left(\frac{\hat{R}^{HD}_k - m_k(z)}{\sqrt{s^2_k(z)+s^2(\hat{R}^{HD}_k)}}\right),
\end{align}
where $\Phi(\cdot)$ denotes the standard Gaussian cdf.  The weights $W^{\TVaR}$ increase as $z$ becomes deeper in the tail (i.e.~$\hat{R}^{HD}_k-m_k(z)$ is more positive), while still compensating for uncertainty.

\rev{Note that a further option is to consider a convex combination $W_k = \alpha W_k^{\VaR} + (1-\alpha) W_k^{\TVaR}$, where $0 < \alpha < 1$, combining the objectives of correctly identifying the cutoff for tail scenarios, with accurately predicting all tail losses. In particular, auxiliary experiments suggest that this choice with $\alpha \in [0.1,0.3]$ improves ST-GP for learning $\TVaR$. For brevity, we only present the results for $\alpha=1$ ($\VaR$) and $\alpha=0$ ($\TVaR$).}

\subsection{Expected Improvement Criterion}\label{sec:appendixEI}
An alternate route based on Bect et al.~\cite{bect2012sequential} is to analyze the random variables $\1_{\{\hat{f}_k(z^n) \leq L\}}$ which have variance $\var(\1_{\{\hat{f}_k(z^n) \leq L \} } | \cD_k) = p_k(z^n)(1-p_k(z^n))$, where $$p_k(z^n) \doteq \P(\hat{f}_k(z^n) \leq L | \mathcal{D}_k) = \Phi \bigl( (L -m_k(z^n))/ s_k(z^n)\bigr).
  $$
 These variances are low when $p_k(z^n)$ is close to 0 or 1, i.e.~when $\hat{f}_k$ has strong understanding of which side $\hat{f}_k(z)$ is with regard to $L$. This motivates to minimize the expected level set uncertainty across all $\cZ$, $\sum_n p_{k+1}(z^n)(1-p_{k+1}(z^n))$, conditional on adding inner simulations at a chosen $z^{m}$.  As in the previous section, we substitute $\hat{R}^{HD}_k$ for $L$ and consider the look-ahead version $P_{k}(z^n;z^m)$ of $p_{k+1}(z^n)$ assuming replications are added to scenario $z^m$, cf.~\eqref{eq:varznew},
\begin{equation}
P_{k}(z^n;z^m) = \Phi \left( \frac{\hat{R}^{HD}_k - m_k(z^n)}{ V_k( z^n; z^m) } \right).
\label{eq:Pk}
\end{equation}
The resulting Expected Contour Improvement criterion to minimize is
\begin{align}
\widehat{\cH}^{\VaR,ECI}_k(z) \doteq \frac{1}{N} \sum_{n=1}^N P_{k}(z^n;z)(1-P_{k}(z^n;z)). \label{eq:sur}
\end{align}
\rev{In the sequel this allocation rule in combination with a GP emulator is labelled SE-GP.}

For $\TVaR$, the analogue of \eqref{eq:sur} would be to minimize the predictive uncertainty of the level set $\{z : \hat{f}_{k+1} \leq L\}$. From \cite{labopin2016sequential}, this can be achieved through the acquisition function
\begin{align}
\widehat{\cH}^{\TVaR,ECI}_k(z) \doteq \frac{1}{N} \sum_{n=1}^N P_{k}(z^n;z). \label{eq:surTVaR}
\end{align}
However, note that $\widehat{\cH}^{\TVaR,ECI}$ in fact tends to neglect scenarios satisfying $m_k(z) \ll \hat{R}^{HD}_k$ because in such cases $P_k(z^n; z) \approx p_k(z) \approx 1$ so minimal reduction in uncertainty is achieved by sampling the extreme loss scenarios. Thus, \eqref{eq:surTVaR} ends up behaving similar to \eqref{eq:sur} and does not explore enough of the left tail as needed for proper $\TVaR$ estimation.

\subsection{Dynamic Allocation Designs}\label{sec:dad}

Adding  inner simulations in batches allows the possibility of sampling in parallel several different outer scenarios. This can be advantageous in anticipating the \emph{global} updating effect of running new inner simulations. Such parallel updating also offers a bridge between the sequential approaches and the fixed-stage methods, such as the 3-stage approach of \cite{liu2010stochastic}.

Let $\Delta r_{k}$ be the budget for stage $k$. We wish to choose $\{r_{k}'^n\}$ to minimize the posterior estimator variance $s^2(\hat{R}_{k+1})$ in Equation \eqref{eq:varofR}  subject to the constraints $\sum_{n} r_k'^n = \Delta r_{k}$, and $r_k'^n \geq 0.$  To do so, we extend the solution of Liu and Staum~\cite{liu2010stochastic}, who considered the specific case when $r^n_k$ is constant in both $k$ and $n$, to the sequential design context where $r^n_k$ varies among rounds $k$ and locations $z^n$.
Conditioning on a set of $r_k^{'n}$'s and writing out the look-ahead variance,
\begin{equation}
s^2_{k+1}(\hat{R}_{k+1}) = \hat{\mathbf{w}}_{k+1} (\bC - \bC (\bC + \bm{\Delta}^{cand}_{k+1})^{-1} \bC) \hat{\mathbf{w}}_{k+1}^T \label{eq:minimizing-var2}
\end{equation}
where $\hat{\mathbf{w}}_{k+1}$ are the estimates of the weights in \eqref{eq:riskmeasure}.  Note that $s^2_{k+1}(\hat{R}_{k+1})$ is driven by each $r_{k+1}^n$ through the look-ahead noise matrix $\bm{\Delta}^{cand}_{k+1}$ that has diagonal entries $\frac{\tau^2(z^n)}{r^n_k + r^{'n}_k}$. To proceed,
we freeze the weights $\hat{w}^n_{k+1} = \hat{w}^n_k$, whereby the key calculation involves expressing the matrix inverse  $(\bC + \bm{\Delta}^{cand}_{k+1})^{-1}$ in terms of $r^{'n}$'s. This is done in
Lemma~\ref{lemma:inverseApprox} in Appendix~\ref{sec:varianceMinimization}. The respective proof further shows  that minimizing \eqref{eq:minimizing-var2} is then equivalent to minimizing
\begin{equation}\label{eq:sv-gp}
\mathbf{u}_{k} \bm{\Delta}^{cand}_{k+1} \mathbf{u}_{k}^T \qquad\quad\text{where}\qquad\mathbf{u}_{k}^T \doteq  (\bC + \bm{\Delta}_k)^{-1}\bC \hat{\mathbf{w}}_k^T.
\end{equation}
Solving this minimization problem can be done using a \emph{pegging algorithm}, see e.g.~\cite{bretthauer1999nonlinear}. As before, for the $\tau^2(z^n)$ terms in $\bm{\Delta}_k$ and $\bm{\Delta}^{cand}_{k+1}$ we use the \texttt{hetGP} estimates.

Note that \eqref{eq:sv-gp} is predicated on \emph{adding} replicates to scenarios with existing simulations since otherwise $\bm{\Delta}^{cand}_{k+1}$ would be undefined for $r^{'n}_k = 0$. Thus, to implement \eqref{eq:sv-gp} we first create the allocation subset $\cZ^{SV}_k$ of scenarios to optimize over so that in \eqref{eq:sv-gp} the index $n$ runs through $\{ n : z^n \in \cZ^{SV}_k \}$. We found that constructing $\cZ^{SV}$ based on the ST-GP weights  works well (see Section~\ref{sec:implementation} below which uses the same construction for screening scenarios for ST-GP and SE-GP). If necessary, prior to minimizing \eqref{eq:sv-gp} we then add single replicates to any scenarios in $\cZ^{SV}_k$ that have $r^n_k = 0$.

\rev{The above algorithm, which we label SV-GP} (for Variance minimization) in what follows,  based on solving \eqref{eq:sv-gp} makes three approximations. First, it modifies the minimization problem by using a matrix approximation in Lemma~\ref{lemma:inverseApprox}. Second, it treats noise variances as known (specifically it uses same estimate of ${\tau}^2(z)$ across stage $k$ and $k+1$). Third, it treats the GP emulator as fixed, i.e.~it does not take into account that the hyperparameters are themselves evolving in $k$. Due to all of the above, the resulting allocation $r_k'^n$ is sub-optimal. This means that while the method is providing a recipe to allocate arbitrary simulation budget, it cannot entirely replace the sequential rounds. Indeed, this is the basic shortcoming of a fixed-stage approach like that of \cite{liu2010stochastic}: by taking only a few stages,  extra onus is placed  on the initialization and proper fine-tuning of the GP surrogate. As we show below, the ultimate performance of SV-GP is not materially better than the more ``naive'' one-scenario-at-a-time strategies of the previous section. Since the approximations in
\eqref{eq:sv-gp} diminish  as each $r_{k}^n$, $n=1, \ldots, N$ increases, SV-GP  tends to perform best at the later stages. It also implies that it may be beneficial to increase batch amounts $\Delta r_k$ as $k$ grows.

\begin{figure}[!ht]
  \centering
  \begin{tabular}{cc}
  VaR & TVaR\\
  \includegraphics[scale=0.28]{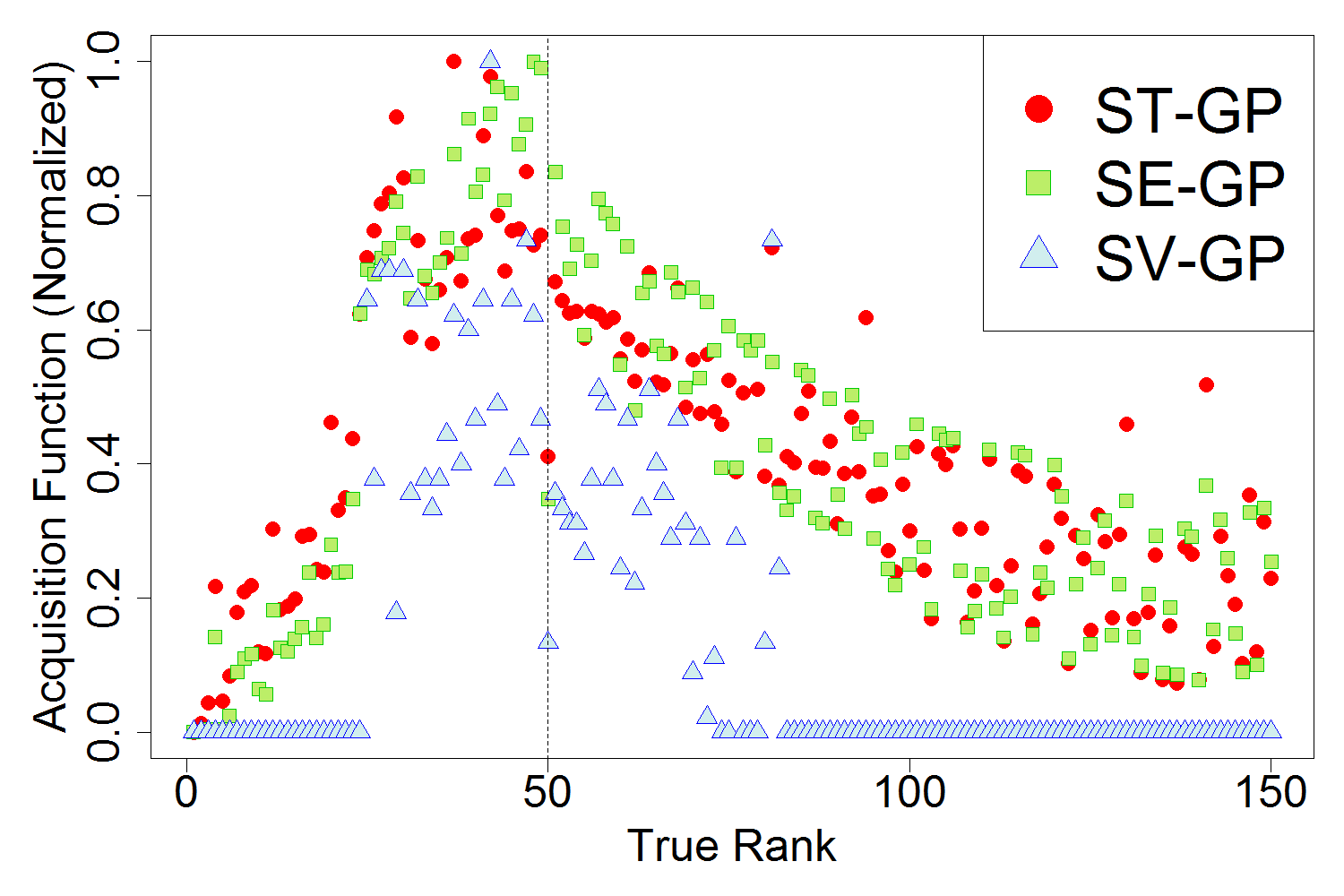}  &
  \includegraphics[scale=0.28]{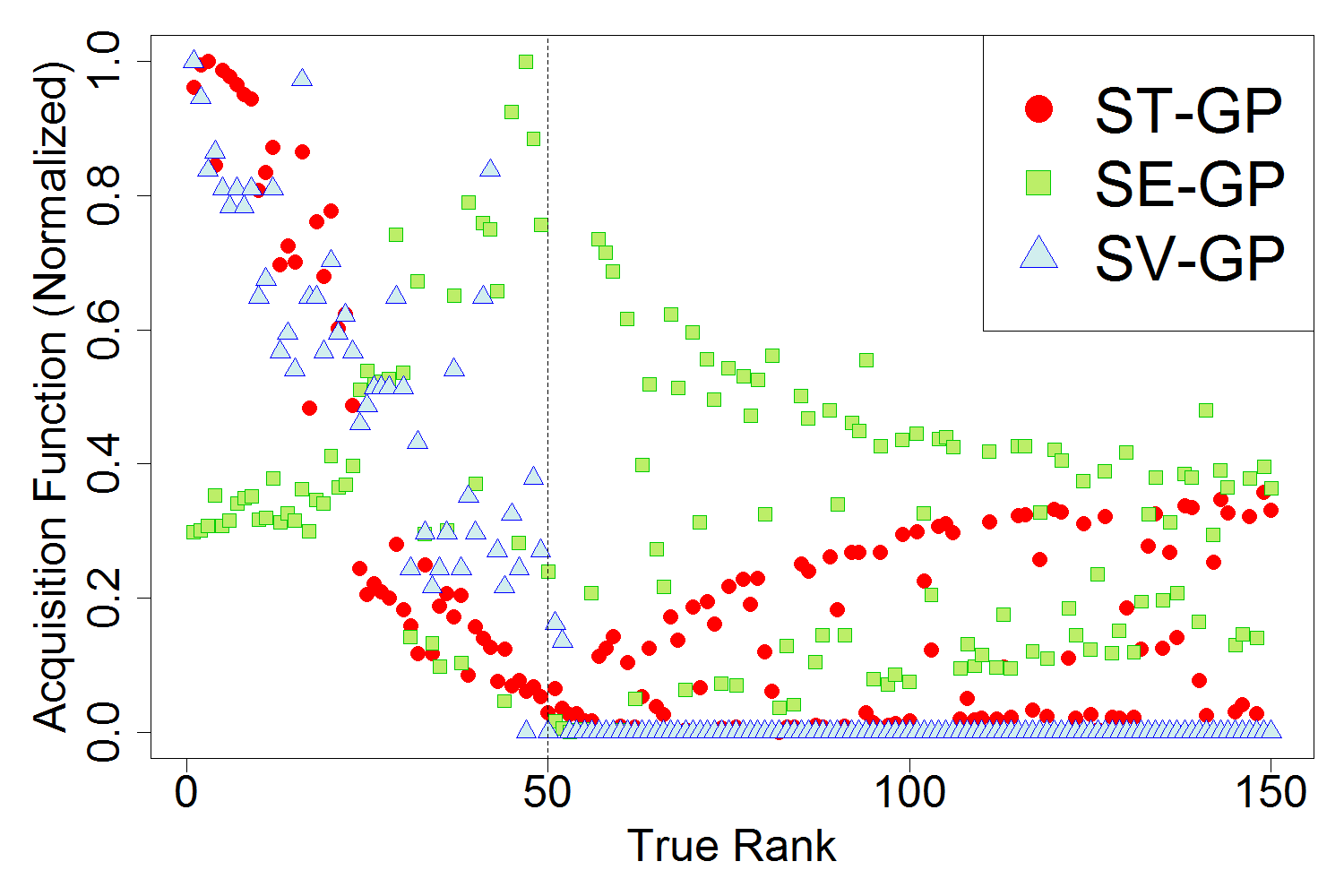}
  \end{tabular}

  \caption{Normalized acquisition functions for ST-GP, SE-GP and SV-GP on the same simulation output, namely after round $k=20$, using SR-GP-2 and the 2-D Black-Scholes case study.  Here, ``acquisition function'' for SV-GP refers to the allocation results from the minimization problem in Equation \eqref{eq:sv-gp}.  For visualization, ST-GP and SE-GP criteria are changed in sign so that they allocate $\Delta r_k$ to the point with the \emph{highest} value; all $\cH$'s are normalized to be in the range $[0,1]$. The vertical dashed lines indicate the true quantile scenario $\cQ$. }\label{fig:acquisitionplot}
\end{figure}

To illustrate the various acquisition functions,  Figure~\ref{fig:acquisitionplot} shows normalized values of $\widehat{\cH}_k(z^n)$ for ST-GP, SE-GP and SV-GP. For the latter, we take $\cH^{SV}_k(z^n) \equiv r^{'n}$, i.e.~the next-stage allocation amounts. The underlying model comes from the case study in Section \ref{sec:casestudy-BS} below.
We observe that for $\VaR$, all $\widehat{\cH}$ target the quantile $\cQ$,  with SV-GP more strongly focused in this region. Recall that ST-GP and SE-GP will pick $z^{k+1}$ as the maximizer of the shown acquisition functions, while SV-GP divides $\Delta r_k$ according to the indicated $r^{'n}_k$. For $\TVaR$, ST-GP and SV-GP target the deep tail (scenarios with worst portfolio losses), while SE-GP
still focuses on the quantile region.

\subsection{Rank-Based Allocation}
A simpler approach is to allocate  inner simulations based on the posterior means $m_k(z^n)$. Specifically, for given parameters $L$ and $U$ satisfying $L \le \a N  \le U$, the SR-GP (``Sequential Rank'') algorithm allocates uniformly to all scenarios with rank between $L$ and $U$: $\cR^{SR-GP}_k(L,U) \doteq \{z^n \in \cZ: m_k(z^n) \in [m_k^{(L)}, m_k^{(U)}] \}$. Thus, $r^{'n}_k = \Delta r_k / (U-L)$ for $z^n \in \cR^{SR-GP}_k(L,U)$ and zero otherwise.

If the values of $L$ and $U$ are chosen conservatively, SR-GP ``blankets'' all scenarios in the predicted neighborhood of interest. This ensures exploration of the potential tail, but the adaptive targeting might not be localized enough. Conversely, taking $L = U = \a N$ yields a very aggressive SR-GP scheme for estimating $\VaR_\a$: it greedily adds all $\Delta r_k$ scenarios to the empirical quantile, i.e.~scenario $\widehat{\cQ}_k$. Employing this extreme as a comparator addresses the value of exploration during the sequential design. In general, SR-GP should be suboptimal compared to ST-GP, SE-GP and SV-GP since it does not take uncertainty of $\hat{f}_k$ into account. It also suffers from the a priori difficulty to reasonably set $L$ and $U$.  To obtain well-performing estimators in the case studies, our  choices for $L$ and $U$ came through several iterations of tweaking to see what worked well, all depending on the case study and risk measure. 

\section{Algorithm}\label{sec:algorithm}

\begin{figure}[!ht]
  \centering
  \begin{tabular}{ccc}
  $k=1$ & $k=30$ & $k=100$ \\
  \includegraphics[height=2.1in,width=0.31\textwidth,trim=0.1in 0.35in 0in 0.1in]{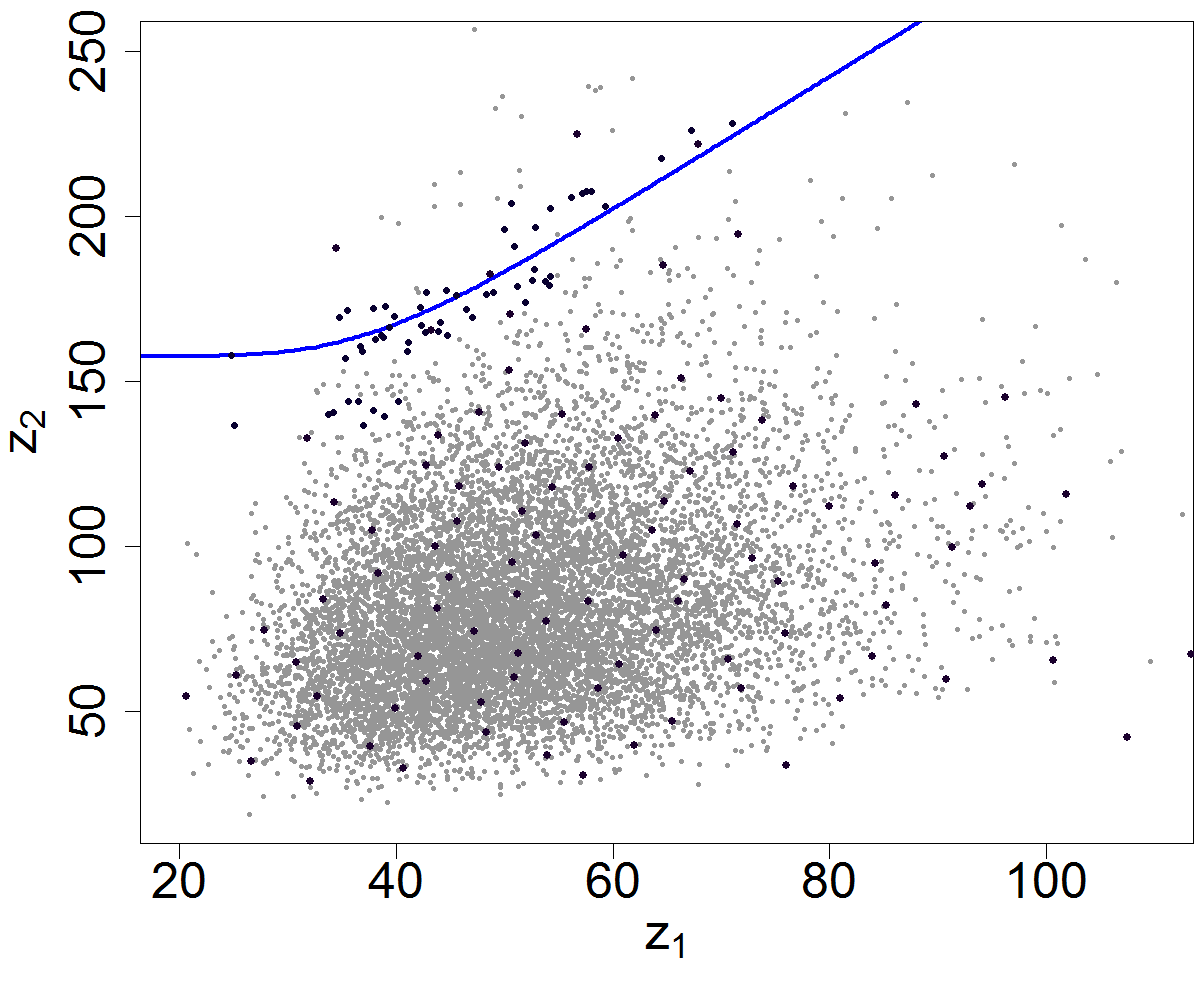} &
  \includegraphics[height=2.1in,width=0.31\textwidth,trim=0in 0.35in 0in 0.1in]{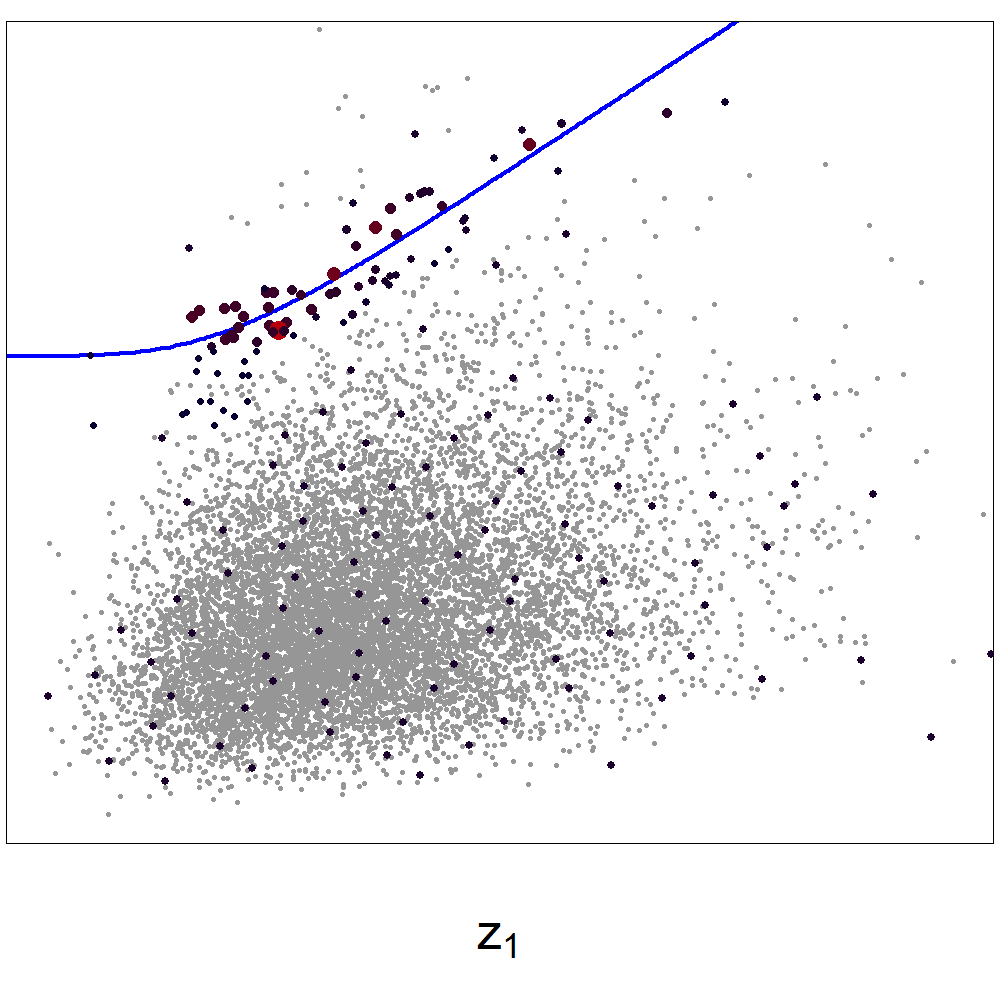} &
  \includegraphics[height=2.1in,width=0.31\textwidth,trim=0in 0.35in 0in 0.1in]{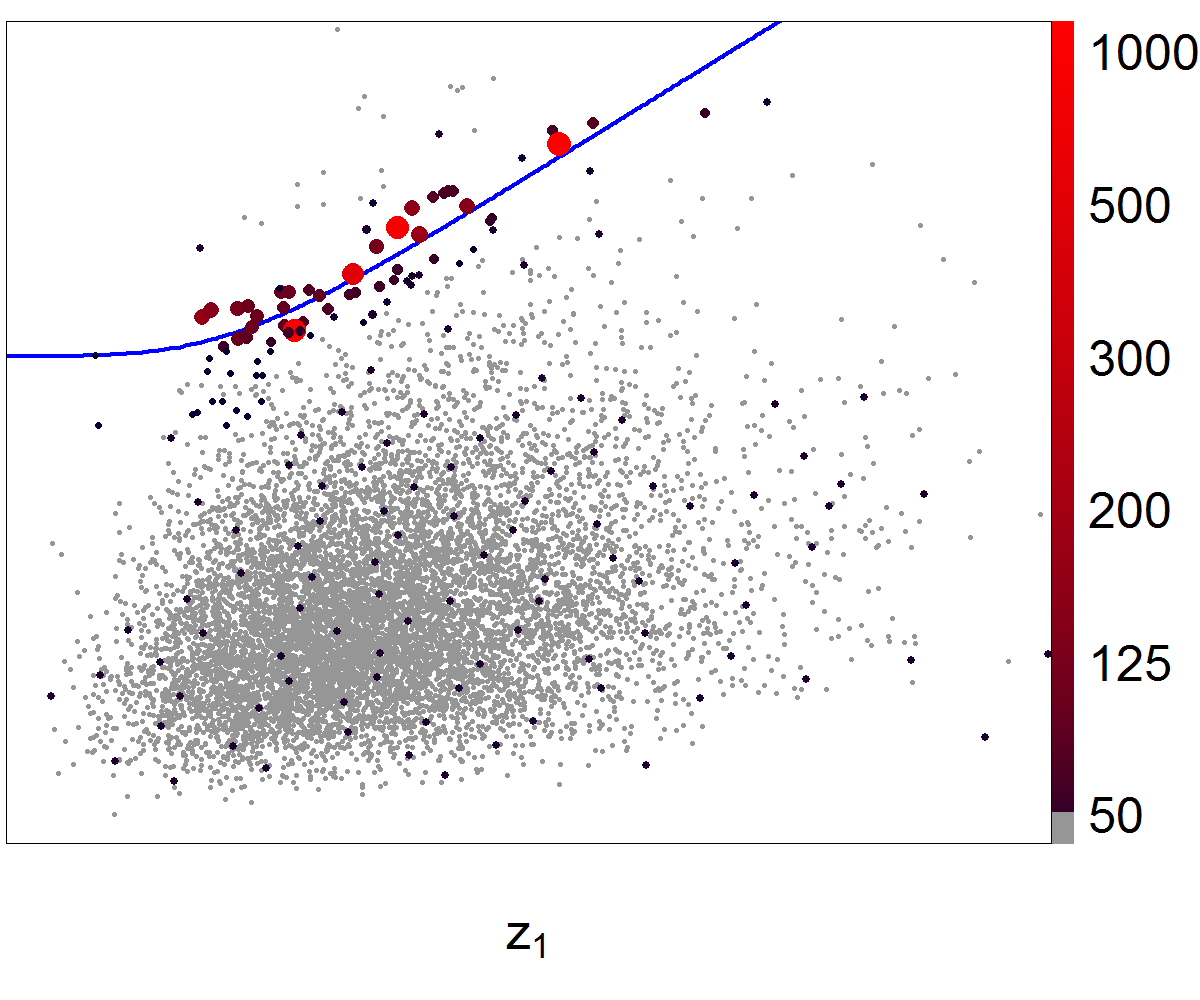}
  \end{tabular}
  \caption{Sequential budget allocation by SV-GP at stages $k=1, 30, 100$ to learn $\VaR_{0.005}$ in the 2-D Black-Scholes case study of Section~\ref{sec:casestudy-BS}. The blue line indicates the true  quantile contour $f^{(50)}$.  Each dot represents an outer scenario $z^n$; the respective size and color are scaled non-linearly in $r^n_k$. Some scenarios receive as many as $r^n_k \approx 1200$ scenarios (total budget of $\cN = 10^4$). \label{fig:movie}}
\end{figure}

To illustrate our sequential schemes,  Figure~\ref{fig:movie} visualizes the evolution of scenario allocations $r^n_k$ over one run of the SV-GP method for $\VaR$ estimation. As $k$ increases, \rev{the} vast majority of the budget goes to the tail scenarios, highlighting the adaptive inner simulations. Thus,  in the later stages very few additional scenarios are augmented to $\cD_k$; rather the values of $r_k^n$ are increasing for locations near the quantile $\cQ$. The learning of $f$ can be observed by comparing the $k=1$ panel where the algorithm investigated scenarios with $z_2>225$, and the panels with $k=30$ and $k=100$ where the corresponding values of $r^n_k$ remain low after the emulator realizes that $|m_k(z^n)-R|$ is in fact large in that region. The final plot for $k=100$ also shows how the GP's spatial covariances are taken into account: scenarios are sampled in ``clusters''---centered around a single scenario with a very high $r_k^n$, with nearby locations benefiting from this information.

\subsection{Updating the Emulator}

At each stage of the algorithm, the GP emulator is updated from $\hat{f}_k$ to $\hat{f}_{k+1}$. After recording the number of replications $(r'^n_{k})_{n=1}^N$ to add to each scenario according to the procedures discussed  in Section \ref{sec:sequential} and  generating the new data $\{ y^{n,i} : i = r^n_{k}+1, \ldots r^n_{k} + r'^n_k\}$, we have the recursive update
\begin{align}
r^n_{k+1} & = r^n_k + r^{'n}_k, \label{eq:updating-r}
\end{align}
which is used for $\bm{\Delta}_{k+1}$ in \eqref{eq:sk-meanvar}. Next, directly re-computing $m_{k+1}, s^2_{k+1}$ requires inverting the $| \cD_k| \times | \cD_k|$ GP covariance matrix $\mathbf{C}$ which is computationally expensive. Instead, we \emph{update} the GP model by re-using the expressions on the RHS of \eqref{eq:sk-meanvar} from stage $k$. In our context updates come in two flavors, both of them internally incorporated in \texttt{hetGP}. First, if a new outer scenario $z^{k+1}$ (that previously had $r^n_k = 0$) is added to $\cD_{k}$ then the GP updating equations \cite{chevalier2014fast} can be applied, utilizing the Woodbury formula for finding matrix inverse when augmenting by 1 row. Second, if only new replications are added, i.e.~$|\cD_{k+1}| = |\cD_k|$ then we adjust $\bm{\Delta}_{k+1}$ without touching $\mathbf{C}$ \cite{hetGP}. Intuitively, it is sufficient to just keep track of $\bar{y}^n$, $\hat{\tau}^2(z^n)$ and $r^n_k$'s, and the latter 3 vectors all satisfy simple updating arithmetic
\cite{chan1982updating,kaminski2015method}
\begin{align}
\label{eq:updatingYbar} \bar{y}^n_{k+1} &= \frac{r^n_k \bar{y}^n_k + r'^n_k \bar{y}'^n_k}{r^n_k+r'^n_k}, \\
\label{eq:updatingTau} \hat{\tau}_{k+1}^2(z^n) &= \frac{1}{(r^n_k+r'^n_k-1)}\left((r^n_k-1)\hat{\tau}^2(z^n)+ (r'^n_k-1)\hat{\tau}_k'^2(z^n) + \frac{r'^n_k r^n_k}{r^n_k+r'^n_k}\left(\bar{y}^n_k -\bar{y}'^n_k \right)^2  \right),
\end{align}
where $\bar{y}^{'n}_k$ is the average of the new $y^{n,i}$'s and $\hat{\tau}^{'2}_k$ is the corresponding sample variance. Note that updating freezes the GP hyperparameters, see Section~\ref{sec:implementation} below.

\subsection{Initialization of $\hat{f}$}\label{sec:initialization}
Initially, we have $r^n = 0$ for all $n$, i.e.~no data to inform us where the tail region may lie. To initialize, the typical solution is to use a small number $N_{init}$ of \emph{pilot} scenarios that are representative of the entire domain $\cZ$, determined by some space-filling algorithm (e.g.~Latin Hypercube sampling, LHS).  Chauvigny et al.~\cite{chauvigny2011fast} provides a more elaborate method using statistical \emph{depth functions} to identify pilot scenarios based on the geometry of $\cZ$.

A small challenge is that LHS and other space-filling approaches are not  directly applicable to a discrete scenario set which is non-uniform in space (see Figure~\ref{fig:contour}).
Instead we develop a minimax-style initialization procedure based on Euclidean distance $\|z-z'\|_2$ and described in Appendix~\ref{App:initial}. Note that the resulting design size $N_{init} = | \cD_0|$ is not fully pre-determined. After determining $\cD_0$, we allocate an equal number of inner simulations to each scenario, i.e.~$r'^n_0 = \Delta r_0/N_{init}$ for all $\{ n : z^n \in \cD_0\}$ to build $\hat{f}_0$. In our case studies below we take $\Delta r_0 = 0.1 \cN$ and $N_{init} = 0.01 N$ so that we spend 10\% of the total budget on the pilots which cover 1\% of the scenarios. The $k=1$ panel of Figure~\ref{fig:movie} illustrates this space-filling procedure by plotting the pilot scenarios (heavy black dots).  It also shows the first round allocation, where locations having $r^n_1>0$ are starting to line up along the initial contour estimate.

\subsection{Implementation Details} \label{sec:implementation}
The two most popular kernels for GP emulation are  the Mat\'ern$-5/2$ family in Equation \eqref{eq:maternkernel2} and the Gaussian family,
\begin{equation}\label{eq:Gsn-kernel}
 g^{Gsn}(h, \theta) \doteq \exp\left(\frac{-h^2}{2 \theta^2}\right).
\end{equation}
Note that the choice of the kernel family modifies the meaning of the hyperparameters, so direct comparison is fraught. In Section~\ref{sec:het-vs-sk} we present numerical evidence that the
 choice of covariance kernel generally has only a minor impact on the resulting emulators $\hat{f}_k$.

The classical method for inferring the hyperparameters $\bm{\theta}$ and process variance $\sigma^2$ is by optimizing the marginal likelihood, either through MLE or penalized MLE, using the likelihood function based on the distributions described in Section \ref{sec:SimpleKriging2}. Either case leads to a nonlinear optimization problem. 
In a sequential setting, the hyperparameters are ideally refitted at each stage to improve the Likelihood function as more data comes in. However, this is computationally impractical since evaluation of the likelihood requires $\mathcal{O}(|\mathcal{D}_k|^3)$.  Alternatives to this are to refit the hyperparameters at certain points in the algorithm, e.g.~after 10\%, 20\%, \ldots, 90\% (or with a nonlinear schedule, such as 2\%, 4\%, 8\%, \ldots) of the budget has been depleted.  The latter scheme has the advantage of refitting more frequently earlier when there is less certainty about the hyperparameters, and when the refits are cheaper.

The above logic also implies preference for algorithms that keep $| \cD_k|$ small. In that sense a more targeted algorithm like ST-GP is preferred over SV-GP that by its nature spreads inner simulations across many scenarios. A related overhead concerns predicting $\hat{f}_k$ on $\cZ$, i.e.~evaluating $m_k(z^{1:N})$ and $s_k(z^{1:N})$, which is required by all the acquisition functions. This carries $\mathcal{O}( | \cD_k|^2 \cdot | \cZ |)$ effort, once again showing the cost of a large design $\cD_k$. It also leads to the trick of reducing $\cZ$ to a candidate set $\cZ^{cand}_k$ by screening scenarios that are far from the tail.  One way to screen is to re-use the weights $W_k(z^n)$ from Equations~\eqref{eq:tmse} and~\eqref{eq:tmse-TVaR} to assess relevance of scenarios. Specifically, in our case studies below we determine the candidate set $\cZ^{cand}_k$ at stage $k$ via
\[ \cZ^{cand}_k \doteq \{z^m \in \cZ : W^{\VaR}_{k}(z^m)/(\sum_{n=1}^N W^{\VaR}_k(z^n)) > 10^{-3}\},\]
for $\VaR$ and
$\cZ^{cand}_k \doteq \{z^n \in \cZ : W_k^{\TVaR}(z^n) > 10^{-3}\}$ for $\TVaR$. Thus only above outer scenarios are considered when evaluating the acquisition functions (methods ST-GP, SE-GP, SV-GP).  As an illustration, on a typical run of the first case study, this screening shrunk the prediction set from $N=10000$ scenarios to $| \cZ^{cand}_1| = 528$ (657 for TVaR) after the first round, reducing the overhead of predicting the GP surrogate on $\cZ^{cand}_1$ by a factor of $\approx 20$.  Furthermore, the candidate sets $|\cZ^{cand}_k|$ shrink as $k$ grows, as the GP better learns the tail scenarios.

In the examples below we used a constant stage-wise budget $\Delta r_k \equiv \Delta r$; an embellished version could easily include changing batch sizes over rounds, for example smaller batches in earlier rounds where the GP has high global uncertainty. Choosing batch size also determines the total number of rounds $K$, influencing GP regression and prediction overhead, as well as the computation cost of SUR acquisition functions. In general we find (cf.~Section \ref{sec:cs1results}) that the algorithms focus on only a small number of scenarios which means that the same $z^n$'s get repeatedly picked by our sequential criteria. As a result, a large batch size, say $\Delta r = 0.1 \cN_1$, offers little performance loss compared to a smaller $\Delta r =0.01 \cN_1$, and reduces the overhead of selecting $z^{k+1}$.

Returning to the GP emulator itself, we found that a simple trend function $\mu(\cdot)$ in \eqref{eq:kriging3} is sufficient for de-trending. While the choice of $\mu(\cdot)$ has little impact on the prediction $\hat{f}_k$, it helps the nonlinear optimization routine in fitting the hyperparameters, and is also beneficial in the very beginning of the sequential stages. Note that the trend modifies somewhat the spatial dependence of the response and hence the $\theta_j$ hyperparameters. A parametric mean function can also be specified via basis functions with respective coefficients to be fitted, so that $\mu(z) = \beta_0  + \sum_{j} \beta_j h_j(z)$.  In this case, the $\beta$'s are estimated simultaneously with the other hyperparameters, an approach known as Universal Kriging~\cite{roustant2012dicekriging}. \rev{In our experience, in most financial settings $\mu(\cdot)$ can be either easily identified as the intrinsic value of the portfolio (like in our first case study), or taken to be a constant $\beta_0$ (fitted as part of the GP, see the second case study).}

\subsection{Comparison with other Approaches}\label{sec:benchmarks}
To benchmark the proposed algorithms  defined in Section \ref{sec:sequential}, we compare to several alternatives. We primarily focus on other regression-based approaches and concentrate on quantifying the role of (i) adaptive budget allocation; and (ii) sequential approaches as compared to simpler 1-, 2- or 3-stage methods. A summary of these benchmarks, as well as the procedures discussed previously in the section are given in Table \ref{table:procedures1}.

\begin{enumerate}
\item LB: a perfect information method used as a Lower Bound. We assume that the tail/quantile scenarios are known a priori, and look to minimize MSE of $\hat{R}$. For $\VaR$ this corresponds to minimizing the posterior variance at $\cQ$ which is trivially achieved by allocating the entire budget $\cN$ to the true quantile scenario. In that case there is no GP surrogate and the estimator variance is the Monte Carlo averaging error, i.e.~$\tau^2( \cQ)/\cN$. For $\TVaR_\a$, we allocate the budget $\cN$ uniformly among the true tail $\{z^n : f(z^n) \leq f^{(\a N)}\}$ and fit a GP to the results.  

\item A3-GP: The Adaptive 3-stage algorithm of Liu and Staum~\cite{liu2010stochastic}. The first stage simulates from space-filling pilot scenarios similar to our approach. The second stage allocates uniformly across a screened candidate set $\cZ^{cand}_1$.  The third stage then solves for $r'^{n}$ to minimize variance of $\hat{R}_2$ like in SV-GP and Lemma \ref{lemma:inverseApprox}. We follow their suggestions, with $0.02N$ stage-2 design locations, and stage-3 budget of $\cN_2=0.7\cN$.

\item U2-GP: A 2-stage approach. After a space-filling first stage as in Section \ref{sec:initialization}, the remaining budget is allocated uniformly among the lowest $2 \a N$ scenarios. This is the simplest version of an adaptive allocation with $r_1^{'n} = \cN_1/(2 \a N) \1_{ \{ m_0(z^n) \leq m_0^{(2\a N)}\} }$. Note that U2-GP can be seen as a version of SR-GP with $K=2$ rounds and $L=1, U=2\a N$. Comparing to this approach quantifies the gain of multi-stage procedures.

\item  U1-GP: A 1-stage approach which uniformly allocates $\cN/|\cZ|$ to each $z \in \cZ$, fits a GP surrogate $\hat{f}_0$ to the resulting design and uses $ m_0^{1:N}$ to estimate $\hat{R}$. This is the crudest comparator that has no adaptive allocation but still employs spatial smoothing.  We find that fitting a GP to the entire collection of outputs is computationally unfeasible for large scenario sets, $N \gg 1000$, so instead we fit to $\{z^n : y^n \leq y^{(2000)}\}$, which yields similar overhead times to ST-GP and still offers significant improvement over no GP smoothing.

\end{enumerate}

\begin{table}[ht]\small
\centering
\begin{tabular}{c|l|l}\small
Name & Description & Parameters\\ \hline
LB & Monte Carlo under perfect information & $\TVaR$ allocates uniformly to true tail scenarios \\
ST-GP & Targeted MSE criterion \eqref{eq:timse} & $\varepsilon_k = s(\hat{R}^{HD}_k)$\\
SE-GP & Expected Level Improvement criterion \eqref{eq:sur} &  \\
SV-GP & Batch Variance minimization (Section \ref{sec:dad}) & \\
SR-GP-1 & Uniform on $\{z^n : m_k^{(L)} \leq m(z^n) \leq m_k^{(U)} \}$ & L=50, U=50 ($\VaR$), L=1,U=50 ($\TVaR$) \\
SR-GP-2 & & L=26, U=75 ($\VaR$), L=1, U=75 (TVaR) \\
A3-GP & 3-stage from \cite{liu2010stochastic}& 70\% of budget for stage 3; $r_2^n \equiv 10$\\
U2-GP & Two-stage & SR-GP with $L=1, U=100$ and $K=2$\\
U1-GP & Uniform allocation & GP fit to the bottom 20\% of $z^n$'s based on $\bar{y}^{1:N}_1$ \\
U1-SA & Uniform allocation & Uses sample averages $\bar{y}^{1:N}$ as output\\
BR-SA & R\&S algorithm from \cite{broadie2011efficient} &
\end{tabular}
\caption{Comparator approaches as described in Section \ref{sec:benchmarks}.  All approaches (except U1-GP) use the same initialization procedure described in Algorithm \ref{alg:initialFit} with $N_{init}=0.01 N$ stage-1 scenarios.  The budgeting parameters for the sequential methods are chosen to yield $K=100$.}\label{table:procedures1}
\end{table}

We also consider two non-GP methods to investigate the importance of spatial smoothing. The first is called U1-SA, which is the same as U1-GP but uses the sample means $\bar{y}^n$ and empirical variances $\hat{\tau}^2(z^n)$ in place of GP-based posterior means. This is the ``vanilla'' nested simulation method.  The second is a  ranking-and-selection (R\&S) approach, introduced in Broadie et al.~\cite{broadie2011efficient} for the purpose of comparing $f^{1:N}$ with a given level $L$. With our notation, at stage $k$ this BR-SA algorithm allocates $\Delta r_k$ simulations to the scenario $z^{k+1}$ which minimizes
$\cH^{BR}_k(z^n) \doteq  r_k^n \cdot | \bar{y}^n_k - L| / \tau(z^n)$. The above acquisition function compares the scenario sample average to $L$, normalizing by the respective sample uncertainty $\tau(z^n)/r_k^n$ which resembles $\cH^{ECI}$ in \eqref{eq:sur} but without having an emulator. For our purposes we replace $L$ with $\hat{R}_k^{HD}$. Also, we follow the suggestion of \cite{broadie2011efficient} to estimate local noise variance by a weighted $\tilde{\tau}(z^n)$ based on the aggregate variances gathered, corresponding to a very simple homoskedastic smoother:
\begin{equation*}
\tilde{\tau}^{2}_k(z^n) = \frac{r^n_k}{r^n_k+\tilde{r}} \hat{\tau}^2_k(z^n) + \frac{\tilde{r}}{r^n_k + \tilde{r}} \bar{\tau}^2_k,
\end{equation*}
where $\bar{\tau}^2_k = \frac{1}{N'} \sum_{n=1}^{N'} \hat{\tau}^2_k(z^n)$ is the average over all empirical variances.  Here, $\tilde{r}$ is the smoothing parameter, and we take $\tilde{r}=5$ following \cite{broadie2011efficient}.  A comparison with these non-GP benchmarks is discussed in Section \ref{sec:gains}.

\begin{remark}
  Note that following above ``recipe'' one may combine any of the introduced acquisition functions with a sample-averages based approach, yielding, say, ST-SA or SV-SA approaches. As explained, a limitation of not having an emulator is inability to properly forecast $\tau^2(z)$ at scenarios that do not have enough inner simulations.  From the other direction, we recall the Least Squares Monte Carlo methods (like in Bauer et al.~\cite{bauer2012calculation}) which use spatial smoothing but a non-sequential allocation. We do not compare to the latter since our focus is on experimental design rather than purely the regression/smoothing step.
\end{remark}

\section{Case Study: Black Scholes Option Portfolio}\label{sec:casestudy-BS}
We begin with a two--dimensional example where $f(z)$ can be computed exactly. Working in 2-D with a known $f$
allows for easy visualization of the algorithms, and provides exact error calculations. Consider a portfolio whose value is driven by two risky assets $Z_t \equiv (S^1_t, S^2_t)$ that have Geometric Brownian motion dynamics:
\begin{align*}
dS^1_t &= \beta S^1_t\, dt + \s_{1} S^1_t \, dW^{(1)}_t,\\
dS^2_t &= \beta S^2_t\, dt + \s_{2} S^2_t \, dW^{(2)}_t.
\end{align*}
Above the $W^{i}$ are correlated Brownian motions under the risk neutral measure $\Q$ with $d\langle W^{(1)}, W^{(2)}\rangle_t = \rho \, dt$ and $\beta$ is the constant interest rate. Our model parameters are summarized in Table \ref{table:VaR-portfolio}. The portfolio consists of Call options: long  100 $K_1=40-$strike Calls on $S^1$ and short 50 $K_2=85-$strike Calls on $S^2$.
We work with a risk horizon $T=1$ year, and target $\VaR_\a$ and $\TVaR_\a$ risk measures with $\a = 0.005$. By risk-neutral pricing, the value of the portfolio at $T$ and starting with $z \equiv (z_1, z_2) \in \mathbb{R}^2_+$ is
\begin{align}
f(z) \doteq \E^{\Q}\left[100 e^{-\beta(T_1-T)} \left(S^1_{T_1}- 40\right)_+ - 50 e^{-\beta (T_2-T)}\left(S^2_{T_2}- 85\right)_+ \big| (S^1_T, S^2_T) = z\right]. \label{eq:Pi}
\end{align}

\begin{table}[!ht]
\centering
\begin{tabular}{cccccc}
 Asset & Position & Initial Price $S^i_0$ & Strike $K_i$ & Maturity  $T_i$ & Volatility $\sigma_i$ \\
\hline $S^1$ & 100 & 50 & 40 & 2 & 25\%\\
$S^2$ & -50 & 80 & 85 & 3 & 35\%\\ \hline
 &
\multicolumn{2}{c}{ Correlation $\rho = 0.3$} & \multicolumn{3}{c}{ Interest Rate $\beta = 0.04$} \\ \hline
\end{tabular}
\caption{\label{table:VaR-portfolio} Parameters of the 2-D Case Study for a Black-Scholes portfolio on stocks $S^1$ and $S^2.$ }
\end{table}

 A contour plot of $f$, obtained from the Black Scholes formula, is shown in Figure~\ref{fig:contour}. Observe that $f(z)$ is most negative in the upper-left corner, while $\tau^2(z)$ is increasing in both $z_1$ and $z_2$, so that additional focus may be spent toward the upper-right corner where uncertainty is largest and the scenarios are more scarce. The figure also exhibits the scenario set $\cZ$ which we generated as a \emph{fixed} sample from the bivariate log-normal distribution of $(S^1_T,S^2_T)$. Note that here $\cZ$ was generated using $\mathbb{Q}$-distribution, so the dynamics of the factors on $[0,T]$ and $[T, T_i]$ are the same, i.e.~the physical and risk-neutral measures coincide. This is solely for a simpler presentation of the case-study; in our experience the role of $\cZ$ is secondary to the other considerations. The red line in Figure~\ref{fig:contour} shows the true (relative to the shown $\cZ$) quantile loss $f^{(\a N)} = f(\cQ) = -4052.02$, indicating the region that the methods are supposed to target.

\begin{figure}[!htb]
  \centering
  \includegraphics[scale=0.35,trim=0.1in 0.25in 0.1in 0.1in]{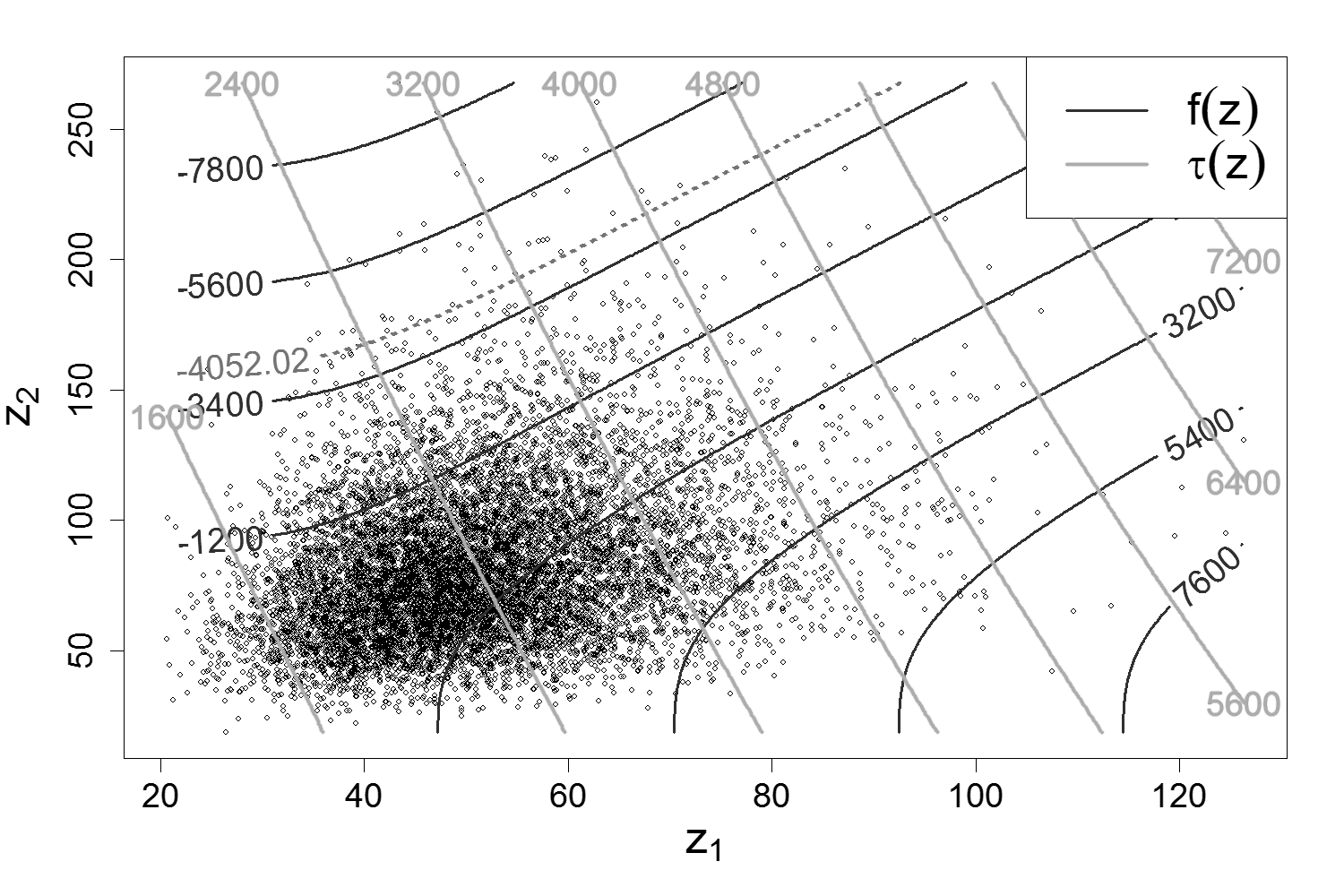}
  \caption{The true portfolio value $f(z)$ for the 2-D Black-Scholes option case study, along with the respective simulation standard deviation $\tau(z)$. In red is the true quantile loss $f^{(50)}=-4052.02$. The point cloud represents the scenario set $\cZ = \{z^n, n=1,\ldots, 10000\}$.}\label{fig:contour}
\end{figure}

In this case study, the outputs $y^{n,i}$ are obtained by simulating the log-normal values of $S^1_2, S^2_3$ conditional on $(S^1_1, S^2_1)=(z_1,z_2)$ and plugging them into the payoff
\begin{equation}\label{eq:cs1payoff}
Y = 100 e^{-\beta(T_1-T)} \left(S^1_{T_1}- 40\right)_+ - 50 e^{-\beta (T_2-T)}\left(S^2_{T_2}- 85\right)_+,
\end{equation}
where $T=1$, $T_1 = 2$ and $T_2 = 3$. We proceed to compare all the methods listed in Table~\ref{table:procedures1} using the global experiment parameters of  $N=10^4$, $\cN=10^4$, so that $\a N = 50$. \blue{Note that since $\cN = N$,  U1-GP allocates a single inner simulation per scenario. The fully sequential schemes use $N_{init} = 0.01 N = 100$ with $r_0^n = 10$ during initialization (i.e.~$\cN_1 = \cN - r_0 N_{init} = 9000$), and $K=100$ stages, so that $\Delta r_k = \cN_1/K = 90$ during all other rounds. The GP hyperparameters are refitted at stage $k=10, 20, \ldots, 100$ for each sequential method, and after each stage for A3-GP and U2-GP.}

Since we have the exact value-at-risk $R^{\VaR} \doteq f^{(50)}$ and Tail VaR $R^{\TVaR} \doteq \frac{1}{50} \sum_{n=1}^{50} f^{(n)}$, the bias and squared error (SE) of a given final estimate $\hat{R}_K$ for either $\VaR$ or $\TVaR$ are simply
\begin{align*}
\text{bias}(\hat{R}_K) \doteq \hat{R}_K - R, \qquad\text{and} \qquad
\text{SE}(\hat{R}_K) \doteq \left(\hat{R}_K - R\right)^2.
\end{align*}

For the mean function $\mu(\cdot)$ of the GP in \eqref{eq:kriging3} we took the intrinsic value of the portfolio at $T$, that is,
\begin{equation}\label{eq:meanfunction}
\mu(z_1,z_2) = 100e^{-0.04}(z_1-40)_+-50e^{-2\cdot0.04}(z_2-85)_+.
\end{equation}
The above choice was compared to a simpler constant mean function $\mu(z) = \beta_0$ with $\beta_0$ fitted along with other GP hyperparameters; the latter yielded no statistically significant differences in the resulting performance of the schemes although it did increase the GP uncertainties $s_k(z^n)$.

\subsection{Comparing Algorithms}\label{sec:cs1results}
Assessment of the algorithms is done through performing 100 macro-replications $m=1, \ldots, 100$ (i.e.~repeating the whole procedure with a fresh set of sampled $y^{n,i}$'s) for both VaR and TVaR with the fixed set $\cZ$, obtaining estimates $\hat{R}^{[1]}_K, \ldots, \hat{R}^{[100]}_K$. This yields a true sampling distribution of the estimators from various algorithms, controlling for the intrinsic variability of inner simulations. These outputs are used to compute bias, variance, and SE, and the results are illustrated through several tables and figures.  Detailed visualization of sequential methods is provided in Section~\ref{sec:surcomparison}.

Table \ref{table:CS1results} reports (i) 
$SD(\hat{R}_K^{[1:100]})$ which is the empirical standard deviation of $\hat{R}_K$ over the 100 macro replications, (ii) the average estimated GP posterior uncertainty associated to $\hat{R}_K$: $\overline{s} \doteq \frac{1}{100} \sum_{m=1}^{100} s( \hat{R}_K^{[m]})$,  (iii) the root mean squared error relative to the ground truth,  $RMSE \doteq \sqrt{\frac{1}{100} \sum_{m=1}^{100}  (\hat{R}^{[m]}_K - R)^2} $ along with (iv) average $|\cD_K|$, the number of distinct outer scenarios chosen by the algorithm.  Note that for methods other than LB and U1-GP, $|\cD_K|-100$ (where $100=N_{init}$ is the number of pilot stage-0 scenarios) is the number of locations chosen after initialization. Figures~\ref{fig:VaRresults1} and~\ref{fig:TVaRresults1} show boxplots of the resulting distributions for $\hat{R}^{[m]}_K$ and $s(\hat{R}^{[m]}_K)$, where the horizontal line is the true risk measure $R$ obtained from the analytic Black-Scholes computation. \rev{For $\VaR$ we recall that our estimators are expected to converge to the Harrell-Davis estimator $R^{\VaR, HD}$ which is used as the ground truth in our discussion below. In the present example the difference relative to the true quantile was about 20, $f^{(\a N)} = -4052.02, R^{HD} = -4032.21$.}

\begin{table}[!ht]
\centering
\begin{tabular}{l|rrrr || rrrr}
& \multicolumn{4}{c||}{$\VaR_{0.005}$} &  \multicolumn{4}{c}{$\TVaR_{0.005}$} \\
& $SD(\hat{R}^{HD}_K)$ & \multicolumn{1}{c}{$\overline{s}$} & RMSE & $|\cD_K|$ & $SD(\hat{R}_K)$ & \multicolumn{1}{c}{$\overline{s}$} & RMSE &  $|\cD_K|$ \\ \hline
LB  &  44.35  &  40.42  &  46.77  &  1  & 47.29  &  53.48  &  47.42  &  1  \\
ST-GP  &  50.57  &  48.55  &  50.59  &  121.52  &  59.12  &  55.17  &  61.46  &  118.27  \\
SE-GP  &  50.48  &  50.71  &  74.03  &  116.12   &  93.36  &  87.83  &  95.70  &  111.79  \\
SV-GP  &  56.50  &  48.28  &  60.53  &  305.03 &  55.78  &  54.76  &  56.65  &  163.08  \\
SR-GP-1  &  63.27  &  61.90  &  69.74  &  112.43  &  61.48  &  55.34  &  61.86  &  165.27  \\
SR-GP-2  &  50.45  &  49.82  &  50.52  &  180.97   &  61.66  &  61.56  &  62.13  &  193.46  \\
A3-GP  &  61.07  &  54.18  &  60.83  &  292.83 &  63.57  &  59.92  &  63.18  &  297.44  \\
U2-GP  &  68.76  &  55.91  &  68.47  &  194.55 &  64.92  &  67.77  &  64.87  &  194.64  \\
U1-GP  &  695.33  &  560.52  &  2965.05  &  $10^4$ & 909.17  &  700.43  &  3003.07  &  $10^4$  \\
\end{tabular}
\caption{For the 2-D Black Scholes case study, sample standard deviation (SD) over 100 macro-replications, average GP posterior standard deviation $\overline{s}$, and RMSE for each approach for $\hat{R}_K$, as well as average final design sizes. Description of the methods is in Table \ref{table:procedures1}.  }\label{table:CS1results}
\end{table}

\begin{figure}[!ht]
  \centering
  \includegraphics[scale=0.30]{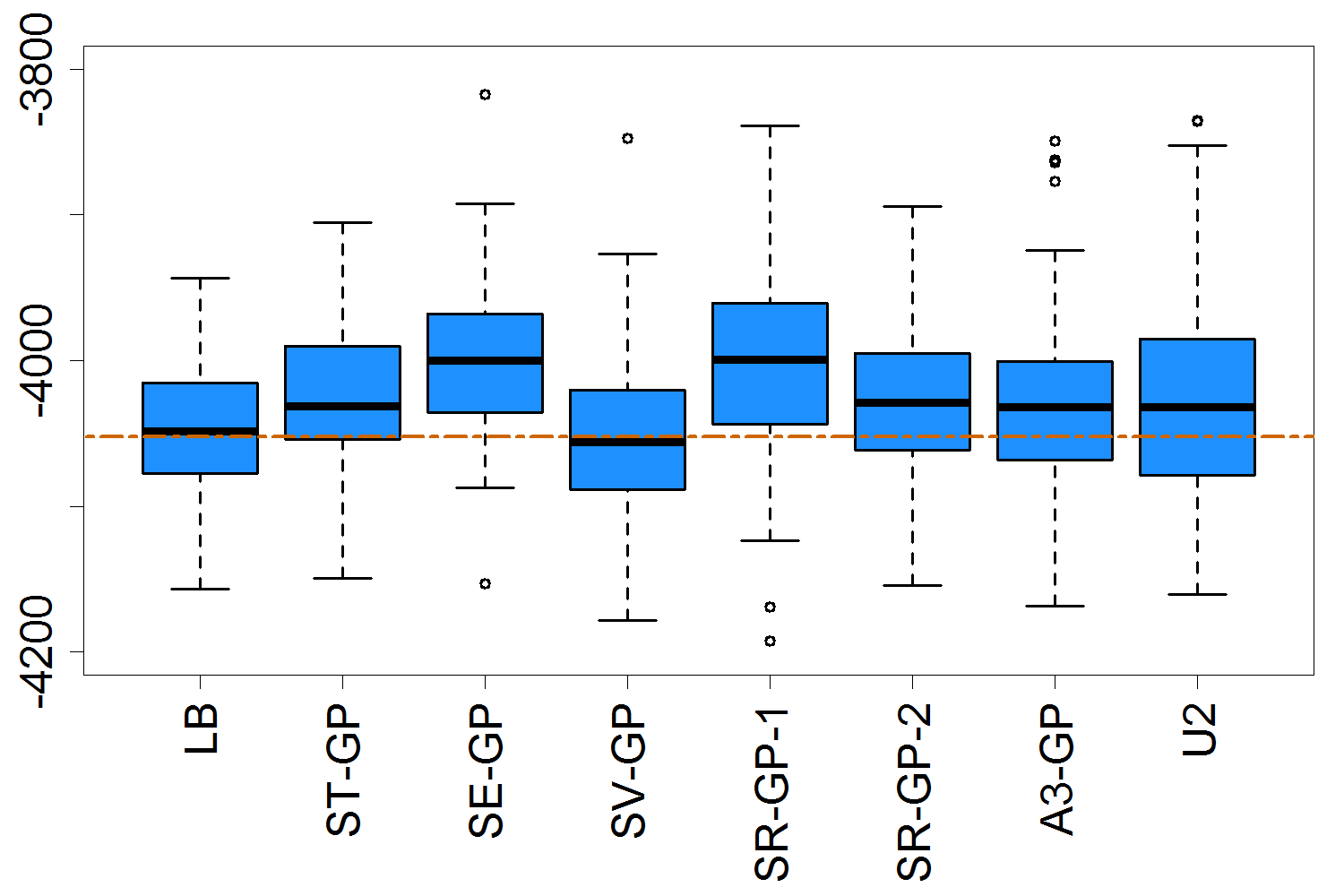}
  \includegraphics[scale=0.30]{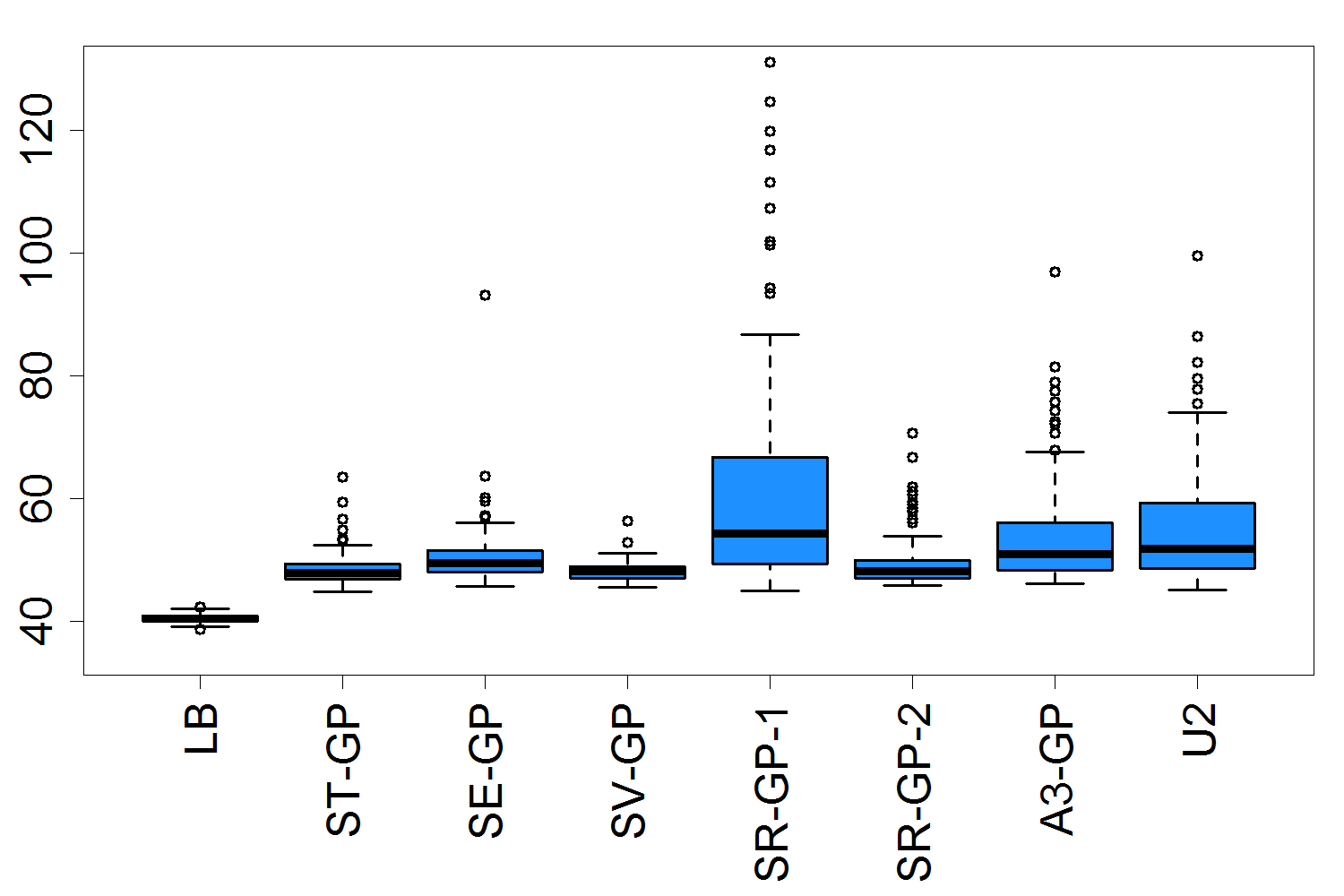}
  \caption{VaR estimation for the 2-D Black Scholes case study. Left boxplots display the distribution of the final $\hat{R}^{\VaR}_K$ estimates; on the right is distribution of the corresponding GP standard deviation $s(\hat{R}^{\VaR}_K)$. Results are based  on 100 macro-replications for each approach.}\label{fig:VaRresults1}
\end{figure}

\begin{figure}[!ht]
  \centering
  \includegraphics[scale=0.30]{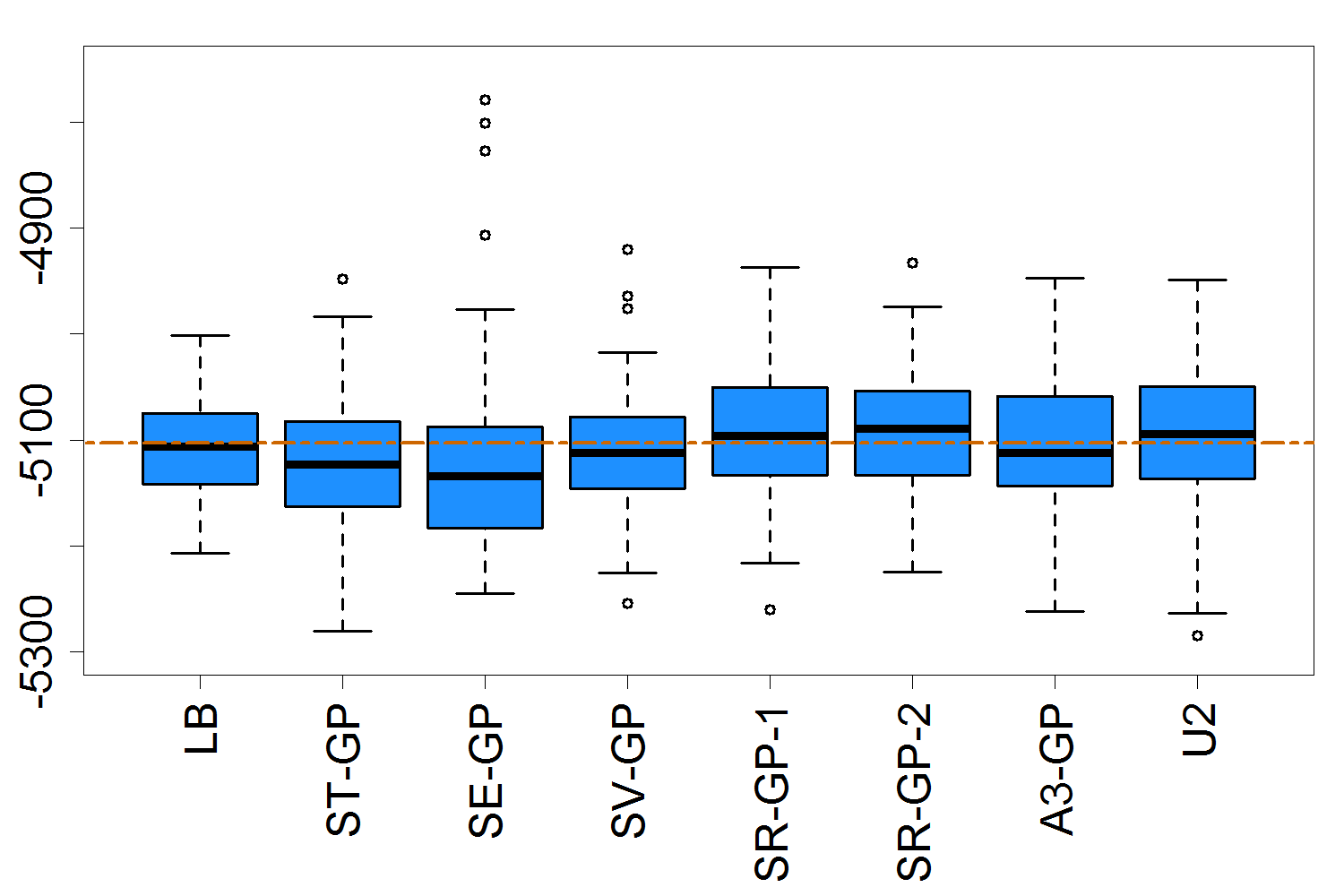}
  \includegraphics[scale=0.30]{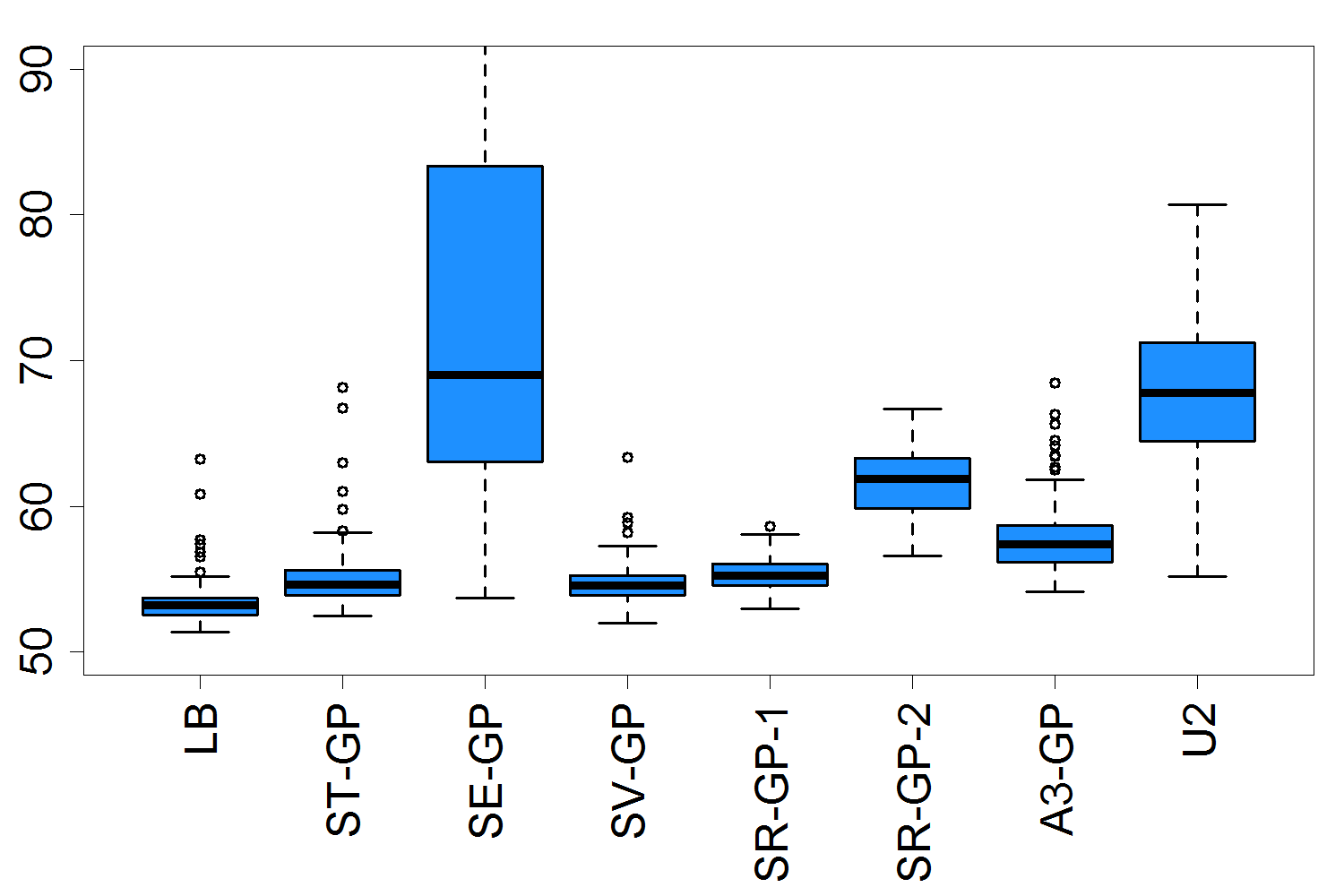}
  \caption{TVaR estimation for the 2-D Black Scholes case study. Left boxplots display the distribution of the final $\hat{R}^{\TVaR}_K$ estimates; on the right is corresponding GP standard deviation $s(\hat{R}^{\TVaR}_K)$. Results are based  on 100 macro-replications for each approach.}\label{fig:TVaRresults1}
\end{figure}

We begin discussion with Value-at-Risk.  The uniform allocation (U1-GP) results  are not plotted since they are far beyond axis limits; the tables show that its RMSE is more than 60 times larger than that of LB. All other methods offer clear improvement over traditional nested Monte Carlo. Comparing in the order of U1-GP, U2-GP and A3-GP roughly shows the improvement thanks to increasing the number of stages, with the latter gaining about 20\% in RMSE reduction relative to U2-GP. Going to a sequential scheme with 100 stages we gain another 15-20\% improvement -- compare ST-GP to A3-GP.

A few other comparisons can be made. \rev{Between the two SUR-type strategies, SE-GP does not perform well relative to ST-GP, suggesting that the latter acquisition function is more appropriate for quantile learning, likely due to its ability to account for uncertainty in the estimated $\hat{R}_K^{HD}$.   For the rank based methods, SR-GP-2 doing well (competitive with ST-GP) shows that the choices of $L=26, U=75$ work well for $\VaR_{0.005}$ in this case study, while the poor performance of SR-GP-1 is indicative of its narrow focus on the empirical quantile. By being overly aggressive, SR-GP-1 may completely miss some scenarios that constitute  $R^{HD}$, which contributes to more bias and dispersion across runs.}

Overall, in terms of RMSE, the best performing are ST-GP, SV-GP and SR-GP-2.
From the boxplots we see that all methods (except LB which as expected is unbiased) are biased high, i.e.~under-report capital requirements. Of note, SE-GP and SR-GP-1 have relatively high biases and also larger $SD(\hat{R}_K)$. Note that without spatial smoothing, quantile estimates are biased \emph{low} since empirical quantiles are more extreme relative to the ground truth. Through spatial averaging, the emulator pulls the bias the other way.

Table~\ref{table:CS1results} also compares $s(\hat{R}^{[m]}_K)$ against $SD(\hat{R}_K)$. The former is the internal emulator-based estimate of the standard error for $\hat{R}$, while the latter is the observed sampling standard deviation across the macro-replications. There is no way to calculate $SD(\hat{R}_K)$ a priori, and we hope that $s(\hat{R}_K)$ can act as a proxy. This ability to accurately report standard errors is an important part of the emulation. Of course, $s(\hat{R}^{[m]}_K)$ is itself random, so in the Table we report its average $\bar{s}$ and Figure~\ref{fig:VaRresults1} shows its own sampling distribution. Thus, the goal is to have $s(\hat{R}^{[m]}_K)$ stable across runs and with mean close to $SD(\hat{R}_K)$. We do observe that all methods have $\overline{s}$ reasonably close to $SD(\hat{R}_K^{[1:100]})$, with the biggest discrepancies occurring with SV-GP and A3-GP.  Both of these methods are designed to minimize $s(\hat{R}_k)$, which may cause these values to be biased low. From the boxplots we note that SV-GP has the most stable $s(\hat{R}_K)$ across runs, while SR-GP-1, A3-GP, and U2-GP offer least reliable standard error estimates.

For Tail Value-at-Risk, the results are nearly identical. We again witness poor performance of SE-GP (RMSE 2.01 times larger than LB) which confirms that the respective acquisition function $\cH^{ECI}$ is inadequate. Also note that in the TVaR context, the more aggressive SR-GP-1 outperforms SR-GP-2. This illustrates that SR-GP requires careful fine-tuning of the $L$ and $U$ values.

\begin{figure}[!ht]
  \centering
  \begin{tabular}{m{0.07\textwidth}m{0.18\textwidth}m{0.15\textwidth}m{0.15\textwidth}m{0.15\textwidth}m{0.15\textwidth}}
 &  $\qquad\;\;$ ST-GP & $\quad$ SE-GP & $\quad$ SV-GP & SR-GP-1 & SR-GP-2\\
$\VaR_{\a}$ &  \includegraphics[scale=0.28,trim=0.1in 0in 0.1in 0in]{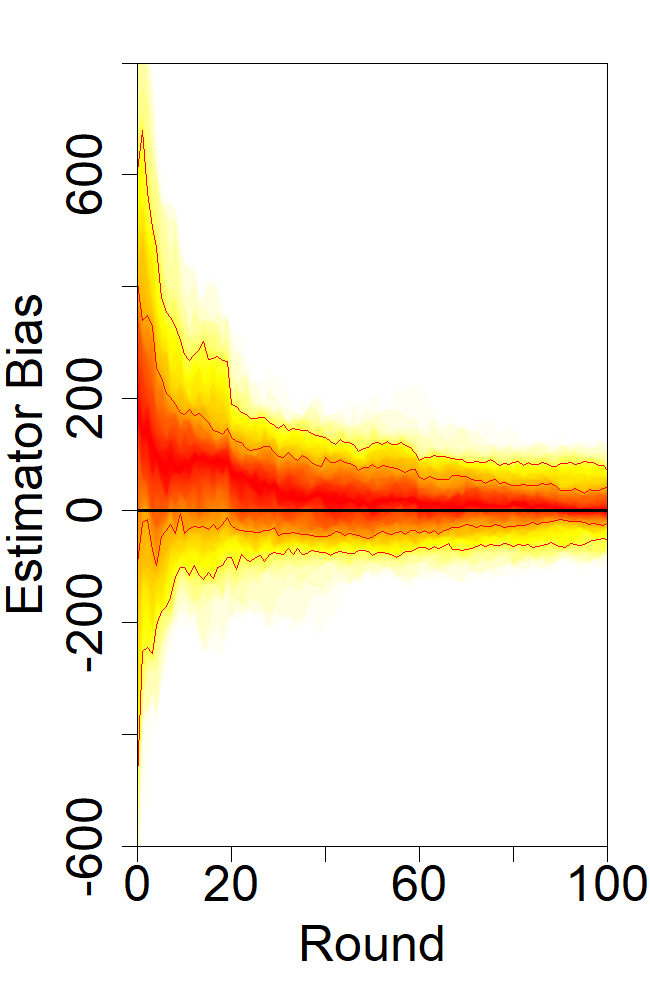}&
  \includegraphics[scale=0.28,trim=0.2in 0in 0.2in 0in]{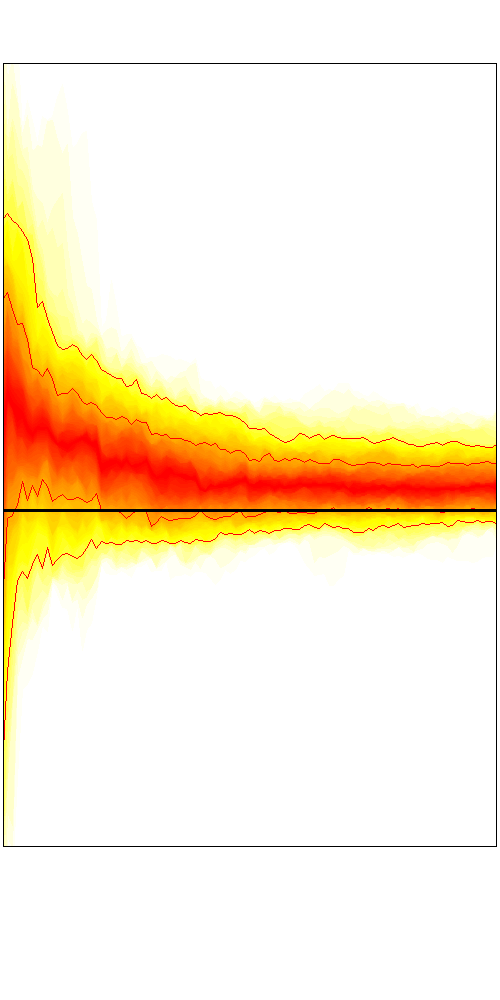}&
  \includegraphics[scale=0.28,trim=0.4in 0in 0.4in 0in]{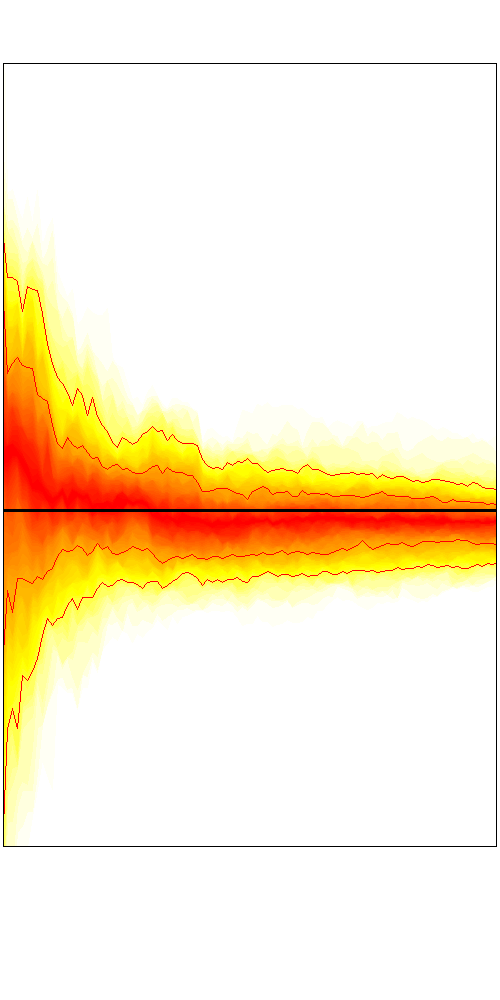}&
      \includegraphics[scale=0.28,trim=0.4in 0in 0.4in 0in]{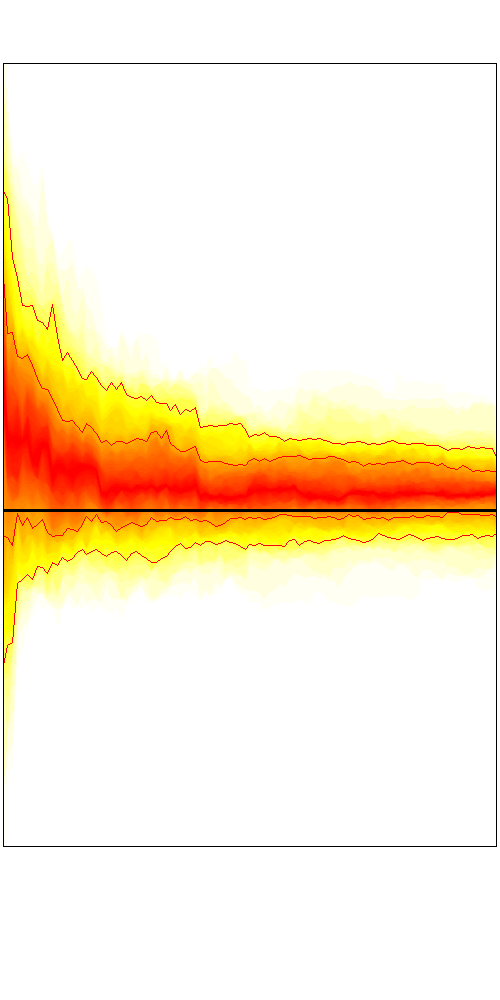}&
      \includegraphics[scale=0.28,trim=0.4in 0in 0.4in 0in]{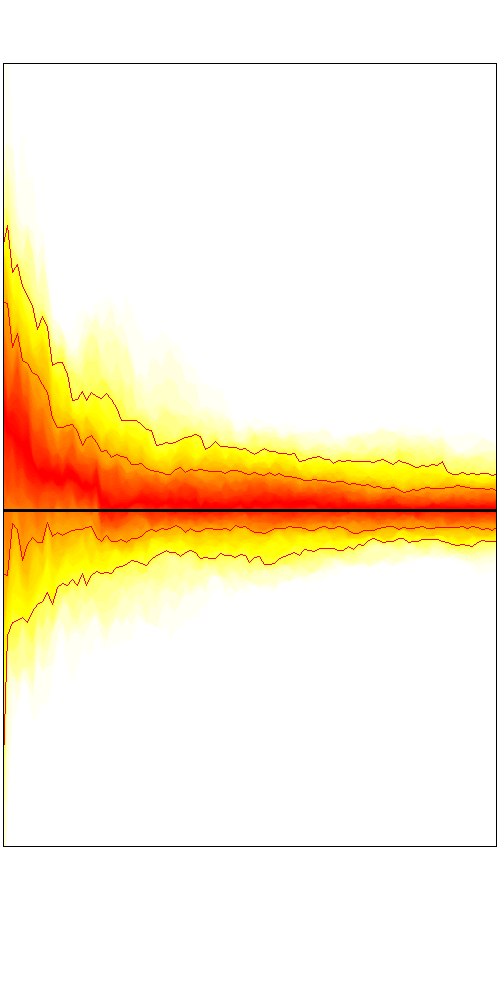} \\
$\TVaR_{\a}$  &       \includegraphics[scale=0.28,trim=0.1in 0.2in 0.1in 0in]{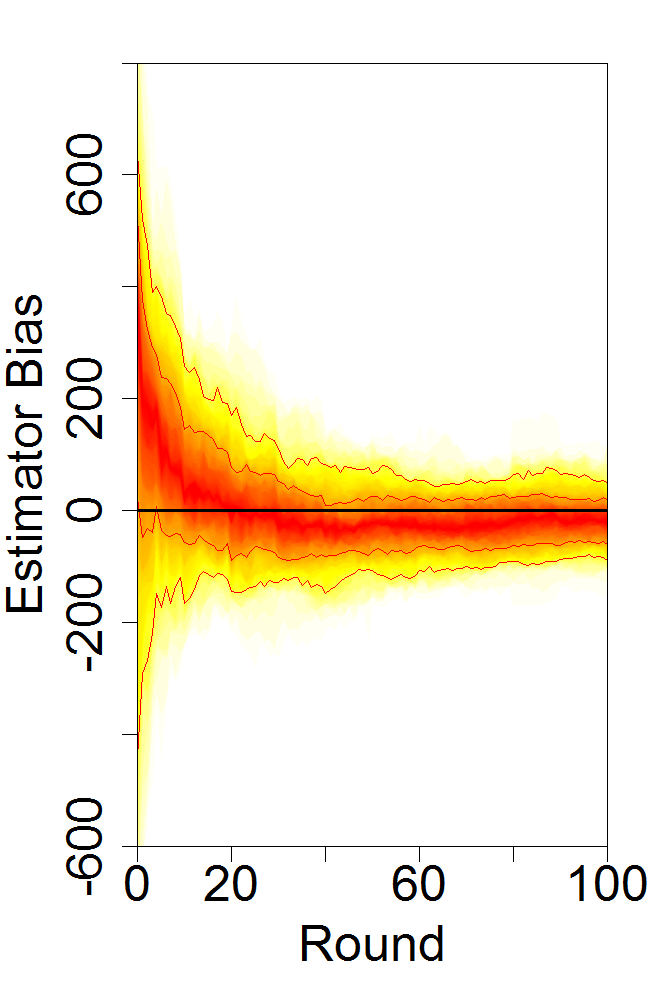}&
  \includegraphics[scale=0.28,trim=0.2in 0.2in 0.2in 0in]{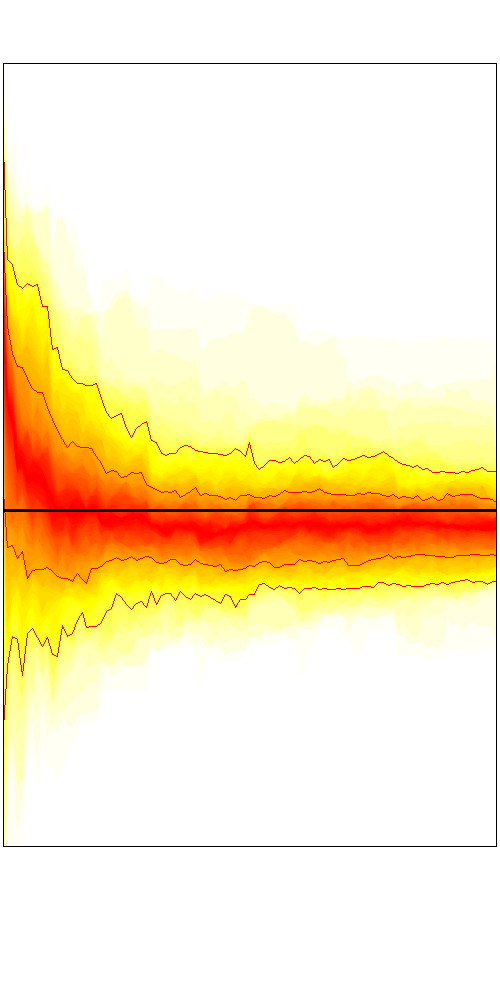}&
  \includegraphics[scale=0.28,trim=0.4in 0.2in 0.4in 0in]{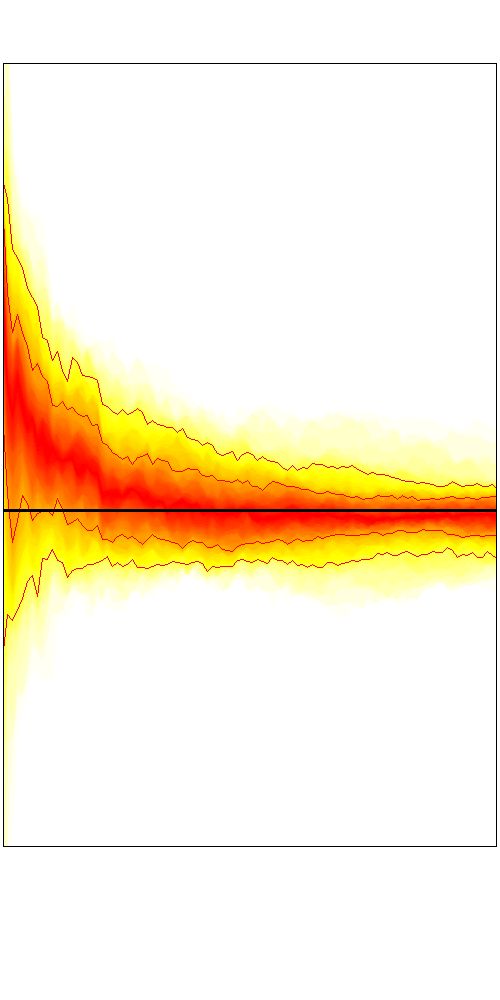}&
      \includegraphics[scale=0.28,trim=0.4in 0.2in 0.4in 0in]{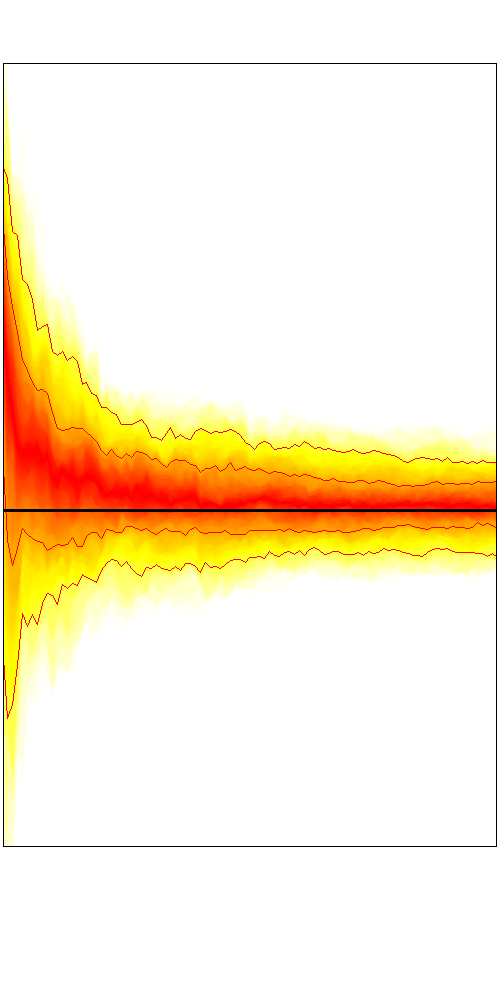}&
      \includegraphics[scale=0.28,trim=0.4in 0.2in 0.4in 0in]{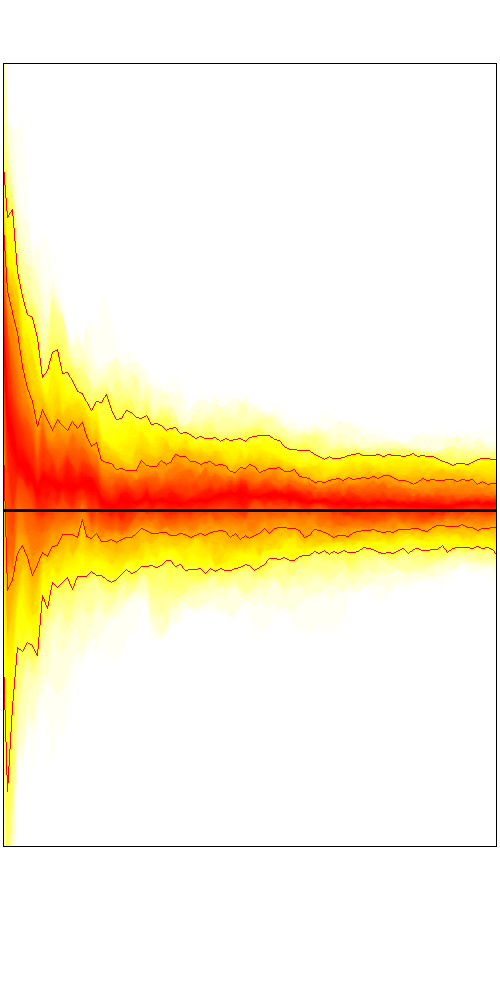}
  \end{tabular}
  \caption{For the 2-D Black Scholes case study, fan plots describing evolution of $\hat{R}^{HD,\VaR}_k$ (top row) and $\hat{R}^{\TVaR}_K$ (bottom) as budget is spent.  Red lines correspond to the 0.1, 0.25, 0.5, 0.75, and 0.9 componentwise quantiles of $\hat{R}^{[m]}_k$ over the 100 algorithm runs.}\label{fig:VaRFanPlot}
\end{figure}

{A final visual comparison of the sequential methods is provided through fan plots in Figure \ref{fig:VaRFanPlot} which illustrates evolution of the estimated risk measure $\hat{R}_k$ as the rounds $k$ progress. This is one of the major advantages of a sequential procedure which allows ``online'' use of the algorithm. Thus the user can monitor $\hat{R}_k$ for example to judge the convergence, adaptively stop the simulations, or report interim estimates. The fan plots show quantiles (in terms of $m$) over macro replications of $R^{[m]}_k$ as a function of $k$.} \rev{Most methods are initially biased high, the bias vanishes as $k$ increases and sampling variance decreases. SV-GP reduces bias the fastest. Of note, SR-GP-1 does not converge on some runs, highlighting the importance of exploration. }

\blue{For $\TVaR$ we observe that $bias(\hat{R}_k)$ is generally low already after a few rounds, indicating the easier task of estimating a level-set compared to estimating a quantile. We also see that ST-GP succeeds in learning $\TVaR$ very quickly initially but then its performance plateaus; by $k=100$ SV-GP achieves better bias and  lower $SD(R_k)$. The worse performance of SE-GP for $\TVaR$ , especially in terms of very high $SD(R_K)$ confirms the previous discussion.}

To check the effect of the batch size $\Delta r$, we re-ran ST-GP with $\Delta r = 0.001 \cN_1$ (i.e.~9 new inner simulations per round), and observed $\bar{s}$ and RMSE to be 48.37 and  59.22 respectively.  These values are statistically indistinguishable from the same ST-GP scheme with $\Delta r = 0.01 \cN_1$.  In other words, a batch size of $\Delta r = 0.001 \cN_1$ (i.e.~$K=1000$ rounds) versus $\Delta r = 0.01 \cN_1$ does not provide a significant improvement in this experiment.  We found that the smaller batches ended up yielding a procedure that picked the same scenario several times in a row, so that it behaved similarly to a larger batch size anyway.  

\subsection{Comparing Sequential Schemes}\label{sec:surcomparison}

To complement Figure~\ref{fig:acquisitionplot} that gives a snapshot of scenarios targeted by different schemes in a fixed dataset,
Figure~\ref{fig:replicationplot} shows the overall budget allocation  $r^{1:N}_K$ produced by the ST-GP, SE-GP and SV-GP at the end of a representative run. For $\VaR_{0.005}$, ST-GP and SE-GP perform similarly with concentration of effort around rank $50 = \a N$.  For reference, 93\%, 73\% and 99.1\% of the $\cN_1$ budget after initialization was allocated to $\{z : f^{(25)} \leq f(z^n) \leq f^{(75)}\}$ for ST-GP, SE-GP and SV-GP respectively, with maximum replication amounts $\max_n r^n_K$ of 3790, 1260 and 990. This highlights the high degree of adaptivity with up to 30\% of the budget spent on a single scenario (comparable to the LB benchmark which spends 100\% on the quantile scenario).

For TVaR, 95\%, 100\% and 98.1\% of the $\cN_1$ budget was spent among $\{z : f^{(1)} \leq f(z^n) \leq f^{(50)}\}$ for ST-GP, SE-GP and SV-GP respectively with maximum allocations of 1890, 4600 and 531. In particular, ST-GP and SV-GP succeed in identifying and sampling (with more or less comparable $r^n_K$'s) from nearly all scenarios in the left tail, while SE-GP leads to a hit-and-miss design as far as the true locations of $f^{(1:50)}$ are concerned. Although all of SE-GP samples were in the true tail, its design tended to be extremely concentrated (only about a dozen non-pilot scenarios added), creating small spatial clusters with a single scenario explored in each cluster.

\begin{figure}[!ht]
  \centering
  \begin{tabular}{cc}
  VaR & TVaR\\
  \includegraphics[scale=0.29]{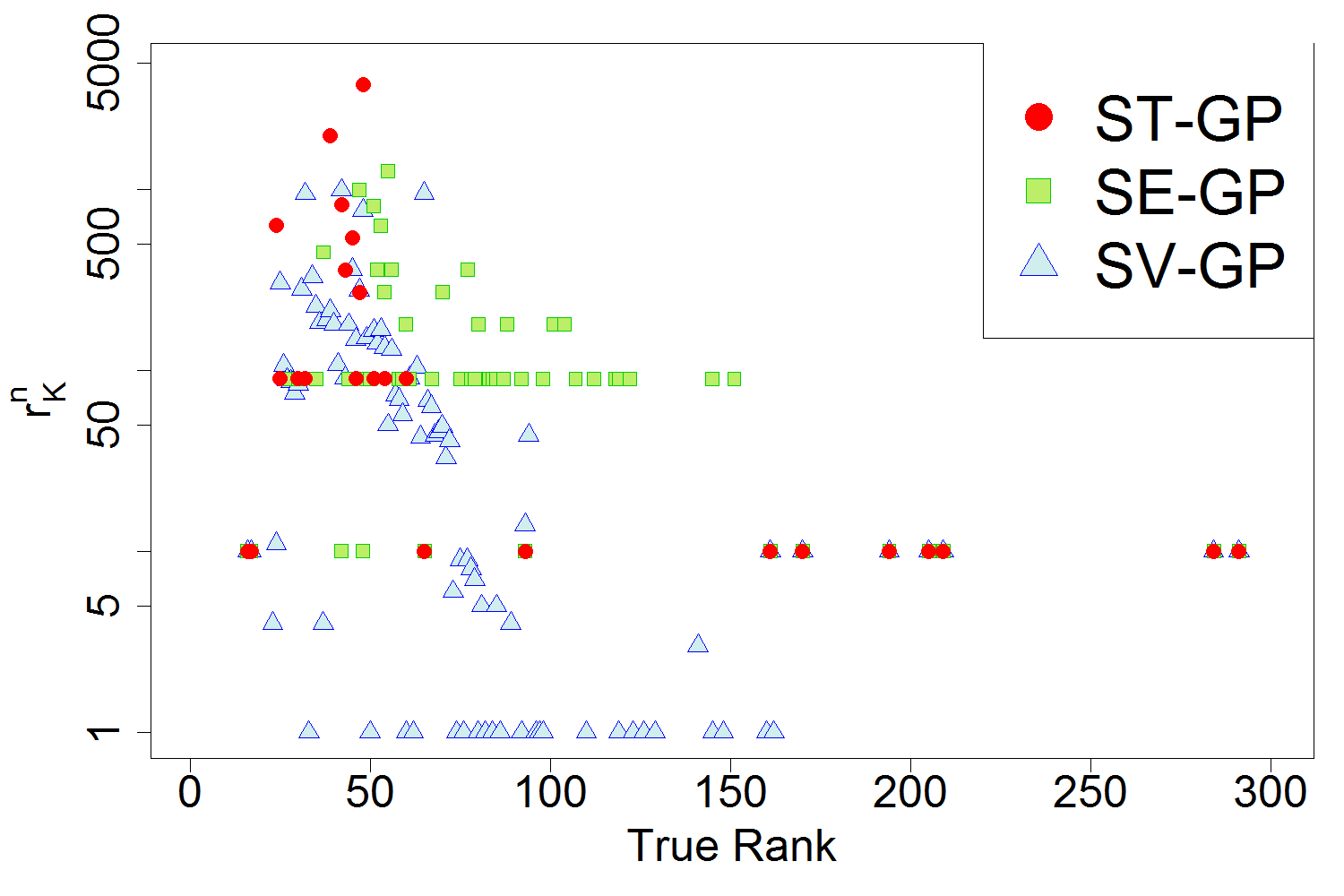} &
  \includegraphics[scale=0.29]{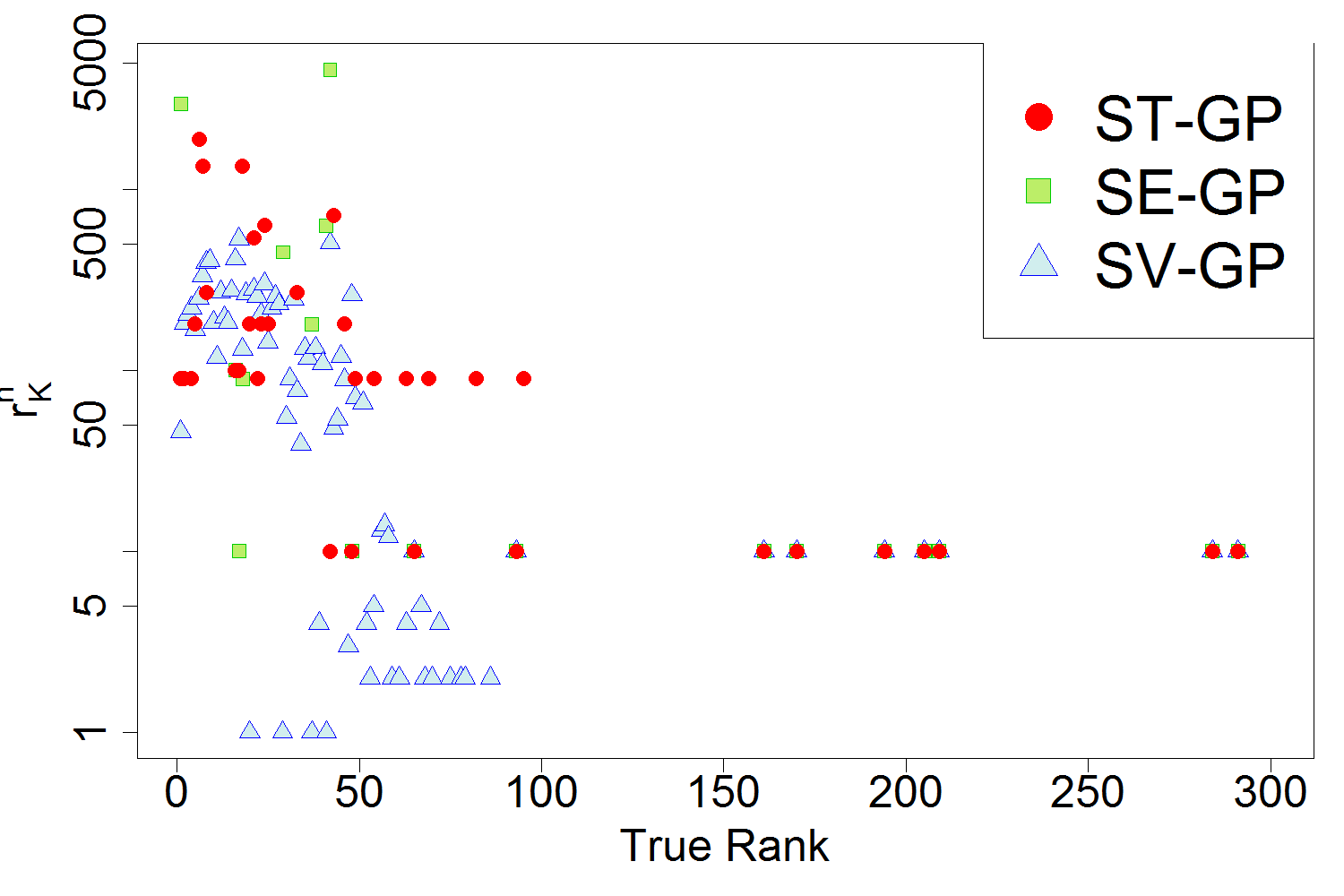}
  \end{tabular}

  \caption{Replication counts $r^n_K$ versus true rank of $f^{1:N}$ after the final stage for sequential methods for the 2-D Black-Scholes case study.  $y-$axis is on the log scale. All three schemes share the exact same initialization stage which can be seen via scenarios (especially on the right side of each plot) with $\Delta r_0 = 10$. Otherwise, for ST-GP and SE-GP all $r^n_K$'s are multiples of $\Delta r = 90$. For SV-GP there are no constraints on $r^n_K$ which can be as low as 1.  }\label{fig:replicationplot}
\end{figure}

From a different perspective, Table~\ref{table:CS1results} lists the average total design size $|\cD_K|$. We observe that ST-GP and SE-GP only use about 120 scenarios (recall that $N_{init}=100$ scenarios were already selected during initialization), while SV-GP uses about 300.
This is because SV-GP by construction allocates $r^{'n}_k$ across multiple scenarios, which can be also observed in Figure~\ref{fig:replicationplot}. In particular there are many scenarios that were just ``probed'' by SV-GP $r^n_K = 1$ (cf.~discussion on $\cZ^{SV}$ in Section~\ref{sec:dad}) but not really used. The resulting larger design size translates into larger computational overhead. Notably, SE-GP is more concentrated than ST-GP and does not appear to explore $\cZ$ sufficiently.

\begin{figure}[!ht]
  \centering
  \begin{tabular}{cc}
  $\VaR_{0.005}$ & $\TVaR_{0.005}$ \\
  \includegraphics[scale=0.29,trim=0in 0.35in 0in 0in]{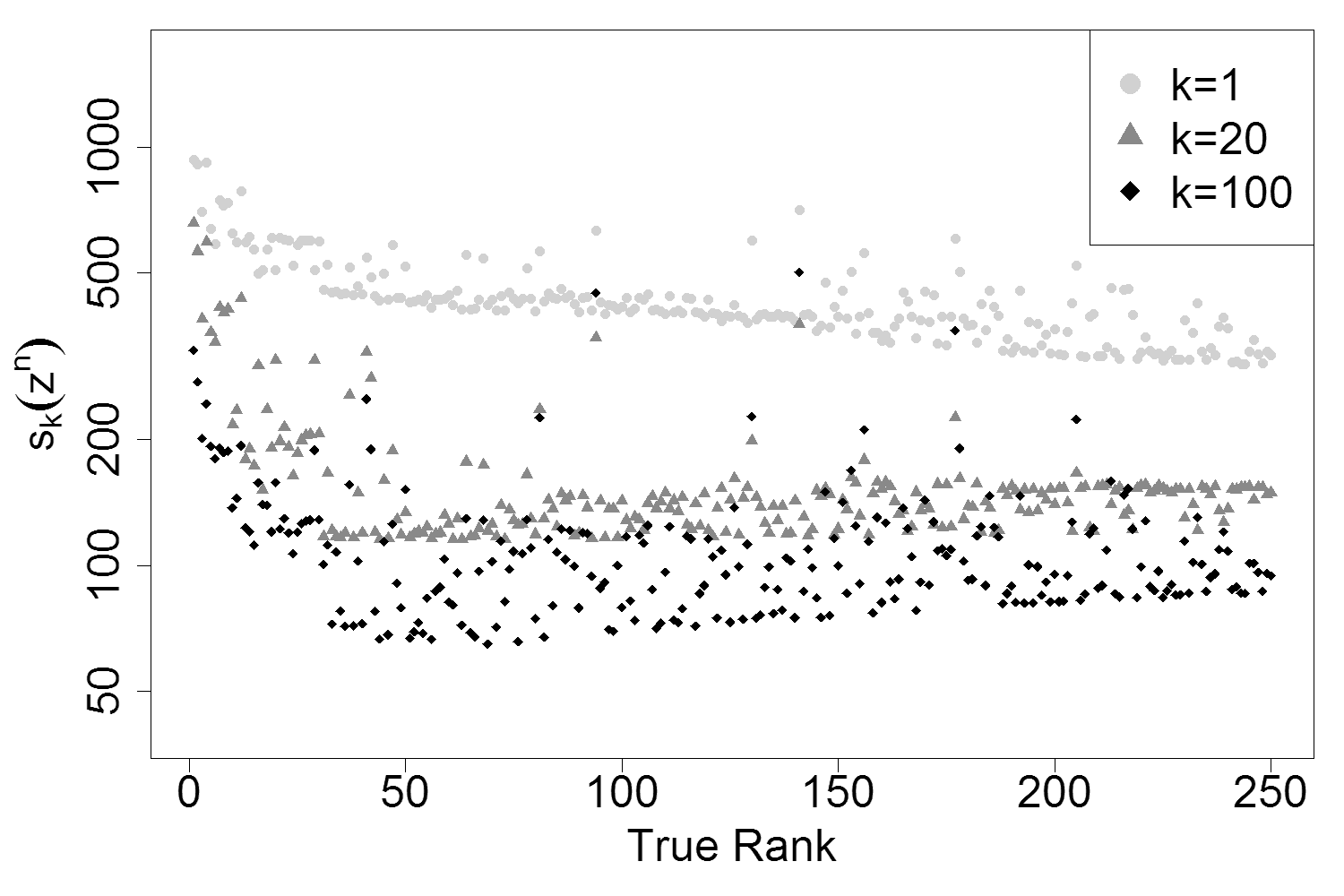}&
  \includegraphics[scale=0.29,trim=0in 0.35in 0in 0in]{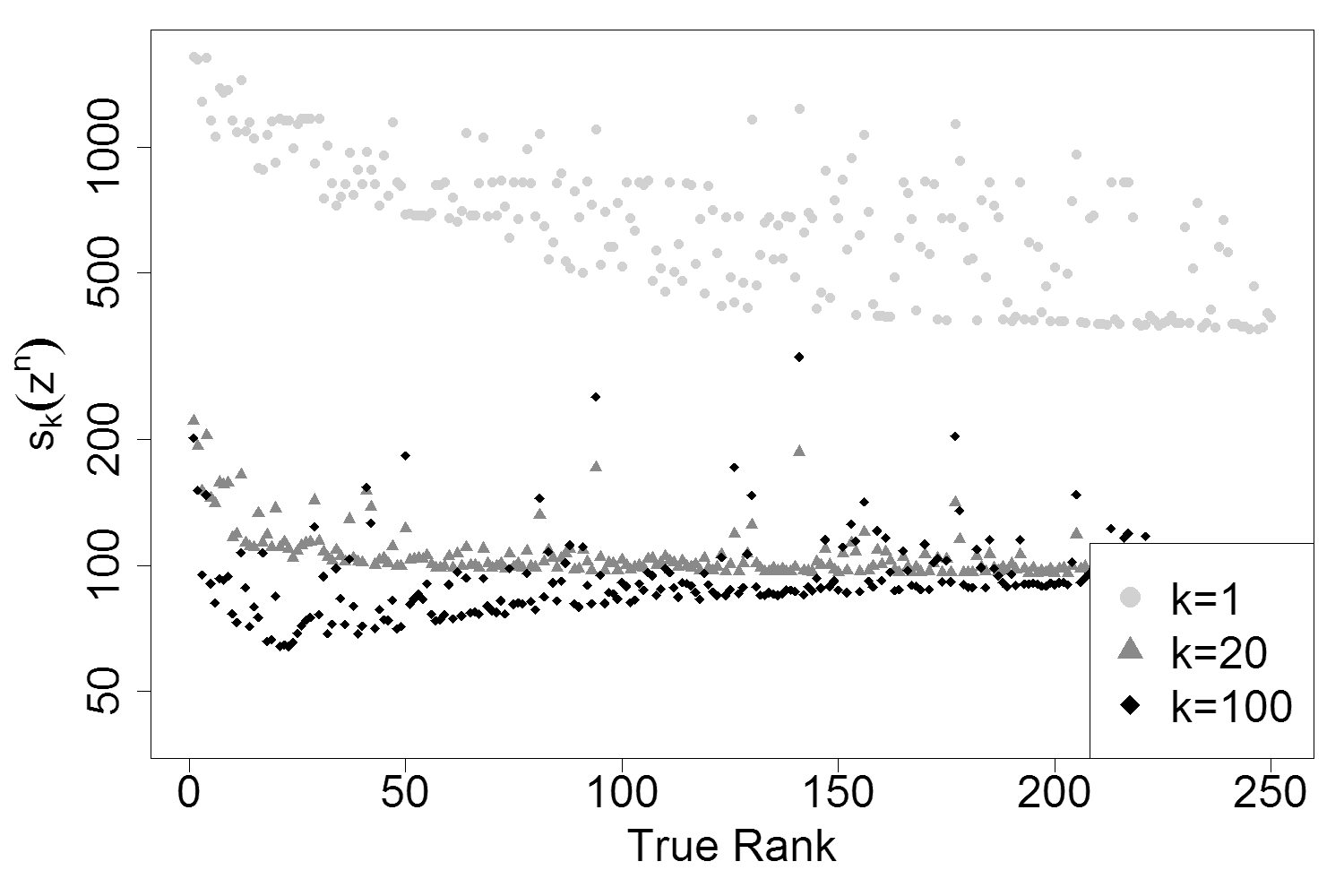}
  \end{tabular}
  \caption{Using SV-GP, posterior GP standard deviation $s_k(z^n)$ at stage $k=1,20,100$ for one run, sorted according to the true rank of $f(z^{1:N})$.  $y-$axis is on the log scale. Lower $s_k(z^n)$ indicates higher density of inner simulations in the spatial neighborhood of $z^n$.}\label{fig:s-evolution}
\end{figure}

Figure \ref{fig:s-evolution} illustrates evolution of $s_k(z)$ at $k=1, 20, 100$ using the SV-GP scheme, for both VaR and TVaR.  As $k$ increases, posterior uncertainty $s_k(z^n)$ shrinks; we furthermore see the targeted allocation of inner simulations with $s_k(z^n)$ lowest near the quantile $\cQ$ (true rank 50) for $\VaR$ and throughout the left tail (true rank $\le 50$) for $\TVaR$.  There is a clear distinction between the two panels, with $\VaR-{0.005}$ focusing more on rank 50 than rank 1-10, and vice-versa for $TVaR_{0.005}$. This effect  is also observed in Figure~\ref{fig:meanvsrank}: variance and bias are minimized at $\cQ$ for $\VaR_{\a}$ and in the entire left tail for $\TVaR_{\a}$.

\subsection{Gains from Spatial Modeling}\label{sec:gains}

\begin{figure}[!ht]
  \centering
  \begin{tabular}{cc}
  VaR & TVaR \\
  \includegraphics[scale=0.3,trim=0in 0.35in 0in 0in]{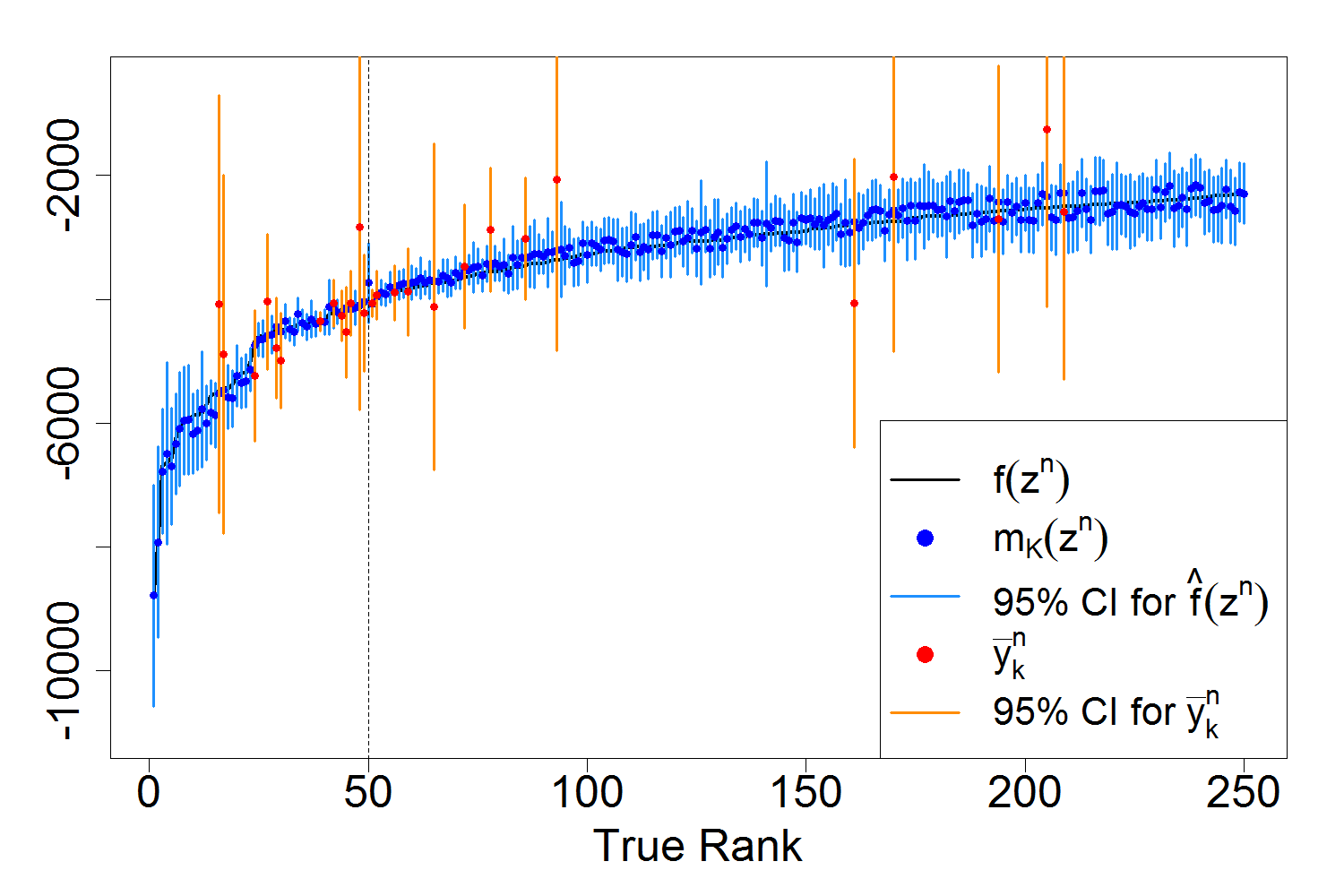} &
  \includegraphics[scale=0.3,trim=0in 0.35in 0in 0in]{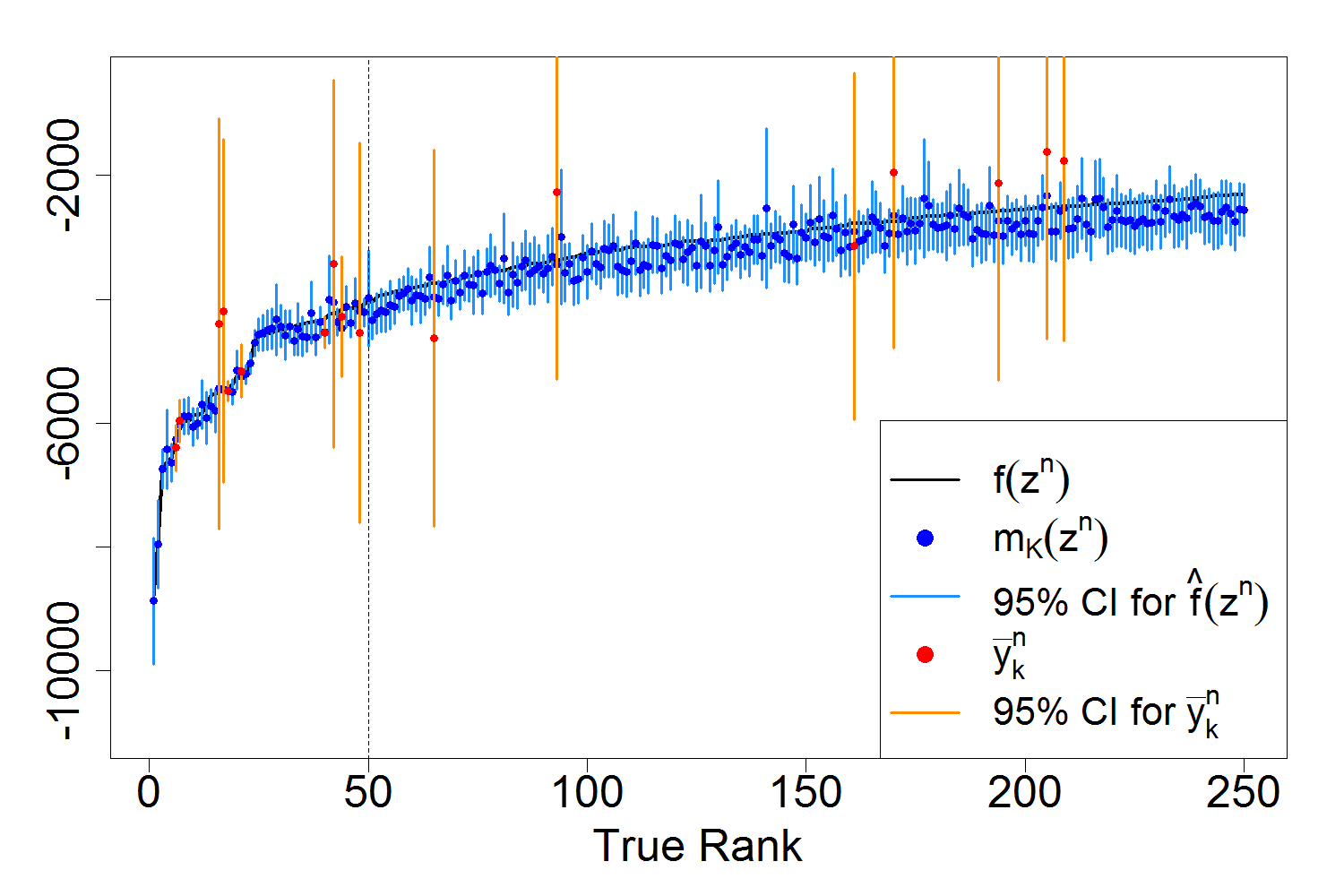}
  \end{tabular}
  \caption{Using ST-GP, output for single run after $K=100$. Both panels show the true $f(z)$, with point estimates $m_K(z)$ and $\bar{y}^n_K$, and 95\% credible/confidence intervals, all sorted according to the true rank of $f(z^{1:N})$.  Around each $\bar{y}_K^n$ is a 95\% error bar using the hetGP estimate $\tilde{\tau}(z^n)/r^n_K$ for the simulation variance while each $m_K(z^n)$ comes with error bars based on $s_K(z^n)$. }\label{fig:meanvsrank}
\end{figure}

Figure \ref{fig:meanvsrank} visualizes the spatial features of the GP surrogate at the conclusion of $K=100$ rounds for a typical run of ST-GP. Recall that our \texttt{hetGP} model smoothes both the sample averages $\bar{y}_K^n$ and the sample variances $\hat{\tau}^2_K(z^n)$. This has the double effect of improving the accuracy of learning $f(z^n)$ and lowering the uncertainty of that estimate. Indeed, in the Figure we may see that $m_K(z^n)$ is much closer to $f(z^n)$ relative to the empirical $\bar{y}_K^n$'s and moreover, the GP posterior variance $s_K(z^n)$ is much lower than the empirical $\hat{\tau}_K(z^n)$. We stress that the Figure flattens the underlying 2-D space into a one-dimensional realization, so $z$'s of neighboring rank might be far apart spatially and hence carry quite different $(m_K(z), s_K(z))$ values. In particular, scenarios with large error bars are generally spatially distant from the true quantile contour (left tail, in the TVaR case), whereby the algorithm discovered it was not worth sampling and kept $r^n_K$ low. For scenarios with high $r^n_k$ the two error bars are closer. Large uncertainty and bias are apparent on the \emph{right} side of both panels since the algorithms are explicitly targeting the left tail and hence sacrifice accuracy elsewhere. This feature is also observed on the extreme left for $\VaR$.

To better quantify  the improvement in learning $R$ thanks to the spatial borrowing of information across scenarios hinted in Figure~\ref{fig:meanvsrank}, we first compare two non-adaptive algorithms: standard nested sampling dubbed U1-SA which works with sample averages $\bar{y}$, versus U1-GP which smoothes them via a GP surrogate. This comparison crystallizes the pure ``spatial'' effect without any confounding due to the adaptive design. Table~\ref{table:MonteCarlo} reports the results for various simulation budgets $\cN$.  We see that the use of a spatial model allows a speed-up of about $5\mathrm{x}$ in terms of improved bias and RMSE. For example, U1-SA with $\cN=500,000$ has RMSE of 152.49 which is comparable to the RMSE of U1-GP with just a fifth as many simulations (149.09 for $\cN = 100,000$). For really large budgets, the gap narrows since sample averages start to yield decent estimates of $f(z)$. Table~\ref{table:MonteCarlo} also compares the average local empirical variance at the sample quantile $\hat{\tau}( \hat{R}^{HD})$ (used by U1-SA to compute standard errors) against $\bar{s}$, the average GP-based standard error of $\hat{R}^{HD}_K$. We observe an uncertainty reduction of more than 50\% for same budgets, again thanks to spatial borrowing. 

\begin{table}[!ht]
\centering
\begin{tabular}{r||rr||rr||rr}
& \multicolumn{2}{c||}{U1-SA} & \multicolumn{2}{c||}{U1-GP} & \multicolumn{2}{c}{BR-SA} \\
$\cN$ & $\overline{\hat{\tau}_1(\hat{R}_1^{HD})}$ & RMSE &  $\overline{s}$ & RMSE & $\overline{\hat{\tau}_1(\hat{R}_1^{HD})}$ & RMSE \\ \hline
$1 \cdot 10^4$ &  \textemdash &  6578.12  &  585.01  &  2965.05 & \textemdash & \textemdash  \\
$2 \cdot 10^4$ &  1263.18  &  3660.92  &  359.05  &  1385.44   &  \textemdash  &  \textemdash   \\
$5 \cdot 10^4$ &  650.62  &  1569.38  &  202.97  &  344.55 &  135.60  &  186.82 \\
$1 \cdot 10^5$ &  410.72  &  759.87  &  146.49  &  148.62  &  53.50  &  61.58  \\
$2\cdot 10^5$ &  272.09  &  384.27  &  107.35  &  112.63  &  34.72  &  43.16  \\
$5 \cdot 10^5$ &  162.42  &  163.86  &  81.26  &  77.22 &  19.45  &  21.58 \\
$1 \cdot 10^6$ &  112.89  &  92.44  &  54.28  &  66.69 &  12.98  &  13.87  \\
\end{tabular}
\caption{For the 2-D Black Scholes portfolio case study, average values over 100 macro replications of posterior standard deviation and RMSE for $\hat{R}_K$ using nested MC (U1-SA), the BR-SA algorithm in Broadie et al.~\cite{broadie2011efficient} and uniform GP surrogate, U1-GP.  U1-SA is the standard nested Monte Carlo based on sample averages; U1-GP uses a GP fitted to the simulation output based on single-stage uniform allocation, and BR-SA uses MC sample averages along with a sequential R\&S procedure, see Section \ref{sec:benchmarks}. For $\cN=10^4$ the inner simulation budget is $r^n=1$, not allowing an estimate of $\tau(\cQ)$ for -SA methods.  }\label{table:MonteCarlo}
\end{table}

From a different angle we can compare the BR-SA method, which does non-uniform allocation of inner simulations, against, say, ST-GP. While BR-SA dramatically outperforms U1-SA
(for instance BR-SA with $\cN = 5 \cdot 10^4$ is comparable to U1-SA with a budget 10 times larger), it is still far behind the surrogate-based sequential approaches. It takes a 10-20 times larger budget for BR-SA to begin matching the RMSE's of the methods from the previous section that only employed $\cN = 10^4$.
So while  BR-SA carries essentially no regression/sequential design overhead, the cost in terms of increasing simulation effort is enormous. To conclude, spatial modeling yields an \emph{order-of-magnitude} simulation savings in this example. (We also remark that by its construction, BR-SA first allocates two  inner simulations at each scenario and only then enters its adaptive phase. So for $\cN \le 2N$ it reduces to U1-SA; our schemes will produce adaptive allocations even under small simulation budgets.)

\subsection{Choice of GP Emulator}\label{sec:het-vs-sk}

To use a GP emulator, one must pick a covariance kernel and also a specific method for fitting the model to the simulation outputs. To investigate the respective impact,
we compare use of two different kernel families---Mat\'ern-5/2 versus Gaussian, cf.~\eqref{eq:maternkernel2} and \eqref{eq:Gsn-kernel}---and two different approaches for modeling simulation variance $\tau^2(z)$. Specifically we compare the stochastic kriging approach of using point estimates $\hat{\tau}^2(z^n)$ to estimate $\tau^2(z^n)$ (implemented in \texttt{DiceKriging} \texttt{R} package) and the joint response-noise surface approach of \texttt{hetGP}. The latter is more complex and hence brings more computational overhead. Table \ref{table:hetGPcomparison} provides final RMSE, as well as bias at several stages when using the ST-GP acquisition function with the above three choices of the GP emulator.

\begin{table}[!ht]
\centering
\begin{tabular}{cl|r|rrrrr}
&&& \multicolumn{5}{c}{Bias} \\
Approach & Kernel & RMSE  & $k=1$ & $k=10$ & $k=20$ & $k=50$ &  $k=100$ \\ \hline
hetGP & Mat\'ern-5/2 &   57.52  &     166.065 &  113.925 &  97.751 &  37.158 &  28.066\\
hetGP & Gaussian &     68.06  &     91.475 &  104.427 &  74.874 &  52.716 &  48.249\\
SK & Mat\'ern-5/2   &  69.31    &   914.728 &  206.267 &  113.716 &  69.821 &  48.670 \\ \hline
\end{tabular}
\caption{For the 2-D Black Scholes portfolio case study, average RMSE of $\hat{R}_K$ across different GP models/kernel families. We also report average bias $\text{bias}(\hat{R}_k)$ across a selection of intermediate stages $k=1,10,20,50,100$. All methods use the ST-GP rule and are based on 100 macro-replications.  }\label{table:hetGPcomparison}
\end{table}

We observe that using the empirical variance estimates $\hat{\tau}^2(z^n)$ (SK) generates glaringly high biases at initial stages. In turn, the \texttt{hetGP} model is able to de-bias efficiently. Although bias falls rapidly as the algorithm progresses, the SK emulator still has a 50\% higher bias (and 20\% higher RMSE) at \rev{the} $k=100$ stage. Because in practice one  does not know when the bias has been effectively alleviated, this is a significant shortcoming of working with an SK emulator and highlights the importance of properly modeling the noise surface $\tau(\cdot)$. In comparing kernels, the Gaussian kernel shows slightly higher RMSE and bias compared to Mat\'ern-$5/2$, though not by a significant margin. This matches the folklore view that Mat\'ern is the default choice for the kernel family. 

	\section{Case Study: Life Annuities under Stochastic Interest Rate and Mortality}\label{sec:casestudy2}

 To check if the relative performance of each method remains the same in a more elaborate setting, we move to a case study in higher dimensions with a more complicated payoff function. Despite increased complexity we maintain the same budget $\cN = 10^4$ which is expected to magnify the relative differences across proposed methods.

The setup in this section considers an annuitant who enters into a contract to begin collecting payments in $S$ years, whence the payments continue annually until death of the individual.  (In practice some cutoff age $x_{u}$ is set for the final payment.)  Regulations require analysis of quantiles of the $T=1$-year value of this contract to the insurer. The major drivers of portfolio loss are interest rate risk (low interest rates increasing the present value of annuity payments) and mortality risk (increased longevity raising the value of the annuity). These factors are captured by 6-D stochastic state: a three-factor model for interest rates and a three-factor stochastic mortality model. 

We wish to find the present value of this annuity at horizon $T \leq S$, before payments begin at $S$.
To this end, let $\mathcal{T}(x)$ be the remaining random lifetime of an individual aged $x$ today. The annuity payment at $t$
is predicated on $\mathcal{T}(x) > t$ whose likelihood $P(t,x)$ we represent as
\begin{align}\notag
P(t,x) &= Pr(\mathcal{T}(x) > t) = 1 - \sum_{u=0}^{t-1} Pr(u < \mathcal{T}(x) \leq u+1)\\
	& \doteq 1-\sum_{u=0}^{t-1} q(u,x+u) , \label{eq:PfromQ}
\end{align}
where $q(u,x+u)$ is the mortality rate, i.e.~the probability of an individual aged $x+u$ dying in calendar year $u$ (alternatively interpreted as individual aged $x$ today dying  between ages $x+u$ and $x+u+1$). Integrating out idiosyncratic mortality experience (denoted by $Pr(\cdot)$ in \eqref{eq:PfromQ}), we focus on systematic factors that affect future evolution of mortality. Thus, we view $q(u,x+u)$ as a random variable that is driven by stochastic mortality factors $(Z_t)$, i.e.~$q(Z_t; t, x)$. This implies that the survival probability $P(Z_{[T,t]}; t,x)$ in \eqref{eq:PfromQ} depends on the whole \emph{path} $Z_{[T,t]}$ between the horizon $T$ and the summed years $t$.

Let $\beta_t$ be the instantaneous interest rate at time $t$, which also depends on (other) components of $(Z_t)$. The net present value of the annuity at $T$ (conditioning on $\mathcal{T}(x) > T$ and assuming that the longevity and interest rate risks are independent) is
\begin{align}
f(z) &=  -\E\left[\left. \sum_{t=S}^{x_u-x} e^{-\int_T^{t} \beta_u du}  P(Z_{[T,t]}; t,x) \right|Z_T=z, \mathcal{T}(x) > T \right]. 
\label{eq:PV(t)2}
\end{align}

For survival probabilities we choose the following model, known as (M7) in
Cairns et al.~\cite{cairns2011mortality}:
\begin{equation}
\logit q(Z_t;t,x) = \kappa^1_t + \kappa^2_t(x-\bar{x}) + \kappa^3_t \left((x-\bar{x})^2-\hat{\sigma}^2_x\right) + \gamma_{t-x}. \label{eq:q-CBD}
\end{equation}
In \eqref{eq:q-CBD}, the stochastic drivers are $\kappa^j_t,j=1,2,3$ as well as $\gamma_{t-x}$; the rest are fixed parameters. Specifically, $\bar{x}$ is the average age the model is fit to, and $\hat{\sigma}^2_x$ is the mean value over fitted ages $x$ of $(x-\bar{x})^2$, which are interpreted as \emph{age} effects. The model allows a plugin for age $x$, while the stochastic factors are the \emph{period} effects $\kappa^i_t$ which capture mortality evolution over calendar year, and $\gamma_{t-x}$ which is the \emph{cohort} effect. These constitute discrete-time ARIMA models, following the common choice that each $\kappa^i$ is a random walk with drift.  Typically, $\gamma_{\cdot}$ is fitted as an ARMA model.

 For reproducibility, we use the \texttt{R} package \texttt{StMoMo} \cite{villegas2015stmomo} which contains tools for fitting and simulating \eqref{eq:q-CBD}. Namely, we utilize the  England \& Wales (E\&W) mortality contained in the data and previously studied in Cairns et al.~\cite{cairns2011mortality}. Because the expression  \eqref{eq:PfromQ} for $P(Z_{[T,t]}; t,x)$ has Equation \eqref{eq:q-CBD} evaluated at $(u,x+u)$ for age/year, it results in a fixed cohort effect $\gamma_x$ fitted from historic values of the ARMA process.

The interest rate dynamics for $(\beta_t)$ are from \cite{chen1996stochastic}, defined through a three-factor Cox-Ingersoll-Ross model with stochastic volatility $\zeta_t$ and stochastic drift $\alpha_t$
\begin{align}\label{eq:chenmodel}
d\beta_t &= (\bar{\beta} - \alpha_t)dt + \sqrt{\beta_t} \zeta_t \, dW^\beta_t,\\
d\alpha_t &= (\bar{\alpha} - \alpha_t)dt + \sqrt{\alpha_t} \zeta_t \, dW^\alpha_t,\notag\\
d\zeta_t &= (\bar{\zeta} - \zeta_t)dt + \sqrt{\zeta_t} \varphi \, dW^\zeta_t,\notag
\end{align}
where $W^\beta, W^\alpha$ and $W^\zeta$ are independent standard Brownian motions.

 Overall, this implies a 6-D Markov state process  $Z_t = (\beta_t, \alpha_t, \zeta_t, \kappa^1_t, \kappa^2_t, \kappa^3_t)$. Thus, the procedure to obtain payoff realizations $Y^i(z_\cdot^{n,i})$ is to first simulate paths $(z_t^{n,i})_{t \geq T}$ from the distribution $(Z_t)_{t \geq T} | Z_T = z^n$, plug-in the realizations for the $(\kappa^j_t)_{t \geq T}$, $j=1, 2, 3$ into \eqref{eq:q-CBD} to evaluate \eqref{eq:PfromQ}, and finally use these survival probabilities as well as the simulated $(\beta_t)_{t \geq T}$ to compute
 \begin{align}
Y^i(z_\cdot^{n,i}) = \sum_{t=S}^{x_u-x}  e^{-\int_T^{t} \beta^{n,i}_u du}P(z_{[T,t]}^{n,i}; t, x).
\end{align}

 Note that the $\kappa$'s are discrete-time, while the interest rate model uses continuous $t$. For the latter, we use a simple forward Euler method with discretization $\Delta t = 0.1$. In all, one cashflow $Y(\cdot)$ takes approximately $0.01115$ seconds to evaluate (i.e.~about 2 minutes for the $10^4$ overall inner simulations), while it takes $0.000513$ seconds to evaluate for the first case study. Such a computationally expensive simulator is more representative of real-life simulation engines and reduces the impact of overhead costs in emulator fitting, selection, and prediction.

\subsection{Results}\label{sec:VA-results2}
For the remainder of this section, we consider the horizon $T=1$ and analyze $f(Z_1)$ in Equation \eqref{eq:PV(t)2}, the net present value of the life annuity one year into the future. We take $x=55$ and  annuity start date of $S=10$, when the individual retires at age $x=65$.   As before, we simulate $Z_1$ via Algorithm \ref{alg:initialFit} in Appendix~\ref{App:initial} to determine a scenario set $\cZ$ with $N=10^4$.
  The interest rate parameters in Equation \eqref{eq:chenmodel} are $\bar{\beta}=0.04, \bar{\alpha}=0.04, \bar{\zeta}=0.02, \phi=0.05$, with the E\&W mortality model fitted over the age range $x \in [55,89]$ using the \texttt{StMoMo} package \cite{villegas2015stmomo}.  As in the previous case study, the GP hyperparameters are refitted after stages $ 10, 20, \ldots, 100$, and after each round for A3-GP and U2-GP. For the mean function, we fit the constant $\mu(z) \equiv \beta_0$.

Under the setup \eqref{eq:PV(t)2}, there is no closed form evaluation for $f(z)$, so we obtain a benchmark $R_B$ through simulation---this value is determined by running the SR-GP approach with $\cN = 2\cdot 10^7$ using $K=200$ rounds (so that $\Delta r_k = 1 \cdot 10^5$; we also take a very conservative $L=1, U=200$).
The result is $R^{\VaR}_B = -16.0498$ and $R^{\TVaR}_{B}=-16.3739$, with estimator standard deviations of $s(R^{\VaR}_B)=0.00202$ and $s(R^{\TVaR}_B)=0.00188$.

We repeat the methods of Section~\ref{sec:casestudy-BS}, performing 100 macro replications with fixed $\cZ$ for both $\VaR_{0.005}$ and $\TVaR_{0.005}$.  The boxplots for bias and posterior uncertainty are reported in Figure~\ref{fig:VaRresults2}, and the numeric values for estimator standard deviation, RMSE, and final design sizes in Table~\ref{table:CS2results}.  

\begin{table}[!ht]
\centering
\begin{tabular}{l|rrrr || rrrr|r}
& \multicolumn{4}{c||}{$\VaR_{0.005}$} &  \multicolumn{4}{c}{$\TVaR_{0.005}$} \\
& $SD(\hat{R}^{HD}_K)$ & \multicolumn{1}{c}{$\overline{s}$} & RMSE & $|\cD_K|$ & $SD(\hat{R}_K)$ & \multicolumn{1}{c}{$\overline{s}$} & RMSE &  $|\cD_K|$ & \rev{Time} \\ \hline
ST-GP  &  0.0394  &  0.0455  &  0.0403  &  151.83 &  0.0461  &  0.0404  &  0.0472  &  147.10  & 330\\
SE-GP  &  0.0427  &  0.0493  &  0.0459  &  143.27  &  0.2853  &  0.2717  &  0.2850  &  101.62 & 295 \\
SV-GP  &  0.0382  &  0.0406  &  0.0380  &  497.81  &  0.0408  &  0.0393  &  0.0407  &  254.35 & 403  \\
SR-GP-1  &  0.0467  &  0.0470  &  0.0485  &  135.03 &  0.0430  &  0.0402  &  0.0430  &  184.73 & 219 \\
SR-GP-2  &  0.0391  &  0.0437  &  0.0434  &  217.36    &  0.0447  &  0.0431  &  0.0450  &  224.96 & 198 \\
A3-GP  &  0.0434  &  0.0497  &  0.0436  &  298.15   &  0.0464  &  0.0490  &  0.0461  &  298.55 & 115 \\
U2-GP  &  0.0598  &  0.0598  &  0.0596  &  194.03  &  0.0684  &  0.0601  &  0.0689  &  195.81 & 112 \\
U1-GP  &  0.5020  &  0.4156  &  0.5853  &  $10^4$ & 0.4940  &  0.4705  &  0.6709  &  $10^4$ & 177  \\
\end{tabular}
\caption{Results for the 6-D life annuity case study based on 100 macro-replications. We report sample standard deviation of $\hat{R}^{[1:100]}_K$, average GP posterior standard deviation $\overline{s}$, and RMSE of $\hat{R}_K$, as well as average final design size for each approach. \rev{The last column reports the running time in seconds.} Description of the methods is in Table~\ref{table:procedures1}. }\label{table:CS2results}
\end{table}

  \begin{figure}[!ht]
  \centering
  \begin{tabular}{m{0.09\textwidth}m{0.4\textwidth}m{0.4\textwidth}}
   & $\qquad\qquad$ Estimated $\hat{R}_K$  & $\quad$ GP posterior uncertainty $s(\hat{R}_K)$ \\
    $\;\VaR_{0.005}$   &  \includegraphics[scale=0.25]{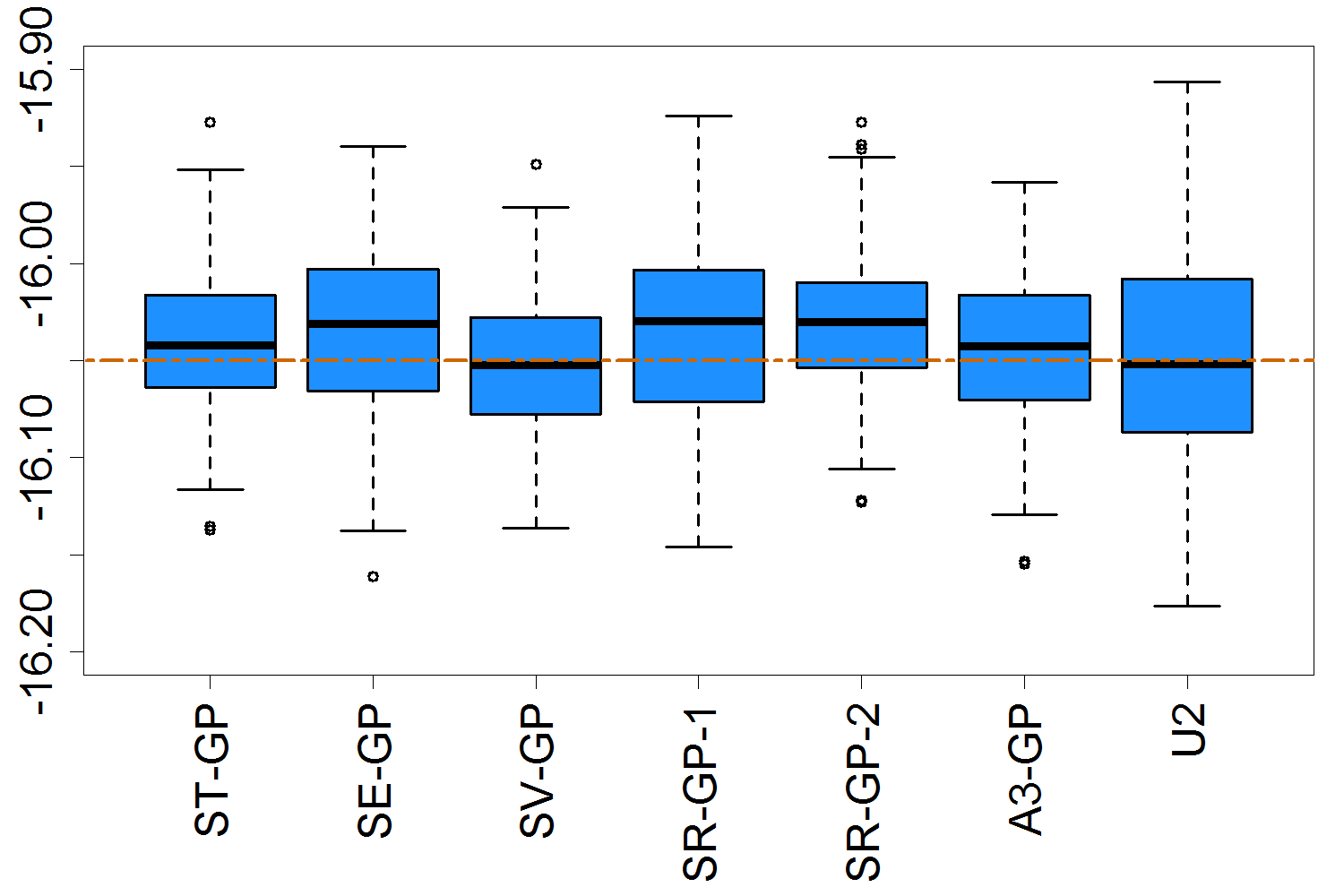} &
  \includegraphics[scale=0.25]{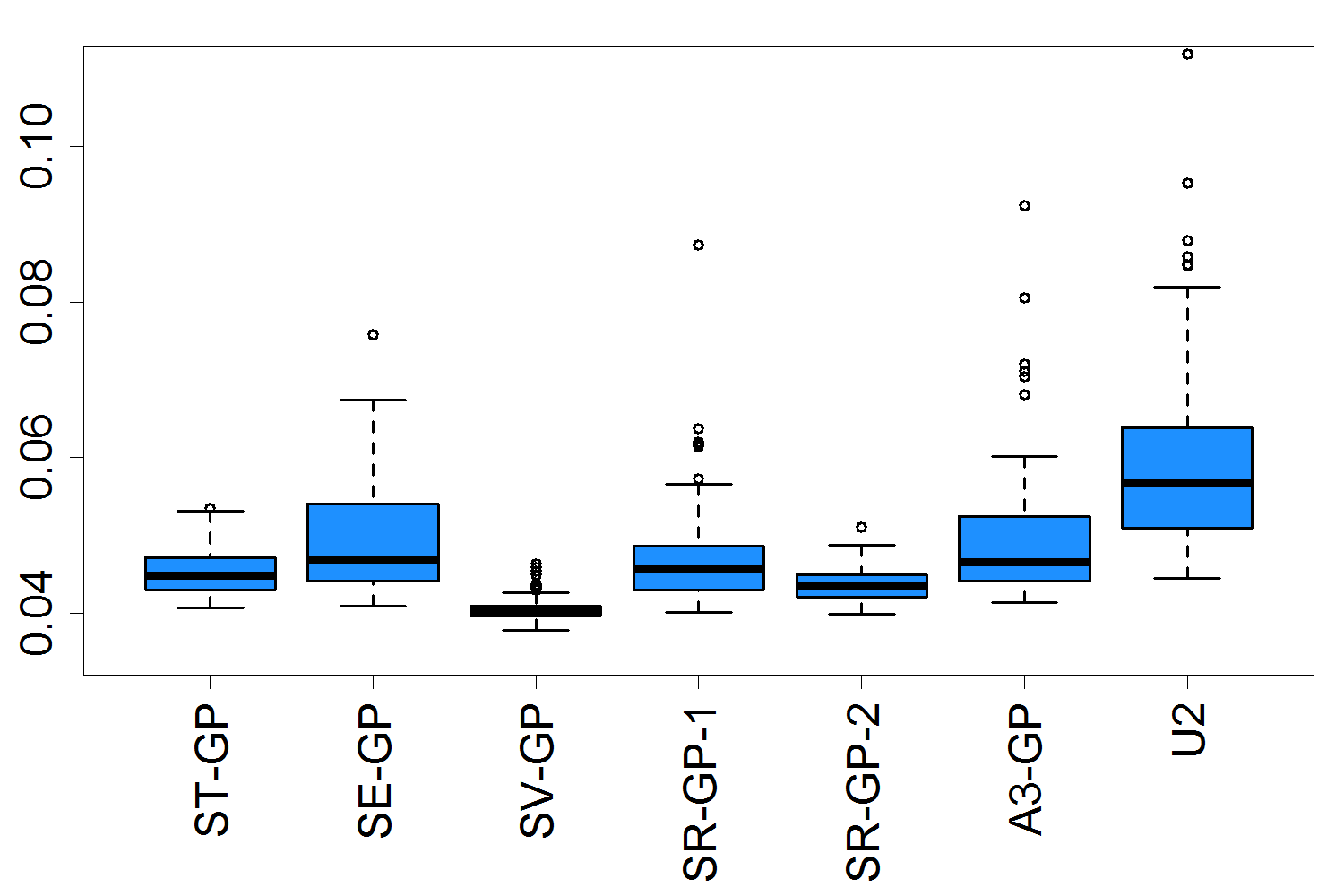} \\
   $\TVaR_{0.005}$   &  \includegraphics[scale=0.25]{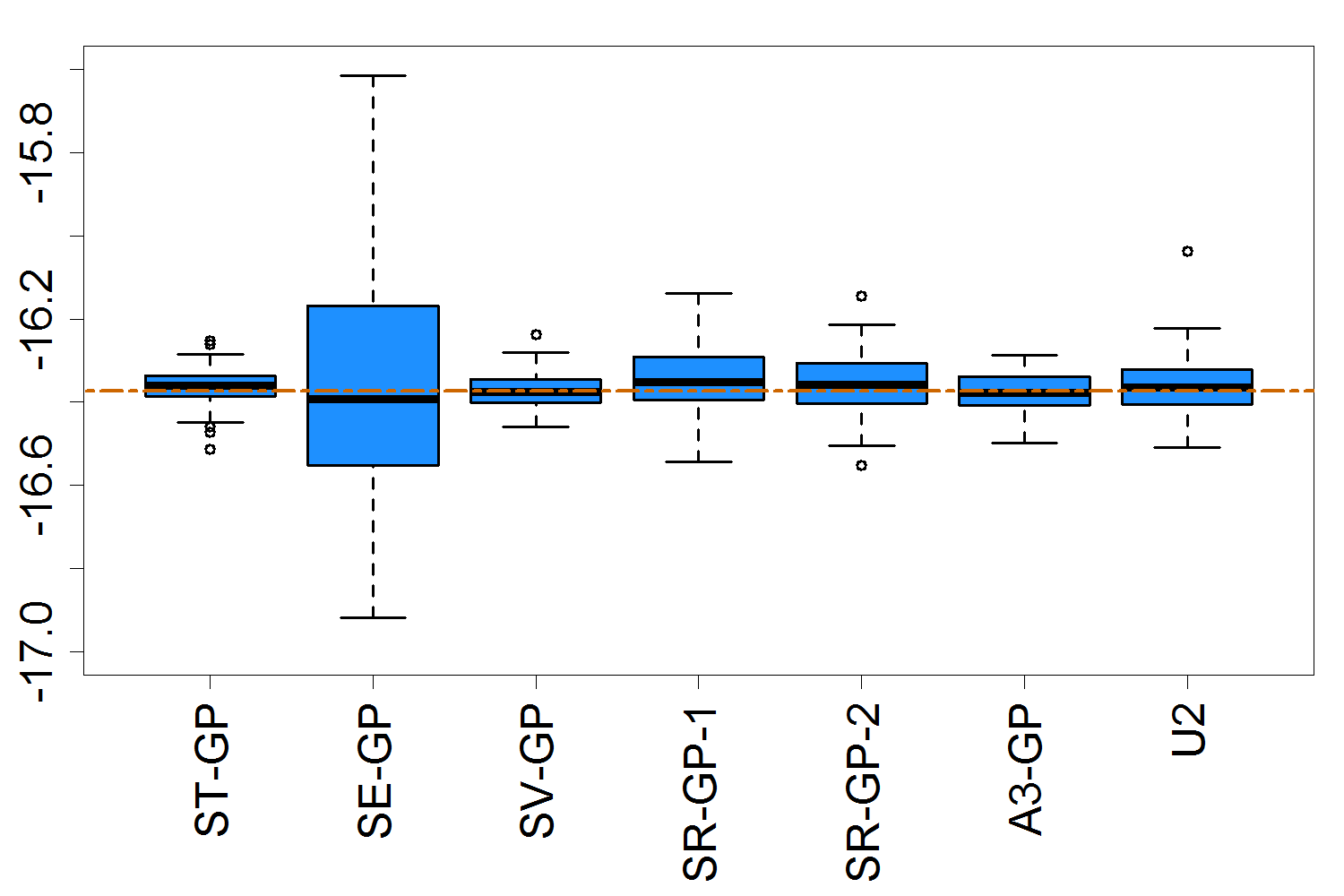} &
  \includegraphics[scale=0.25]{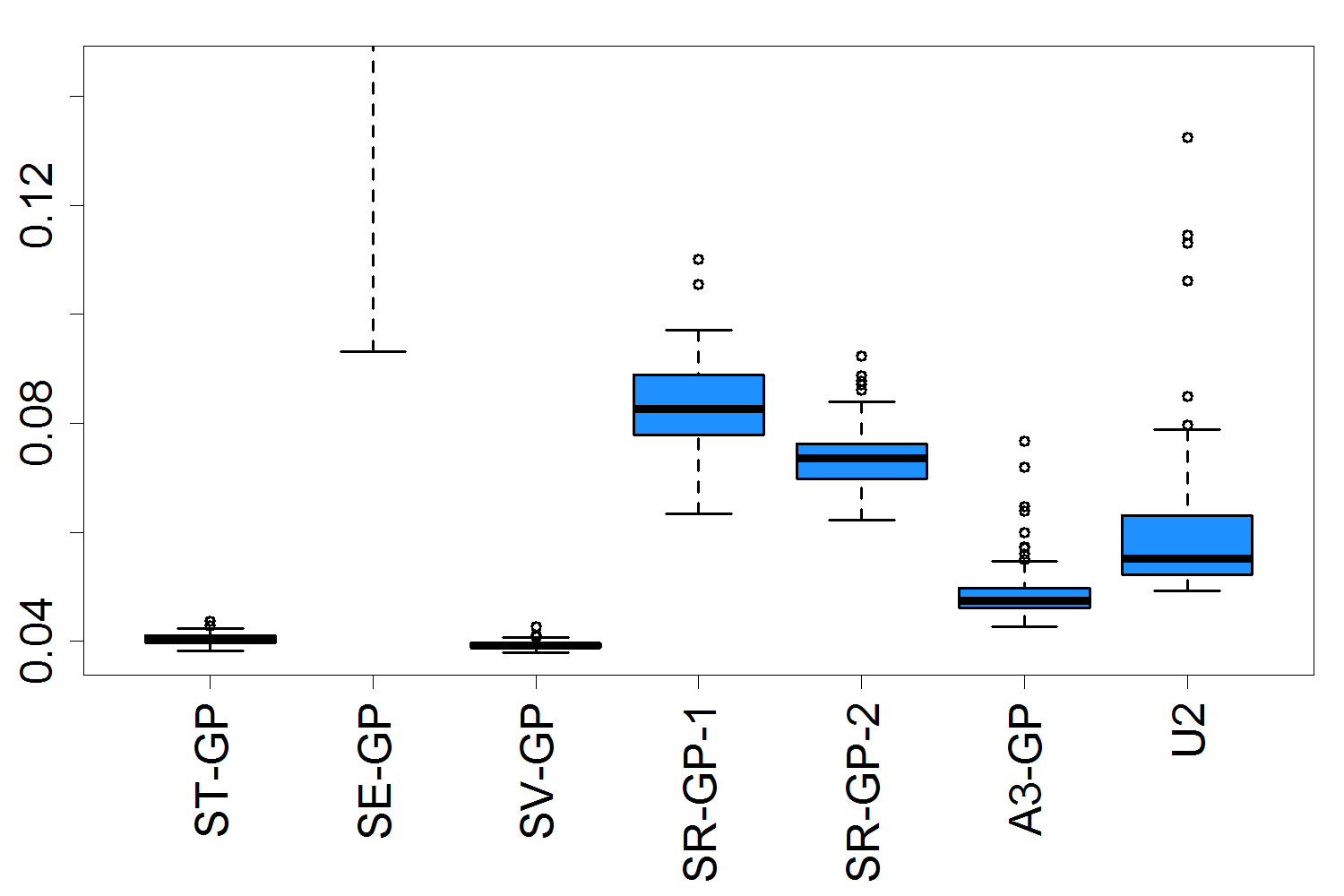}
  \end{tabular}
  \caption{For the life annuity case study, boxplots of final $\hat{R}_K$ estimates (left) and $s(\hat{R}_K)$  (right) over 100 macro replications for each approach. Top row: Value-at-Risk; bottom row: TVaR.
  The orange dotted lines on the left are the reference values of the risk measure.}\label{fig:VaRresults2}
\end{figure}

The results only slightly differ  from the first case study in terms of relative performance. For $\VaR_{0.005}$, SV-GP again has the lowest RMSE value, although ST-GP is only 6\% worse. Interestingly, A3-GP performs as well as SR-GP-2.
%
Since A3-GP allocates $\Delta r_1$ among 200 locations, this suggests that the bandwidth $L=26, U=75$ for SR-GP-2 may be too tight for this more complex setup, especially since this time it only slightly outperforms the  aggressive SR-GP-1.  As before, U1-GP fails (RMSE is more than 15x higher than for SV-GP), and while U2-GP offers a large improvement, it still has a much higher RMSE compared to the other methods.  The boxplots in Figure~\ref{fig:VaRresults2} show that the most stable uncertainty quantification (tight distribution of $s(\hat{R}^{HD}_K)$ and close to the actual $SD(\hat{R}_K))$ is provided by ST-GP, SV-GP and SR-GP-2. Like in the first case study, SE-GP leads to more across-run dispersion in both $\hat{R}_K$ and its reported standard error.

The $\TVaR$ results are similar; we comment on a few noticeable differences.  SV-GP still has the lowest RMSE and the lowest $SD(\hat{R}_K)$. At the other end, SE-GP shows the same issues with its acquisition function and unstable $s(\hat{R}_K)$.  The SR-GP methods do second best (although their uncertainty quantification is  significantly less reliable with unstable $s(\hat{R}_K)$), with A3-GP and ST-GP close behind. We attribute the relative ``win'' of SV-GP to its more diffuse allocation, i.e.~larger $|\cD_K|$, which indicates that estimating $\TVaR$ in a more complex example requires wider casting of the net, rather than narrowly  targeting the tail.
Overall, this case study confirms the performance gains of the GP-based sequential methods  even in a higher-dimensional setting.

\section{Conclusion}
We provided the first comprehensive examination of sequential methods for estimating $\VaR_\a$ and $\TVaR_\a$ using Gaussian process emulators and nested simulations over a fixed set of outer scenarios.  Existing approaches in the simulation literature fall short for this application,  whether it be assuming the level ($\VaR_\a$) is known \cite{picheny2010adaptive,bect2012sequential,chevalier2014fast} and/or non-noisy outputs \cite{labopin2016sequential}, using a non sequential algorithm \cite{liu2010stochastic}, or not using a spatial model for smoothing \cite{broadie2011efficient}.

We document a \emph{two orders-of-magnitude} savings relative to plain nested Monte Carlo, with about 10x savings thanks to adaptive allocation and 5x savings due to spatial modeling. These gains are strongest for small simulation budgets $\cN \approx N$ where existing approaches essentially fail or are not applicable. As a further advantage, employing a GP emulator also leads to essentially unbiased risk estimators (in contrast to biased unsmoothed versions) and provides a reliable estimate of standard error.  We also took advantage of the newly introduced heteroskedastic GP framework in \texttt{hetGP}
\cite{binois2016practical}. The improved estimation of the noise surface $\tau^2(\cdot)$ slashes the bias in $\hat{f}$ when  replicate counts $r_k$'s are low, and yields a more robust estimator for $\hat{R}_k$ throughout.

All fully sequential methods showed significant improvement over both the simple two-stage approach and three-stage method of \cite{liu2010stochastic}. The major downside to our combination of GP emulation and sequential design is the regression/prediction overhead, however this is reduced significantly by batching and carefully choosing candidate sets $\cZ^{cand}_k$.  In addition, the overall time spent computing acquisition functions, fitting and predicting becomes negligible in more realistic examples where calls to $Y(\cdot)$ are expensive.

\blue{
We have identified two top-performing approaches. SV-GP, the fully sequential adaptation of \cite{liu2010stochastic}, did best in terms of RMSE in both case studies. SV-GP also led to the smallest variation in the outputted posterior uncertainty $s_K(\hat{R}_K)$, allowing this value to be a reliable proxy for the true sampling standard deviation. ST-GP performed nearly as well, and in fact appeared to do best for the first few rounds, cf.~the fan plots of Figure~\ref{fig:VaRFanPlot}. The implementation of SV-GP involves a separate auxiliary optimization problem and brings more overhead and coding complexity. In contrast the criterion of ST-GP is straightforward and also keeps the emulator model size to a minimum (in terms of $|\cD_k|$ which is the key determinant for GP overhead). An even simpler/faster scheme is the rank-based SR-GP that also performed well, but its results depended heavily on the choices of $L$ and $U$, which in turn were varying among case studies and $\VaR_\a$ versus $\TVaR_\a$.  This is a further advantage of SV-GP and ST-GP that require no user-specified parameters. The mixed performance of SE-GP offers a cautionary tale that designing a good acquisition function is important for overall performance.  Our take-away is that ST-GP is an excellent choice when budget $\cN$ is really limited, while for larger budgets a scheme like SV-GP is best.}

\blue{There are multiple ways to move further.
The best-performing SV-GP could be further embellished through fine-tuning the weights $\mathbf{w}$ that are used during the respective look-ahead variance optimization. For example, the related three-stage approach in \cite{liu2010stochastic} uses an empirical weight estimate obtained through simulation in order to determine weights for the TVaR estimator. Modifying the optimization criterion to avoid the requirement that $r^n_k \ge 1$ would remove the scenario probing and drastically shrink SV-GP's design size $|\cD_k|$.  Alternatively, adaptive selection of $\Delta r_k$ would reduce emulation overhead.

 Conversely, one could extend the ST-GP or SE-GP approaches to adaptively optimize over both $z^{k+1}$ and $\Delta r_k$, i.e.~to internalize the batching into the design construction (whereas currently the respective acquisition functions are technically 1-step lookahead). Theoretical analysis regarding various convergence \emph{rates} of $\hat{R}_K$-bias would also be welcome, although it is already extremely challenging for the simpler problem of contour finding with a known threshold $L$.}

The initialization stage also warrants further analysis. Recall that we used a fixed budget of $\Delta r_0 = 0.1 \cN$, spent through $N_{init}$ representative scenarios determined via a space-filling scheme. In principle, $\Delta r_0$ should be taken as small as possible, letting the algorithm itself allocate the rest of the simulations, but in practice a decent starting point for the GP hyperparameters is crucial. Better rules/heuristics for the choice of $\Delta r_0$ remain to be considered. Also, rather than space-filling the given $\cZ$, it could be desirable to focus further on the respective extreme regions, in the sense of geometrically extremal points of $\cZ$, see e.g.~the use of statistical depth functions in \cite{chauvigny2011fast}.  

\rev{Last but not least, a variety of adjustments could be made to the GP emulators underlying our framework. Recall that the surrogate makes a number of assumptions, such as Gaussian observation noise and spatial stationarity in the covariance structure. These
 are likely to be violated in practice; if model mis-specification is a  concern, one could entertain numerous alternatives that constitute an active area of research within the GP ecosystem---GP-GLM (e.g.~with $t$-distributed noise), treed GPs and local GPs.}

\bibliographystyle{siam} 
\bibliography{VaRKriging}

\appendix

\section{Variance Minimization Calculations}\label{sec:varianceMinimization}

\begin{lemma}\label{lemma:inverseApprox}
We have the following approximation that improves as each $r_{k+1}^n$ increases:
\begin{align}\label{eq:inverseApprox}
(\bC + \bm{\Delta}^{cand}_{k+1})^{-1} \approx & (\bC + \bm{\Delta}_k)^{-1} + (\bC + \bm{\Delta}_k)^{-1} (\bm{\Delta}_k - \bm{\Delta}^{cand}_{k+1})    (\bC + \bm{\Delta}_k)^{-1}.
\end{align}
\end{lemma}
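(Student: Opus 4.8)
The claim is a first-order Neumann (perturbation) expansion of a matrix inverse, so the plan is routine. I would introduce the diagonal perturbation $\mathbf{E}_k \doteq \bm{\Delta}_k - \bm{\Delta}^{cand}_{k+1}$, whose $n$-th entry equals $\tau^2(z^n)\bigl(1/r^n_k - 1/(r^n_k+r'^n_k)\bigr) = \tau^2(z^n)\,r'^n_k/\bigl(r^n_k(r^n_k+r'^n_k)\bigr)\ge 0$, and write the key factorization $\bC+\bm{\Delta}^{cand}_{k+1} = (\bC+\bm{\Delta}_k)(\mathbf{I}-\mathbf{F}_k)$ with $\mathbf{F}_k \doteq (\bC+\bm{\Delta}_k)^{-1}\mathbf{E}_k$. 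Here $\bC+\bm{\Delta}_k$ is invertible because $\bC$ is positive semidefinite and $\bm{\Delta}_k$ positive definite (in the SV-GP construction any $z^n\in\cZ^{SV}_k$ with $r^n_k=0$ is first given a single replicate, so the relevant $r^n_k\ge 1$), and $\|\mathbf{F}_k\|_2 \le \|(\bC+\bm{\Delta}_k)^{-1}\|_2\,\|\mathbf{E}_k\|_2$ is small once the incumbent counts $r^n_k$ are moderately large. Then I would apply the Neumann series $(\mathbf{I}-\mathbf{F}_k)^{-1} = \mathbf{I}+\mathbf{F}_k+\sum_{j\ge 2}\mathbf{F}_k^{\,j}$ (valid when $\|\mathbf{F}_k\|_2<1$), right-multiply by $(\bC+\bm{\Delta}_k)^{-1}$, and discard the $\sum_{j\ge 2}$ tail, whose operator norm is at most $\|\mathbf{F}_k\|_2^2/(1-\|\mathbf{F}_k\|_2) = \mathcal{O}(\|\mathbf{E}_k\|_2^2)$. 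Re-substituting $\mathbf{E}_k = \bm{\Delta}_k - \bm{\Delta}^{cand}_{k+1}$ produces precisely \eqref{eq:inverseApprox}, with neglected remainder quadratic in $\mathbf{E}_k$, hence vanishing as each $r^n_{k+1}=r^n_k+r'^n_k$ grows.

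One bookkeeping point: if a previously unsampled scenario $z^m$ is proposed, $\bm{\Delta}^{cand}_{k+1}$ has a row/column that $\bm{\Delta}_k$ does not. I would handle this by first enlarging $\bC$ and $\bm{\Delta}_k$ by that row/column --- putting $z^m$'s freshly assigned replicate count on the diagonal --- so that both sides of \eqref{eq:inverseApprox} live on the same $(N'+1)$-dimensional space and the expansion applies unchanged.

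The proof would then continue, as Section~\ref{sec:dad} advertises, to reduce minimization of \eqref{eq:minimizing-var2} to minimization of \eqref{eq:sv-gp}: freeze $\hat{\mathbf{w}}_{k+1}=\hat{\mathbf{w}}_k$, insert \eqref{eq:inverseApprox} into $s^2_{k+1}(\hat{R}_{k+1}) = \hat{\mathbf{w}}_{k+1}\bigl(\bC-\bC(\bC+\bm{\Delta}^{cand}_{k+1})^{-1}\bC\bigr)\hat{\mathbf{w}}_{k+1}^T$, and note that $\bC-\bC(\bC+\bm{\Delta}_k)^{-1}\bC$ does not depend on the $r'^n_k$. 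Setting $\mathbf{u}_k^T \doteq (\bC+\bm{\Delta}_k)^{-1}\bC\hat{\mathbf{w}}_k^T$ and using symmetry of $\bC$ and $(\bC+\bm{\Delta}_k)^{-1}$, the only $r'$-dependent contribution to $s^2_{k+1}$ is $-\mathbf{u}_k(\bm{\Delta}_k-\bm{\Delta}^{cand}_{k+1})\mathbf{u}_k^T$; since $\mathbf{u}_k\bm{\Delta}_k\mathbf{u}_k^T$ is also fixed, minimizing the look-ahead variance is equivalent to minimizing $\mathbf{u}_k\bm{\Delta}^{cand}_{k+1}\mathbf{u}_k^T$, which is \eqref{eq:sv-gp}.

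The main obstacle is not the algebra but making the phrase ``improves as each $r^n_{k+1}$ increases'' precise. Truncating the series is trivial; the substantive steps are certifying $\|\mathbf{F}_k\|_2<1$ (so the Neumann series is legitimate rather than merely formal) and being candid about what controls $\|\mathbf{E}_k\|_2$: it is governed by the \emph{incumbent} counts $r^n_k$, not by $r^n_{k+1}$ alone, since allocating a large batch to a scenario with $r^n_k=1$ leaves the corresponding entry of $\mathbf{E}_k$ of order $\tau^2(z^n)$. The cleanest fix is to state the remainder as $\mathcal{O}\bigl(\|\bm{\Delta}_k-\bm{\Delta}^{cand}_{k+1}\|_2^2\bigr)$ and read ``improves'' in that operator-norm sense.
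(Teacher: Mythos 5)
Your proof is correct, and it reaches \eqref{eq:inverseApprox} by a slightly different algebraic vehicle than the paper. The paper writes $\bm{\Delta}^{cand}_{k+1}= \bm{\Delta}_k + \bB_{k+1}(-\mathbf{I})\bB_{k+1}$ with $\bB_{k+1}$ diagonal, applies the Woodbury identity exactly, and then approximates the capacitance matrix $\bigl(\bB_{k+1}(\bC+\bm{\Delta}_k)^{-1}\bB_{k+1}-\mathbf{I}\bigr)^{-1}$ by $-\mathbf{I}$; you instead factor $\bC+\bm{\Delta}^{cand}_{k+1}=(\bC+\bm{\Delta}_k)(\mathbf{I}-\mathbf{F}_k)$ and truncate the Neumann series. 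These are first cousins --- both are first-order expansions in the perturbation $\bm{\Delta}_k-\bm{\Delta}^{cand}_{k+1}$ and yield the identical leading term --- but your route buys an explicit validity condition ($\|\mathbf{F}_k\|_2<1$) and a clean $\mathcal{O}\bigl(\|\bm{\Delta}_k-\bm{\Delta}^{cand}_{k+1}\|_2^2\bigr)$ remainder, whereas the paper's route keeps an exact identity up to the final step, making it transparent exactly which matrix is being replaced by $-\mathbf{I}$. Your observation that the error is really controlled by the \emph{incumbent} counts $r^n_k$ (since the $n$-th entry of the perturbation is $\tau^2(z^n)r'^n_k/(r^n_k(r^n_k+r'^n_k))\le \tau^2(z^n)/r^n_k$) is a fair sharpening of the lemma's informal ``improves as $r^n_{k+1}$ increases''; the paper's own proof implicitly relies on the same fact when it asserts that $\bm{\Delta}_k$ has small entries. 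Your continuation to \eqref{eq:sv-gp} (freezing the weights, discarding $r'$-independent terms, and reducing to minimizing $\mathbf{u}_k\bm{\Delta}^{cand}_{k+1}\mathbf{u}_k^T$) matches the paper's argument step for step.
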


\begin{proof}
Recall that $\bm{\Delta}^{cand}_{k+1}$ is diagonal with entries $\tau^2(z^n)/(r_k^n + r_k'^n)$ and $\bm{\Delta}_{k}$ is diagonal with entries $\tau^2(z^n)/r_k^n$. We re-write
\begin{equation*}
\bm{\Delta}^{cand}_{k+1}= \bm{\Delta}_k + \bB_{k+1}(-\mathbf{I})\bB_{k+1},
\end{equation*}
where  $\bB_{k+1}$ is a diagonal matrix with elements $b_{nn} \doteq \tau(z^n) \sqrt{\left(\frac{1}{r^n_k} - \frac{1}{r^n_k+r_k'^n}\right)}$ and $\mathbf{I}$ is the identity matrix. By the Woodbury matrix inversion formula \cite{golub2012matrix} we then obtain
\begin{align}
(\bC + \bm{\Delta}^{cand}_{k+1})^{-1} &= (\bC + \bm{\Delta}_k + \bB_{k+1}(-\mathbf{I})\bB_{k+1})^{-1} \notag\\
	&= (\bC + \bm{\Delta}_k)^{-1}  \notag\\
	 -(\bC + \bm{\Delta}_k)^{-1} &  \bB_{k+1}\left(\bB_{k+1}(\bC + \bm{\Delta}_k)^{-1}\bB_{k+1} - \mathbf{I}   \right)^{-1}\bB_{k+1}   (\bC + \bm{\Delta}_k)^{-1}. \label{eq:woodbury1}
\end{align}
When $r^n_{k+1}$ is large, both $\bB_{k+1}$ and $\bm{\Delta}_k$ have relatively small entries, so that 
$\bB_{k+1}(\bC + \bm{\Delta}_k)^{-1}\bB_{k+1} \approx \mathbf{0}$.  Therefore, the middle matrix inverse in \eqref{eq:woodbury1} is $ \approx (-\mathbf{I})$ and
\begin{equation}
(\bC + \bm{\Delta}^{cand}_{k+1})^{-1} \approx  (\bC + \bm{\Delta}_k)^{-1} + (\bC + \bm{\Delta}_k)^{-1} \bB_{k+1}\bB_{k+1}   (\bC + \bm{\Delta}_k)^{-1}.
\end{equation}
Plugging in $\bB_{k+1}^2  = \bm{\Delta}_k - \bm{\Delta}^{cand}_{k+1},$ we have \eqref{eq:inverseApprox}.

Returning to \eqref{eq:minimizing-var2}, the optimization of the RHS is over $r^{'n}_k$'s which appear only inside $\bm{\Delta}^{cand}_{k+1}$. Plugging in \eqref{eq:inverseApprox} and dropping all terms that do not depend on $r^{'n}$, we obtain that minimizing  \eqref{eq:minimizing-var2} is equivalent to
\emph{maximizing}
\begin{align}
\hat{\mathbf{w}}_k \bC (\bC + \bm{\Delta}^{cand}_{k+1})^{-1} \bC \hat{\mathbf{w}}_k^T &= \hat{\mathbf{w}}_k  \bC (\bC + \bm{\Delta}_k)^{-1} (\bm{\Delta}_k - \bm{\Delta}^{cand}_{k+1})    (\bC + \bm{\Delta}_k)^{-1} \bC   \hat{\mathbf{w}}_k^T. \notag
\end{align}
After dropping the term with $\bm{\Delta}_k$ (which is still independent of $r^{'n}$) this reduces to \emph{minimizing} \eqref{eq:sv-gp}.
\end{proof}

\section{Generating Pilot Scenarios}\label{App:initial}

To construct $N_{init}$ pilot scenarios for the stage-0 design $\cD_0$ we first standardize the scenario set $\mathbf{z} = (z^1, \ldots, z^N),$ via
\begin{equation}
z_{std}^n \doteq \frac{z^n-\bar{\mu}_z^n}{\sigma_z^n}, \label{eq:zstd}
\end{equation}
where $\bar{\mu}_z^n$ and $\sigma_z^n$ are the coordinate-wise sample mean and standard deviation of $\mathbf{z}$. We next consider a parameter $d_0$ that determines the desired density of pilot scenarios. To pick the latter, we randomly permute the order of $\mathbf{z}$ and sequentially add new design locations whenever they  are at least $d_0$ (Euclidean) distance away from the current sites, continuing until all $z^n$'s are tried. While the resulting design size $ \le N_{init}$ depends on the random permutation, we found that practically it is only slightly affected by the randomization in the order of $\mathbf{z}$. If a specific size $N_{init}$ of $\cD_0$ is desired, one approach is to start with relatively large $d_0$ ($d_0 = 10\sqrt{d} N^{-1}_{init}$ is a good start), and sequentially reduce its value if the algorithm cannot find enough locations satisfying the distance requirement.

\begin{algorithm}
\begin{algorithmic}
\REQUIRE {$\mathbf{z}$, $N_{init}$, $d_0$}
\STATE Randomize the order of $\mathbf{z}$;
\STATE Compute $\mathbf{z}_{std}$ according to \eqref{eq:zstd};
\STATE Initialize $I \leftarrow \{1\}$; $j \leftarrow 2$;
\WHILE{$\text{card}(I) < N_{init}$}
  \IF{$\min_{i \in I} \| z_{std}^j,z_{std}^i\|_2 \geq d_0$ }
    \STATE $I \leftarrow I \cup \{j\} \qquad $    // Add $z^j$
  \ENDIF
  \STATE $j \leftarrow j+1$;
\ENDWHILE
\STATE Return pilot set $\cD_0 = \{ z^n : n \in I\}$ and $N_{init} = \text{card}(I)$\\
\end{algorithmic}
\caption{A space-filling design for initializing the emulator}\label{alg:initialFit}
\end{algorithm}

\end{document}